\definecolor{Red}{rgb}{1,0,0}
\definecolor{Blue}{rgb}{0,0,1}
\definecolor{Olive}{rgb}{0.41,0.55,0.13}
\definecolor{Green}{rgb}{0,1,0}
\definecolor{MGreen}{rgb}{0,0.8,0}
\definecolor{DGreen}{rgb}{0,0.55,0}
\definecolor{Yellow}{rgb}{1,1,0}
\definecolor{Cyan}{rgb}{0,1,1}
\definecolor{Magenta}{rgb}{1,0,1}
\definecolor{Orange}{rgb}{1,.5,0}
\definecolor{Violet}{rgb}{.5,0,.5}
\definecolor{Purple}{rgb}{.75,0,.25}
\definecolor{Brown}{rgb}{.75,.5,.25}
\definecolor{Grey}{rgb}{.5,.5,.5}
\theoremstyle{plain}
\newtheorem{theorem}{Theorem} 
\newtheorem{corollary}{Corollary}
\newtheorem{claim}{Claim}
\newtheorem{lemma}{Lemma}
\newtheorem{bound}{Bound}
\theoremstyle{remark}
\newtheorem{remark}{Remark}
\theoremstyle{definition}
\newtheorem{definition}{Definition}
\newcommand{\Ac}{\mathcal{A}}
\newcommand{\Cc}{\mathcal{C}}
\newcommand{\Ic}{\mathcal{I}}
\newcommand{\Nc}{\mathcal{N}}
\newcommand{\Oc}{\mathcal{O}}
\newcommand{\Pc}{\mathcal{P}}
\newcommand{\Vc}{\mathcal{V}}
\newcommand{\Xv}{{\bf X}}
\newcommand{\Yv}{{\bf Y}}
\newcommand{\Zv}{{\bf Z}}
\newcommand{\xv}{{\bf x}}
\newcommand{\tv}{{\bf t}}
\newcommand{\yv}{{\bf y}}
\newcommand{\Vt}{{\tilde{V}}}
\newcommand{\Wt}{{\tilde{W}}}
\newcommand{\vt}{{\tilde{v}}}
\newcommand{\wt}{{\tilde{w}}}
\def\eps{\epsilon}
\def\la{\lambda}
\def\ce{\mathfrak{C}}
\DeclareMathOperator\E{E}
\let\P\relax
\DeclareMathOperator\P{P}
\def\textiid{i.i.d.\@\xspace}
\newcommand\iid{\ifmmode\text{ i.i.d. } \else \textiid \fi}
\newcommand{\Real}{\mathbb{R}}
\newcommand{\qmf}{\mathfrak{q}}
\def\caep{ \mathcal{T}^{(n)}}
\def\psd{\succeq} 
\def\nsd{\preceq} 
\def\cv{\mathbf{c}}
\def\lav{{\vec{\lambda}}}
\def\wc{\stackrel{w}{\Rightarrow}}
\title{The capacity region of the two-receiver vector Gaussian broadcast channel with private and common messages}
\author{Yanlin Geng, Chandra Nair}
\begin{document}

\maketitle

\begin{abstract}
We develop a new method for showing the optimality of the Gaussian distribution in multiterminal information theory problems. As an application of this method we show that Marton's inner bound achieves the capacity of the vector Gaussian broadcast channels with common message.
\end{abstract}

\section{Introduction}

Channels with additive Gaussian noise are a commonly used model for wireless communications. Hence computing the capacity regions or bounds on the capacity regions for these classes of channels are of wide interest. Usually these bounds or capacity regions are represented using auxiliary random variables and distributions on these auxiliary random variables. Evaluations of these bounds then becomes an optimization problem of computing the extremal auxiliary random variables. In several instances involving Gaussian noise channels, it turns out that the optimal auxiliaries and the inputs are Gaussian. However proving the optimality of Gaussian distributions are usually very cumbersome and involve certain non-trivial applications of the entropy-power-inequality(EPI), and the perturbation ideas behind its proof.

For the two-receiver vector Gaussian broadcast channel with private messages, the capacity region was established\cite{wss06} by showing that certain inner and outer bounds match. This argument was indirect, and hence the approach has been hard to generalize to other situations. In the following sections we develop a novel way of proving the optimality of Gaussian input distribution for additive Gaussian noise channels. There are many potential straightforward applications of this new approach which will yield new results as well as recover the earlier results in a simple manner. For the purpose of this article, we will restrict ourselves to two-receiver vector Gaussian channels. We will recover the known results for the private messages case and obtain the capacity region in the presence of a common message as well.

\subsection{Preliminaries}

Broadcast channel\cite{cov72} refers to a communication scenario where a single sender, usually denoted by $X$, wishes to communicate independent messages $(M_0, M_1, M_2)$ to two receivers $Y_1, Y_2$. The goal of the communication scheme is to enable receiver $Y_1$ to recover messages $(M_0, M_1)$  and receiver $Y_2$ to recover messages $(M_0, M_2)$; both events being required to occur with high probability. For introduction to the broadcast channel problem and a summary of known work one may refer to Chapters 5, 8, and 9 in \cite{elk12}.

A broadcast channel is characterized by a probability transition matrix $\qmf(y_1,y_2|x)$. The following broadcast channel is referred to as the vector additive Gaussian broadcast channel
\begin{align*}
\Yv_1 &= G_1 \Xv + \Zv_1 \\
\Yv_2 &= G_2 \Xv + \Zv_2.
\end{align*}
In the above $\Xv \in \Real^t$,  $G_1, G_2$ are $t \times t $ matrices, and $\Zv_1, \Zv_2$ are Gaussian vectors independent of $\Xv$.

\begin{remark}
We assume, w.l.o.g. that $Z_1, Z_2 \sim \Nc(0,I)$. 
\end{remark}

A product broadcast channel is a broadcast channel whose transition probability has the form $\qmf_1(\Yv_{11},\Yv_{21}|\Xv_1) \times \qmf_2(\Yv_{12},\Yv_{22}|\Xv_2)$.  A vector additve Gaussian product  broadcast channel can be represented as
\begin{align*}
\left[ \begin{array}{c} \Yv_{11} \\ \Yv_{12} \end{array} \right] & = \left[ \begin{array}{cc} G_{11} & 0 \\ 0 & G_{12} \end{array} \right]  \left[ \begin{array}{c} \Xv_{1} \\ \Xv_{2} \end{array} \right]  + \left[ \begin{array}{c} \Zv_{11} \\ \Zv_{12} \end{array} \right]  \\
\left[ \begin{array}{c} \Yv_{21} \\ \Yv_{22} \end{array} \right] & = \left[ \begin{array}{cc} G_{21} & 0 \\ 0 & G_{22} \end{array} \right]  \left[ \begin{array}{c} \Xv_{1} \\ \Xv_{2} \end{array} \right]  + \left[ \begin{array}{c} \Zv_{21} \\ \Zv_{22} \end{array} \right].
\end{align*}
In the above  $\Zv_{11}, \Zv_{12}, \Zv_{21}, \Zv_{22}$ are independent Gaussian vectors, also independent of $\Xv_1, \Xv_2$.
\begin{remark}
In this paper we assume that all our channel gain matrices are invertible. Since the set of all matrices are dense (with respect to say, Frobenius norm)  by continuity, our capacity results extend to non-invertible cases. 
\end{remark}

We present some simple claims regarding additive Gaussian channels which will be useful later.

\begin{claim}
\label{cl:cmg}
Consider the following vector additive Gaussian product channel with identical components
\begin{align*} 
\Yv_1 &= G \Xv_1 + \Zv_1\\
\Yv_2 &= G \Xv_2 + \Zv_2
\end{align*}
Further let $Z_1, Z_2$ be independent and distributed as $ \Nc(0,I)$.
Define 
$$\tilde{\Xv} = \frac{1}{\sqrt{2}} ( \Xv_1 + \Xv_2), \quad \Xv' = \frac{1}{\sqrt{2}} ( \Xv_1 - \Xv_2), \quad \tilde{\Yv} = \frac{1}{\sqrt{2}} ( \Yv_1 + \Yv_2), \quad \Yv' = \frac{1}{\sqrt{2}} ( \Yv_1 - \Yv_2).$$

Then $I(\Xv_1,\Xv_2;\Yv_1,\Yv_2) = I(\tilde{\Xv},\Xv';\tilde{\Yv},\Yv')$.
\end{claim}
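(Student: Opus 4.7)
The plan is to observe that the transformation taking $(\Xv_1,\Xv_2)$ to $(\tilde{\Xv},\Xv')$ (and likewise $(\Yv_1,\Yv_2)$ to $(\tilde{\Yv},\Yv')$) is a single fixed orthogonal linear bijection of $\Real^{2t}$ to itself, namely the block map
\[
T(\uv,\vv) = \tfrac{1}{\sqrt{2}}\bigl(\uv+\vv,\ \uv-\vv\bigr),
\]
whose matrix is $\tfrac{1}{\sqrt{2}}\left[\begin{smallmatrix} I & I \\ I & -I \end{smallmatrix}\right]$, with $T T^{\top}=I$. Since mutual information is invariant under applying a measurable bijection to each side, the identity $I(\Xv_1,\Xv_2;\Yv_1,\Yv_2)=I(T(\Xv_1,\Xv_2);T(\Yv_1,\Yv_2))=I(\tilde{\Xv},\Xv';\tilde{\Yv},\Yv')$ follows immediately.

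If a more hands-on argument is preferred, I would instead expand both sides via differential entropies. Write
\[
I(\Xv_1,\Xv_2;\Yv_1,\Yv_2)=h(\Yv_1,\Yv_2)-h(\Zv_1,\Zv_2),
\]
using that $(\Yv_1,\Yv_2)=(G\Xv_1+\Zv_1, G\Xv_2+\Zv_2)$ and that $(\Zv_1,\Zv_2)$ is independent of $(\Xv_1,\Xv_2)$. For the right-hand side, note that the channel model implies
\[
\tilde{\Yv}=G\tilde{\Xv}+\tilde{\Zv},\qquad \Yv'=G\Xv'+\Zv',
\]
where $\tilde{\Zv}=\tfrac{1}{\sqrt{2}}(\Zv_1+\Zv_2)$ and $\Zv'=\tfrac{1}{\sqrt{2}}(\Zv_1-\Zv_2)$. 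Because $(\Zv_1,\Zv_2)\sim\Nc(0,I)$ are independent, the orthogonal map $T$ applied to them yields $(\tilde{\Zv},\Zv')\sim\Nc(0,I)$ independent as well, so $(\tilde{\Zv},\Zv')$ is independent of $(\tilde{\Xv},\Xv')$ (the latter being a function of $(\Xv_1,\Xv_2)$ only). Hence
\[
I(\tilde{\Xv},\Xv';\tilde{\Yv},\Yv')=h(\tilde{\Yv},\Yv')-h(\tilde{\Zv},\Zv').
\]

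Finally, the change-of-variables formula for differential entropy under an orthogonal transformation (Jacobian determinant $\pm1$) gives $h(\tilde{\Yv},\Yv')=h(\Yv_1,\Yv_2)$ and $h(\tilde{\Zv},\Zv')=h(\Zv_1,\Zv_2)$, which combines with the two displays to yield the claim. There is no real obstacle: the only thing to verify carefully is that $T$ is orthogonal (so that entropies are unchanged and the noise in the rotated channel is again standard independent Gaussian), after which the claim is a one-line consequence of either invariance of mutual information under bijections or of differential entropy under rotations.
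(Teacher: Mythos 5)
Your proposal is correct and your ``hands-on'' second argument is essentially identical to the paper's own proof, which establishes $h(\tilde{\Yv},\Yv') = h(\Yv_1,\Yv_2)$ and $h(\tilde{\Yv},\Yv'|\tilde{\Xv},\Xv') = h(\tilde{\Zv},\Zv') = h(\Zv_1,\Zv_2) = h(\Yv_1,\Yv_2|\Xv_1,\Xv_2)$ via the orthogonality of the map $T$. Your opening one-liner via invariance of mutual information under measurable bijections applied to each argument is a valid, slightly slicker packaging of the same observation.
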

\begin{proof}
The proof is a trivial consequence of the fact that
$h(\tilde{\Yv},\Yv') = h(\Yv_1,\Yv_2)$ and $h(\tilde{\Yv},\Yv'|\tilde{\Xv},\Xv') = h(\tilde{\Zv}, \Zv') = h(\Zv_1, \Zv_2)=h(\Yv_1,\Yv_2|\Xv_1, \Xv_2)$ where
$\tilde{\Zv} = \frac{1}{\sqrt{2}} ( \Zv_1 + \Zv_2), \Zv' = \frac{1}{\sqrt{2}} ( \Zv_1 - \Zv_2).$
\end{proof}
\begin{remark}
An interesting consequence of Gaussian noise is that $\tilde{\Zv}$ and $\Zv\rq{}$ are again independent and distributed according to $\Nc(0,I)$. Hence $\tilde{\Yv}, \Yv\rq{}$ can be regarded as the outputs of the Gaussian channel when the inputs are distributed according to $\tilde{\Xv}, \Xv\rq{}$. This observation is peculiar to additive Gaussian channels.
\end{remark}

\begin{claim}
\label{cl:cm1}
In vector additive Gaussian product broadcast channels with invertible channel gain matrices, the random variables $\Yv_{11}$ and $\Yv_{22}$ are independent {\em if and only if} $\Xv_1$ and $\Xv_2$ are independent.
\end{claim}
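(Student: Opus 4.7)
The forward direction ($\Xv_1 \perp \Xv_2$ implies $\Yv_{11} \perp \Yv_{22}$) is essentially definitional: since the noises $\Zv_{11}, \Zv_{22}$ are mutually independent and independent of $(\Xv_1, \Xv_2)$, independence of $\Xv_1$ and $\Xv_2$ implies that the pair $(\Xv_1, \Zv_{11})$ is independent of the pair $(\Xv_2, \Zv_{22})$, and $\Yv_{11}, \Yv_{22}$ are deterministic functions of these respective pairs. The substance is in the converse.

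For the converse, my plan is to work with characteristic functions, exploiting the factorization of the Gaussian characteristic function and the invertibility of the channel matrices. Using $\Zv_{11}, \Zv_{22} \sim \Nc(0,I)$ independent of each other and of $(\Xv_1, \Xv_2)$, I would compute
\begin{align*}
\phi_{\Yv_{11}, \Yv_{22}}(\sv_1, \sv_2) &= \phi_{\Xv_1, \Xv_2}\bigl(G_{11}^T \sv_1,\, G_{22}^T \sv_2\bigr)\, e^{-(\|\sv_1\|^2 + \|\sv_2\|^2)/2}, \\
\phi_{\Yv_{11}}(\sv_1)\,\phi_{\Yv_{22}}(\sv_2) &= \phi_{\Xv_1}\bigl(G_{11}^T \sv_1\bigr)\,\phi_{\Xv_2}\bigl(G_{22}^T \sv_2\bigr)\, e^{-(\|\sv_1\|^2 + \|\sv_2\|^2)/2}.
\end{align*}

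Independence of $\Yv_{11}$ and $\Yv_{22}$ is equivalent to equality of the two expressions above for all $(\sv_1, \sv_2)$. The (nonvanishing) Gaussian factors cancel, leaving
$$
\phi_{\Xv_1, \Xv_2}\bigl(G_{11}^T \sv_1,\, G_{22}^T \sv_2\bigr) = \phi_{\Xv_1}\bigl(G_{11}^T \sv_1\bigr)\,\phi_{\Xv_2}\bigl(G_{22}^T \sv_2\bigr).
$$
Since $G_{11}, G_{22}$ are invertible, so are $G_{11}^T, G_{22}^T$, and hence as $(\sv_1,\sv_2)$ ranges over $\Real^t \times \Real^t$ the pair $(G_{11}^T \sv_1, G_{22}^T \sv_2)$ also ranges over all of $\Real^t \times \Real^t$. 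Therefore $\phi_{\Xv_1, \Xv_2}(\tv_1, \tv_2) = \phi_{\Xv_1}(\tv_1)\,\phi_{\Xv_2}(\tv_2)$ for every $(\tv_1, \tv_2)$, which by the uniqueness theorem for characteristic functions gives $\Xv_1 \perp \Xv_2$.

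There is no real technical obstacle here; the only subtle point is that the invertibility hypothesis is used precisely to conclude that $\phi_{\Xv_1, \Xv_2}$ factorizes on all of $\Real^t \times \Real^t$, not just on the image of $G_{11}^T \times G_{22}^T$. The Gaussianity of the noise plays no role beyond guaranteeing that its characteristic function is everywhere nonzero so that we may divide it out; the same argument works for any additive noise whose characteristic function never vanishes.
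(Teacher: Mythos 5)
Your proof is correct and follows essentially the same route as the paper: both arguments pass to characteristic functions, cancel the everywhere-nonvanishing Gaussian noise factor, and use invertibility of $G_{11}, G_{22}$ to upgrade the resulting factorization to independence of $\Xv_1$ and $\Xv_2$. Your closing observation that only nonvanishing of the noise characteristic function is needed is a nice (correct) aside, but the substance matches the paper's proof.
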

\begin{proof}
Here we prove the non-trivial direction. Suppose $\Yv_{11}$ and $\Yv_{22}$ are independent. We know that $\Yv_{11} = G_{11} \Xv_1 + \Zv_{11} $ and $\Yv_{22} = G_{22}\Xv_2 + \Zv_{22}$ where $\Zv_{11}, \Zv_{22}$ are mutually independent and independent of the pair $\Xv_1,\Xv_2$. Taking characteristic functions we see that
$$ \E\left(e^{i(\tv_1\cdot \Yv_{11} + \tv_2\cdot \Yv_{22})}\right) = \E\left(e^{i\tv_1\cdot \Yv_{11}}\right)  \E\left(e^{i\tv_2\cdot \Yv_{22}}\right) = \E\left(e^{i\tv_1 \cdot \Zv_{11}}\right) \E\left(e^{i\tv_1 \cdot G_{11}\Xv_1}\right)  \E\left(e^{i\tv_2 \cdot G_{22}\Xv_2}\right) \E\left(e^{i\tv_2 \cdot \Zv_{22}}\right). $$
On the other hand  
$$ \E\left(e^{i(\tv_1\cdot \Yv_{11} + \tv_2 \cdot \Yv_{22})}\right) = \E\left(e^{i\tv_1 \cdot \Zv_{11}}\right) \E\left(e^{i(\tv_1 \cdot G_{11}\Xv_1+\tv_2\cdot G_{22} \Xv_2)}\right)  \E\left(e^{i\tv_2 \cdot \Zv_{22}}\right).
$$ 
Since $  \E\left(e^{i\tv_1 \cdot \Zv_{11}}\right) ,  \E\left(e^{i\tv_2 \cdot \Zv_{22}}\right) > 0 ~ \forall \tv_1, \tv_2$ we have that
$$  \E\left(e^{i(\tv_1 \cdot G_{11}\Xv_1+\tv_2 \cdot G_{22} \Xv_2)}\right) = \E\left(e^{i\tv_1 \cdot G_{11}\Xv_1}\right)  \E\left(e^{i\tv_2 \cdot G_{22}\Xv_2}\right),  ~ \forall \tv_1, \tv_2.$$
Hence $G_{11} \Xv_1$ and $G_{22} \Xv_2$ are independent;  since $G_{11}$ and $G_{22}$ are invertible,  $\Xv_1$ and $\Xv_2$ are independent.
\end{proof}

\section{Optimality of Gaussian via factorization of concave envelopes}

We devise a new technique to show that Gaussian distribution achieves the maximum value of an optimization problem, subject to a covariance constraint. Though some of the results have been known earlier\cite{liv07}, the technique presented here allows us to obtain much broader results.

The main idea behind the approach is to show that if a certain  $X$ (say zero mean) achieves the maximum value of an optimization problem, then so does $\frac{1}{\sqrt{2}} (X_1 + X_2)$ and $\frac{1}{\sqrt{2}} (X_1 - X_2)$; where $X_1, X_2$ are two i.i.d. copies of $X$. Further we will show that $\frac{1}{\sqrt{2}} (X_1 + X_2)$ and $\frac{1}{\sqrt{2}} (X_1 - X_2)$ have to be independent as well, which forces the initial distribution to be Gaussian, see Theorem \ref{th:gho} and Corollary \ref{co:tworv} in Appendix \ref{sse:gau}. Alternately, one can repeat averaging procedure inductively and use central limit theorem to conclude that Gaussian distribution achieves the maximum. To show the first step we go to the two-letter version\footnote{A two letter version of a channel $\qmf(y|x)$ is a product channel consisting of identical components $\qmf(y_1|x_1) \times \qmf(y_2|x_2)$.} of the channel, use a {\em factorization property} of the function involved and then Claim \ref{cl:cmg} to move from the pair $X_1,X_2$ to  $\frac{1}{\sqrt{2}} (X_1 + X_2)$.

\begin{remark}
It is worth noting the remarkable similarity of the structure of the arguments that follow for the three optimization problems below for which we show the optimality of Gaussian. In particular the first example, though trivial, contains most of the key intuitive elements. 
\end{remark}

\subsection{Example 1: Mutual information}
\label{sse:fun-I}

Let $\Yv = G \Xv + \Zv$ represent a point-to-point channel, where $\Zv \sim \Nc(0,I)$ and $G$ is invertible. Given $K \psd 0$, consider the following optimization problem:
$$ \textrm{V}(K) = \max_{\Xv: \E(\Xv \Xv^T) \nsd K} I(\Xv;\Yv). $$

\begin{remark}
By writing $\max$ instead of $\sup$ we are indeed claiming the existence of a maximizing distribution. This is a non-trivial technical issue that we will deal with (in the Appendix) for the newer functions that we consider. The same arguments used for establishing Claim \ref{cl:max-S} can be used (essentially verbatim) to imply the existence of a maximizing distribution here. Furthermore for the above optimization problem it is well-known that $\Xv \sim \Nc(0,K)$ achieves $\textrm V(K)$ and the aim here is to give a simple illustration of our approach.
\end{remark}

Consider a product channel consisting of two identical components of the point-to-point channel described above: $\qmf(\Yv_{1}|\Xv_1)\times \qmf(\Yv_{2}|\Xv_2)$. We call the below claim as the {\em factorization property} of mutual information.

\begin{claim}
\label{cl:sep-I}
The following inequality holds for the product channel
$$ I(\Xv_1,\Xv_2;\Yv_1,\Yv_2) \leq  I(\Xv_1;\Yv_1) + I(\Xv_2;\Yv_2). $$
Further if equality is achieved at some $p(\xv_1,\xv_2)$ then $\Xv_1, \Xv_2$ must be independent.
\end{claim}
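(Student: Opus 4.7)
The plan is to expand the joint mutual information using the definition, exploit the product structure of the channel to split the conditional entropy term exactly, and then apply subadditivity of differential entropy to the unconditional term. The independence conclusion in the equality case will come directly from Claim \ref{cl:cm1}.

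First, I would write
\begin{align*}
I(\Xv_1,\Xv_2;\Yv_1,\Yv_2) = h(\Yv_1,\Yv_2) - h(\Yv_1,\Yv_2 \mid \Xv_1,\Xv_2).
\end{align*}
Because the channel factorizes as $\qmf(\Yv_1|\Xv_1)\times\qmf(\Yv_2|\Xv_2)$, the pair $(\Yv_1,\Zv_1\text{ side})$ is conditionally independent of $(\Yv_2,\Zv_2\text{ side})$ given $(\Xv_1,\Xv_2)$, and moreover $\Yv_i$ depends only on $\Xv_i$ conditionally. Thus $h(\Yv_1,\Yv_2 \mid \Xv_1,\Xv_2) = h(\Yv_1|\Xv_1) + h(\Yv_2|\Xv_2)$ with equality, not inequality. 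For the unconditional term I use the standard subadditivity bound $h(\Yv_1,\Yv_2) \leq h(\Yv_1) + h(\Yv_2)$. Subtracting yields
\begin{align*}
I(\Xv_1,\Xv_2;\Yv_1,\Yv_2) \leq \bigl(h(\Yv_1)-h(\Yv_1|\Xv_1)\bigr) + \bigl(h(\Yv_2)-h(\Yv_2|\Xv_2)\bigr) = I(\Xv_1;\Yv_1)+I(\Xv_2;\Yv_2),
\end{align*}
which is the factorization inequality.

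For the equality case, the only inequality used is subadditivity of $h(\Yv_1,\Yv_2)$, so equality forces $\Yv_1$ and $\Yv_2$ to be independent. The product channel here has invertible gains (in fact it is the two-letter version of a single point-to-point Gaussian channel, so each component has the same invertible $G$), and therefore Claim \ref{cl:cm1} applies verbatim and yields independence of $\Xv_1$ and $\Xv_2$.

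I do not expect any serious obstacle: the inequality is essentially a one-line consequence of the chain rule plus subadditivity, and the equality case is handed to us by Claim \ref{cl:cm1}. The only thing to be mildly careful about is ensuring all the differential entropies are finite so that the subtractions are well-defined; this is automatic whenever the mutual information quantities in the statement are themselves finite, and in the infinite case the claimed inequality is trivial.
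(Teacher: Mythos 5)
Your proof is correct and is essentially the paper's argument: the identity $I(\Xv_1,\Xv_2;\Yv_1,\Yv_2)=I(\Xv_1;\Yv_1)+I(\Xv_2;\Yv_2)-I(\Yv_1;\Yv_2)$ that the paper invokes is precisely your exact factorization of the conditional entropy combined with the subadditivity gap $h(\Yv_1)+h(\Yv_2)-h(\Yv_1,\Yv_2)=I(\Yv_1;\Yv_2)$. The equality case is handled identically in both, via independence of $\Yv_1,\Yv_2$ and Claim \ref{cl:cm1}.
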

\begin{proof}
The proof is essentially a consequence of the following equality for product channels
$$ I(\Xv_1,\Xv_2;\Yv_1,\Yv_2) =  I(\Xv_1;\Yv_1) + I(\Xv_2;\Yv_2) - I(\Yv_1; \Yv_2). $$
Further, if equality holds then $\Yv_1, \Yv_2$ must be independent, which from Claim \ref{cl:cm1} implies that $\Xv_1$ and $\Xv_2$ are independent.
\end{proof}

Let $p_*(\xv)$ be a zero mean distribution that achieves $\textrm{V}(K)$.
\begin{claim}
\label{cl:twolet-I}
Let $(\Xv_1,\Xv_2) \sim p_*(\xv_1)p_*(\xv_2)$ be two i.i.d. copies of $p_*(\xv)$.  Then the following distributions $\tilde{\Xv} = \frac{1}{\sqrt{2}} \left( \Xv_{1} + \Xv_{2} \right)$, $\Xv'= \frac{1}{\sqrt{2}} \left( \Xv_{1} - \Xv_{2} \right)$  also achieve ${\rm V}(K)$. Further the random variables $\tilde{\Xv}, \Xv'$ are independent.
\end{claim}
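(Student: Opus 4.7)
\medskip

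\noindent\textbf{Proof plan.} The plan is to lift the optimizer $p_*$ to the two-letter channel, exploit the rotation invariance of mutual information (Claim \ref{cl:cmg}), and then combine with the factorization property (Claim \ref{cl:sep-I}) to force both the individual optimality of $\tilde\Xv, \Xv'$ and their independence. Since the argument is parallel to the template indicated in the introductory remark (``go to the two-letter version, use a factorization property, then use Claim \ref{cl:cmg}''), no technical obstacle is expected; the main verifications are bookkeeping on covariances and on the equality cases.

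\medskip

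\noindent First I would consider the product channel with two identical components and take $(\Xv_1,\Xv_2)\sim p_*(\xv_1)p_*(\xv_2)$. By independence and additivity of mutual information over product channels with independent inputs, we have $I(\Xv_1,\Xv_2;\Yv_1,\Yv_2) = I(\Xv_1;\Yv_1)+I(\Xv_2;\Yv_2)=2\textrm{V}(K)$. On the other hand, by Claim \ref{cl:cmg} applied to the pair $(\Xv_1,\Xv_2)$ and the corresponding rotated outputs $(\tilde\Yv,\Yv')$, we have $I(\Xv_1,\Xv_2;\Yv_1,\Yv_2)=I(\tilde\Xv,\Xv';\tilde\Yv,\Yv')$. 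Hence the rotated input achieves $2\textrm{V}(K)$ on the same product channel.

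\medskip

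\noindent Next I would check the covariance constraints. Since $\Xv_1,\Xv_2$ are i.i.d.\ with zero mean and $\E(\Xv_i\Xv_i^T)\nsd K$,
$$ \E(\tilde\Xv\tilde\Xv^T) = \tfrac{1}{2}\bigl(\E(\Xv_1\Xv_1^T) + \E(\Xv_2\Xv_2^T)\bigr) \nsd K, \qquad \E(\Xv'\Xv'^T) \nsd K, $$
so each of $\tilde\Xv,\Xv'$ is admissible for the single-letter problem, giving $I(\tilde\Xv;\tilde\Yv)\leq \textrm{V}(K)$ and $I(\Xv';\Yv')\leq \textrm{V}(K)$. Now apply the factorization property (Claim \ref{cl:sep-I}) to the pair $(\tilde\Xv,\Xv')$:
$$ 2\textrm{V}(K) = I(\tilde\Xv,\Xv';\tilde\Yv,\Yv') \leq I(\tilde\Xv;\tilde\Yv) + I(\Xv';\Yv') \leq 2\textrm{V}(K). $$
Equality must therefore hold throughout, which forces $I(\tilde\Xv;\tilde\Yv)=I(\Xv';\Yv')=\textrm{V}(K)$; that is, both $\tilde\Xv$ and $\Xv'$ achieve $\textrm{V}(K)$.

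\medskip

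\noindent Finally, the equality case of Claim \ref{cl:sep-I} directly asserts that if the product-channel mutual information factorizes exactly then the inputs must be independent. Applied to the pair $(\tilde\Xv,\Xv')$, this gives independence of $\tilde\Xv$ and $\Xv'$, completing the proof. The only step that is even mildly subtle is the equality case in Claim \ref{cl:sep-I}, but that has already been established via Claim \ref{cl:cm1}, so there is no real obstacle here.
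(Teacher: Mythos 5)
Your proof is correct and follows essentially the same route as the paper: the identical chain $2\mathrm{V}(K) = I(\Xv_1,\Xv_2;\Yv_1,\Yv_2) = I(\tilde\Xv,\Xv';\tilde\Yv,\Yv') \leq I(\tilde\Xv;\tilde\Yv)+I(\Xv';\Yv') \leq 2\mathrm{V}(K)$, with the same covariance bookkeeping and the same appeal to the equality case of Claim \ref{cl:sep-I} for independence. No gaps.
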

\begin{proof}
Let $\tilde{\Yv} = \frac{1}{\sqrt{2}} \left( \Yv_{1} + \Yv_{2} \right)$, $\Yv'= \frac{1}{\sqrt{2}} \left( \Yv_{1} - \Yv_{2} \right)$.
The claim is a consequence of Claim \ref{cl:sep-I} and the following observations:
\begin{align*}
2 \textrm{V}(K) &=  I(\Xv_1;\Yv_1) + I(\Xv_2;\Yv_2)  \\
& = I(\Xv_1,\Xv_2;\Yv_1,\Yv_2) \\
& \stackrel{(a)}{=}  I(\tilde{\Xv},\Xv';\tilde{\Yv},\Yv') \\
& \stackrel{(b)}{\leq} I(\tilde{\Xv};\tilde{\Yv}) + I(\Xv';\Yv') \\
& \leq \textrm{V}(K) + \textrm{V}(K) =  2 \textrm{V}(K).
\end{align*}
Here the first equality comes because $p_*(\xv)$  achieves $\textrm{V}(K)$, the second one because $\Xv_1$ and $\Xv_2$ are independent. Equality $(a)$ is a consequence of Claim \ref{cl:cmg}, Inequality $(b)$ is a consequence of Claim \ref{cl:sep-I}, and the last inequality follows from the following:
$$ \E(\tilde{\Xv}\tilde{\Xv}^T) = \E(\Xv' \Xv'^T) = \frac 12 \left(\E(\Xv_1 \Xv_1^T) + \E(\Xv_2 \Xv_2^T)\right) \nsd K,$$
and the definition of \textrm{V}(K). Since the extremes match, all inequalities must be equalities. Hence $(b)$ must be an equality, which implies from Claim \ref{cl:sep-I} that $\tilde{\Xv}, \Xv'$ are independent. Similarly we require $I(\tilde{\Xv};\tilde{\Yv}) =  I(\Xv';\Yv') = \textrm{V}(K)$ as desired.
\end{proof}
Hence we have shown that $\Xv \sim p^*(\xv)$ that achieves a maximum has the following property: If $(\Xv_1,\Xv_2)$ are i.i.d. copies each distributed according to $p_*(\xv)$, then $\Xv_1 + \Xv_2$ and $\Xv_1 - \Xv_2$ are also independent. Thus from  Theorem \ref{th:gho} and Corollary \ref{co:tworv}(Appendix \ref{sse:gau}) we have that $\Xv \sim \Nc(0,K\rq{})$ for some $K\rq{} \nsd K$. 
Alternately, one could also use the following approach: For any $\Xv \sim p^*(\xv)$ (assume zero mean) that achieve the maximum, we know that $\frac{1}{\sqrt{2}} \left( \Xv_{1} + \Xv_{2} \right)$ also achieves the maximum. Hence proceeding by induction, we can use Central limit theorem to deduce that the Gaussian distribution also achieves the maximum. This alternate approach is elaborated for the next example in  Appendix \ref{se:altclt}. 
\begin{remark}
For this example, we can use the monotonicity of the $\log |\cdot|$ function to deduce that $K\rq{} = K$. In the examples that follow below we do not have any such monotonicity. Hence, we will only establish that the optimizing distribution is a Gaussian, which is sufficient for our purposes. 
\end{remark}

\subsection{Example 2: Difference of mutual informations}
\label{sse:fun-S}
Consider a vector additive Gaussian broadcast channel. For $\la > 1$ let the following function of $p(x)$ be defined by
$$ \mathsf{s}_\la(\Xv) := I(\Xv;\Yv_1) - \la I(\Xv;\Yv_2). $$
Let $ \mathsf{s}_\la(\Xv|V) :=   I(\Xv;\Yv_1|V) - \la I(\Xv;\Yv_2|V). $

Further define
$$ S_\la(\Xv) := \ce(\mathsf{s}_\la(\Xv)) $$
denote the {\em upper concave envelope}\footnote{The upper  concave envelope of a function $f(x)$ is the smallest concave function $g(x)$ such that $g(x) \geq f(x), \forall x$.} of $\mathsf{s}_\la(\Xv)$. It is a straightforward exercise to see that
$$ \ce(\mathsf{s}_\la(\Xv)) = \sup_{\substack{p(v|\xv): \\V \to \Xv \to (\Yv_1,\Yv_2)}} I(\Xv;\Yv_1|V) - \la I(\Xv;\Yv_2|V) = \sup_{p(v|\xv)}  \mathsf{s}_\la(\Xv|V). $$
We also define $S_\la(\Xv|V) := \sum_{v} p(v)  S_\la(\Xv|V=v)$ for finite $V$ and its natural extension for arbitrary $V$.

\begin{remark}
We will try to keep the language simple in the main body of this paper. In the Appendix we will  deal with the various technical issues with due diligence.
\end{remark}

For a product broadcast channel $\qmf_1(\yv_{11},\yv_{21}|\xv_1) \times \qmf_2(\yv_{12},\yv_{22}|\xv_2)$ let $S_\la(\Xv_1,\Xv_2)$ denote the corresponding upper concave envelope.
The following claim is referred to as the \lq\lq{}factorization of $S_\la(\Xv_1,\Xv_2)$\rq\rq{}.

\begin{claim}
\label{cl:sep-S}
The following inequality holds for product broadcast channels
$$ S_\la(\Xv_1,\Xv_2) \leq S_\la(\Xv_1| \Yv_{22}) + S_\la(\Xv_2| \Yv_{11}) \leq S_\la(\Xv_1) + S_\la(\Xv_2). $$
For additive Gaussian noise broadcast channels if $p(v|\xv_1,\xv_2)$  realizes $ S_\la(\Xv_1,\Xv_2)$, i.e.  $ S_\la(\Xv_1,\Xv_2) = \mathsf{s}_\la(\Xv_1,\Xv_2|V)$,  and equality is achieved above i.e.  $S_\la(\Xv_1,\Xv_2)=S_\la(\Xv_1) + S_\la(\Xv_2)$, then all of the following must be true
\begin{enumerate}
\item $\Xv_1$ and $\Xv_2$ are conditionally independent of $V$
\item $V,\Xv_1$ achieves $S_\la(\Xv_1)$
\item $V,\Xv_2$ achieves $S_\la(\Xv_2).$
\end{enumerate}
\end{claim}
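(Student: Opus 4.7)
The plan is to mirror the structure of Claim~\ref{cl:sep-I}, but with an \emph{exact identity} replacing the equality that was used for mutual information. Specifically, for any auxiliary $V$ satisfying $V \to (\Xv_1,\Xv_2) \to (\Yv_{11},\Yv_{12},\Yv_{21},\Yv_{22})$, I would aim to prove the pointwise identity
\[
\mathsf{s}_\la(\Xv_1,\Xv_2|V) \;=\; \mathsf{s}_\la(\Xv_1|V,\Yv_{22}) + \mathsf{s}_\la(\Xv_2|V,\Yv_{11}) + (1-\la)\,I(\Yv_{11};\Yv_{22}|V).
\]
Since $\la>1$ and mutual information is nonnegative, the residual is $\leq 0$, which at once gives $\mathsf{s}_\la(\Xv_1,\Xv_2|V) \leq \mathsf{s}_\la(\Xv_1|V,\Yv_{22}) + \mathsf{s}_\la(\Xv_2|V,\Yv_{11})$.

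To derive the identity I would expand the two conditional mutual-information terms inside $\mathsf{s}_\la(\Xv_1,\Xv_2|V)$ with \emph{opposite} chain-rule orderings, exploiting the product structure. The receiver-$1$ term expands as $I(\Xv_1,\Xv_2;\Yv_{11},\Yv_{12}|V) = I(\Xv_1;\Yv_{11}|V) + I(\Xv_2;\Yv_{12}|V,\Yv_{11})$, since $\Yv_{11}$ is conditionally independent of $\Xv_2$ given $\Xv_1$ and $\Yv_{12}$ is conditionally independent of $\Xv_1$ given $\Xv_2$. Expanding the receiver-$2$ term the other way yields $I(\Xv_1,\Xv_2;\Yv_{21},\Yv_{22}|V) = I(\Xv_2;\Yv_{22}|V) + I(\Xv_1;\Yv_{21}|V,\Yv_{22})$. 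Subtracting the candidate sum $\mathsf{s}_\la(\Xv_1|V,\Yv_{22}) + \mathsf{s}_\la(\Xv_2|V,\Yv_{11})$ cancels the mixed terms $I(\Xv_2;\Yv_{12}|V,\Yv_{11})$ and $I(\Xv_1;\Yv_{21}|V,\Yv_{22})$, leaving
\[
[I(\Xv_1;\Yv_{11}|V) - I(\Xv_1;\Yv_{11}|V,\Yv_{22})] - \la [I(\Xv_2;\Yv_{22}|V) - I(\Xv_2;\Yv_{22}|V,\Yv_{11})].
\]
The interaction-information identity $I(A;B|W) - I(A;B|W,C) = I(B;C|W) - I(B;C|W,A)$, together with the product-channel Markov relations $\Yv_{11}\perp\Yv_{22}\mid V,\Xv_1$ and $\Yv_{11}\perp\Yv_{22}\mid V,\Xv_2$ (each $\Yv_{ij}$ given its input $\Xv_j$ is just independent noise), collapses each bracket to $I(\Yv_{11};\Yv_{22}|V)$, completing the identity.

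To pass to the two inequalities in the claim, I would dominate $\mathsf{s}_\la(\Xv_i|V,\cdot) \leq S_\la(\Xv_i|V,\cdot)$ by the definition of the upper concave envelope, and then apply Jensen's inequality twice using that $S_\la$ is concave in the input distribution to obtain $S_\la(\Xv_1|V,\Yv_{22}) \leq S_\la(\Xv_1|\Yv_{22}) \leq S_\la(\Xv_1)$ (and symmetrically for $\Xv_2$). The right-hand sides depend only on the fixed marginal $p(\xv_1,\xv_2)$, so taking the supremum over $V$ of the inequality on $\mathsf{s}_\la(\Xv_1,\Xv_2|V)$ recovers $S_\la(\Xv_1,\Xv_2) \leq S_\la(\Xv_1|\Yv_{22}) + S_\la(\Xv_2|\Yv_{11}) \leq S_\la(\Xv_1) + S_\la(\Xv_2)$.

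For the Gaussian equality conditions, I would trace the saturated chain backwards. Saturation of the pointwise identity forces $(1-\la)\,I(\Yv_{11};\Yv_{22}|V)=0$, and since $\la>1$ this gives $\Yv_{11}\perp\Yv_{22}\mid V$ slice-by-slice. Applying Claim~\ref{cl:cm1} inside each $V=v$ slice (here the invertibility hypothesis on the gain matrices is essential) yields $\Xv_1\perp\Xv_2\mid V$, which is condition (1). Under this conditional independence, $p(\xv_1|v,\yv_{22})=p(\xv_1|v)$, so $\mathsf{s}_\la(\Xv_1|V,\Yv_{22})=\mathsf{s}_\la(\Xv_1|V)$ and likewise $S_\la(\Xv_1|V,\Yv_{22})=S_\la(\Xv_1|V)$; saturation of the remaining Jensen steps $\mathsf{s}_\la(\Xv_1|V)\leq S_\la(\Xv_1|V)\leq S_\la(\Xv_1)$ then gives $\mathsf{s}_\la(\Xv_1|V)=S_\la(\Xv_1)$, which is condition (2), with (3) following by symmetry. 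The main obstacle I anticipate is spotting the correct augmented auxiliaries $(V,\Yv_{22})$ and $(V,\Yv_{11})$ together with the matching opposite-ordered chain-rule expansions, because it is exactly this choice that makes all the interaction-information slack collapse into the single clean term $I(\Yv_{11};\Yv_{22}|V)$; after that the remainder is routine algebra.
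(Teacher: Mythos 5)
Your proposal is correct and takes essentially the same route as the paper's proof: the identical opposite-order chain-rule expansion, the same slack term $(\la-1)I(\Yv_{11};\Yv_{22}|V)$ (which you merely package as an exact identity rather than as a chain of inequalities), and the same slice-by-slice appeal to Claim~\ref{cl:cm1} to extract the equality conditions. No gaps.
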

\begin{proof}
For any $p(v|\xv_1,\xv_2)$ observe the following
{\small \begin{align*}
& I(\Xv_1, \Xv_2; \Yv_{11}, \Yv_{12}|V) - \la I(\Xv_1, \Xv_2; \Yv_{21}, \Yv_{22}|V) \\
& \quad = I(\Xv_1; \Yv_{11}|V) + I(\Xv_2; \Yv_{12}|V, \Yv_{11}) - \la I(\Xv_2; \Yv_{22}|V) - \la I(\Xv_1; \Yv_{21}|V, \Yv_{22}) \\
& \quad =  I(\Xv_1; \Yv_{11}|V, \Yv_{22}) + I(\Xv_2; \Yv_{12}|V, \Yv_{11}) - \la I(\Xv_2; \Yv_{22}|V, \Yv_{11}) - \la I(\Xv_1; \Yv_{21}|V, \Yv_{22}) - (\la - 1) I(\Yv_{11};\Yv_{22}|V) \\
& \quad \leq S_\la(\Xv_1|\Yv_{22}) + S_\la(\Xv_2|\Yv_{11}) - (\la - 1) I(\Yv_{11};\Yv_{22}|V)  \\
& \quad \leq S_\la(\Xv_1) + S_\la(\Xv_2) - (\la - 1) I(\Yv_{11};\Yv_{22}|V) \\
& \quad \leq  S_\la(\Xv_1) + S_\la(\Xv_2).
\end{align*}}\noindent 
Since equality holds all inequalities are tight. Hence $\Yv_1$ and $\Yv_2$ are conditionally independent of $V$ implying  that $\Xv_1$ and $\Xv_2$ are conditionally independent of $V$ (Claim \ref{cl:cmg}). Hence
{\small $$ I(\Xv_1; \Yv_{11}|V, \Yv_{22}) - \la I(\Xv_1; \Yv_{21}|V, \Yv_{22}) =  I(\Xv_1; \Yv_{11}|V) - \la I(\Xv_1; \Yv_{21}|V) =  S_\la(\Xv_1), $$
$$  I(\Xv_2; \Yv_{12}|V, \Yv_{11}) - \la I(\Xv_2; \Yv_{22}|V, \Yv_{11}) =  I(\Xv_2; \Yv_{12}|V) - \la I(\Xv_2; \Yv_{22}|V) = S_\la(\Xv_2). $$}\noindent 
This completes the proof.
\end{proof}

\subsubsection{Maximizing the concave envelope subject to a covariance constraint}

Consider an Additive Gaussian Noise broadcast channel $\qmf(y_1,y_2|x)$.
For $K \psd 0$, define
{\small $$ \textrm{V}_\la(K) = \sup_{\Xv:\E(\Xv\Xv^T)\nsd K} S_\la(\Xv). $$ }

\begin{claim}
\label{cl:max-S}
There is a pair of random variables $(V_*,\Xv_*)$ with $|V_*| \leq \frac{t(t+1)}{2} + 1$ such that
{\small $$ {\rm V}_\la(K) = \mathsf{s}_\la(\Xv_*|V_*). $$}
\end{claim}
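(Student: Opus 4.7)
The plan is to prove this in two pieces: (i) a cardinality bound on $V_*$ via the standard support-lemma/Carathéodory argument, and (ii) existence of a maximizer via compactness plus semi-continuity on the resulting finite-dimensional problem.

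For the cardinality bound, fix any feasible joint $p(v,\xv)$ with $V\to\Xv\to(\Yv_1,\Yv_2)$ and $\E(\Xv\Xv^T)\nsd K$. Each realization $v$ determines $p_v(\xv):=p(\xv\,|\,V{=}v)$, and hence the pair
$$\bigl(\E_{p_v}(\Xv\Xv^T),\ \mathsf{s}_\la(\Xv\sim p_v)\bigr)\in \Sc^t\times\Real,$$
where $\Sc^t$ denotes the $\tfrac{t(t+1)}{2}$-dimensional space of symmetric $t\times t$ matrices. Both feasibility ($\sum_v p(v)\E_{p_v}(\Xv\Xv^T)\nsd K$) and the objective $\mathsf{s}_\la(\Xv|V)=\sum_v p(v)\mathsf{s}_\la(p_v)$ depend on $p(v)$ only through the induced average in $\Sc^t\times\Real$. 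Applying the Fenchel-Eggleston-Carathéodory support lemma to the $\tfrac{t(t+1)}{2}$ functionals recording the entries of $\E_{p_v}(\Xv\Xv^T)$, together with the single functional $\mathsf{s}_\la(p_v)$, produces a new $V$ on at most $\tfrac{t(t+1)}{2}+1$ atoms that preserves both the covariance vector and the value of the objective.

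For existence, I would cap the alphabet of $V$ at $N:=\tfrac{t(t+1)}{2}+1$ and work with candidates $(p(v),p_v(\xv))_{v=1}^{N}$ satisfying $\sum_v p(v)\E_{p_v}(\Xv\Xv^T)\nsd K$. Because $\tr(K)$ controls $\sum_v p(v)\E_{p_v}(\|\Xv\|^2)$, the family of joint laws is tight on $\{1,\dots,N\}\times\Real^t$, hence weakly sequentially precompact by Prokhorov. I would take a maximizing sequence, extract a weakly convergent subsequence with limit $(V_*,\Xv_*)$, and observe that feasibility $\E(\Xv_*\Xv_*^T)\nsd K$ survives in the limit by the lower semi-continuity of second moments under weak convergence. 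Writing $\mathsf{s}_\la(\Xv|V)=h(\Yv_1|V)-\la h(\Yv_2|V)+\text{const}$, the fact that $\Yv_i=G_i\Xv+\Zv_i$ is Gaussian-smoothed makes the densities of $\Yv_i$ uniformly regular on the tight family, so that $h(\Yv_i|V)$ is continuous along such sequences and $\mathsf{s}_\la(\Xv_*|V_*)$ attains the supremum.

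The genuine technical obstacle is this last continuity of the entropies along a weakly convergent sequence with only a second-moment bound. Upper semi-continuity of $h(\Yv_1)$ is cheap from the Gaussian ceiling $h(\Yv_1)\leq\tfrac{1}{2}\log\bigl((2\pi e)^t|G_1 K G_1^T+I|\bigr)$, but since $h(\Yv_2)$ enters with a negative coefficient one also needs lower semi-continuity of $h(\Yv_2)$; this is obtained by expressing $h(\Yv_2)$ as a Gaussian baseline minus a relative entropy and invoking weak lower semi-continuity of $D(\cdot\,\|\,\cdot)$, together with the uniform tail control supplied by the second-moment bound. Once this semi-continuity package is in place, the existence conclusion follows, and combined with the cardinality bound above we obtain the required $(V_*,\Xv_*)$ with $|V_*|\leq\tfrac{t(t+1)}{2}+1$.
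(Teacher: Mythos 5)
Your proposal follows essentially the same route as the paper: a Carath\'eodory/support-lemma reduction on the $\tfrac{t(t+1)}{2}$ covariance entries plus the single value functional for the cardinality bound, and tightness/Prokhorov together with the regularity of the Gaussian-smoothed output densities to pass the entropies to the limit along a maximizing sequence --- which is exactly the paper's combination of its tightness lemma, Prokhorov's theorem, and its entropy-convergence lemma. The only differences are cosmetic (you cap the alphabet before taking limits, the paper takes limits per fixed covariance and then applies Bunt--Carath\'eodory) plus a slight scrambling of which hypothesis buys which semicontinuity direction (the paper gets $\liminf_n h(\Yv_n) \ge h(\Yv_*)$ from the uniform density bound and $\limsup_n h(\Yv_n) \le h(\Yv_*)$ from the moment constraint), but the substance is the same.
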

\begin{proof}
This is a technical claim that shows that the supremum is indeed attained.  The details are present in the Appendix \ref{se:maxd}.
\end{proof}

The goal of this section is to show that a single Gaussian distribution achieves $\textrm{V}_\la(K)$, i.e. we can take $V$ to be trivial and $X \sim \Nc(0,K\rq{}), K\rq{} \nsd K$. (This result is known and was first shown by Liu and Vishwanath\cite{liv07} using perturbation based techniques. We use this here as a non-trivial illustration of our technique and then our final result in the next section is new.)

\medskip

Consider a product channel consisting of two identical components $\qmf(\Yv_{11},\Yv_{21}|\Xv_1)\times \qmf(\Yv_{12},\Yv_{22}|\Xv_2)$. 

\smallskip

\noindent{\em Notation}: In the remainder of the section we assume that $p_*(v,\xv)$ achieves $\textrm{V}_\la(K)$ ,  $|V| = m  \leq  \frac{t(t+1)}{2} + 1$ and $\Xv_v$ be a centered random variable (zero-mean) distributed according to $p(\Xv|V=v)$. Further let $K_v = \E(\Xv_v \Xv_v^T)$. Then we have $\sum_{v=1}^m p_*(v) K_v \nsd K$ and in particular that $K_v$\rq{}s are bounded.

\begin{claim}
\label{cl:twolet-S}
Let $(V_1,V_2,\Xv_1,\Xv_2) \sim p_*(v_1,\xv_1)p_*(v_2,\xv_2)$ be two i.i.d. copies of $p_*(v,\xv)$. We assume that $|V| \leq  \frac{t(t+1)}{2} + 1$.  
Let
$$  \Vt = (V_1, V_2), \quad  \tilde{\Xv}|\big(\Vt = (v_1, v_2)\big)  \sim  \frac{1}{\sqrt{2}} \left( \Xv_{v_1} + \Xv_{v_2} \right), \quad \Xv'|\big(\Vt = (v_1, v_2)\big)  \sim  \frac{1}{\sqrt{2}} \left( \Xv_{v_1} - \Xv_{v_2} \right).$$
In the above we take $\Xv_{v_1}$ and $\Xv_{v_2}$ to be independent random variables. Then the following hold:
\begin{enumerate}
\item $\tilde{\Xv}, \Xv\rq{}$ are conditionally independent given $\tilde{V}$.
\item $\Vt, \tilde{\Xv}$ achieves ${\rm V}_\la(K)$.
\item $\Vt, \Xv'$ achieves ${\rm V}_\la(K)$.
\end{enumerate}
\end{claim}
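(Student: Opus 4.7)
The plan is to mimic the inequality cascade used in the proof of Claim~\ref{cl:twolet-I}, with $\mathsf{s}_\la$ and its upper concave envelope $S_\la$ in place of $I$, and with everything done conditionally on $\Vt$. The cascade will start and end at $2\textrm{V}_\la(K)$; equality throughout then forces each intermediate inequality to be tight and the three conclusions follow.

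First I would use the independence of $(V_1,\Xv_1)$ and $(V_2,\Xv_2)$, together with Claim~\ref{cl:cmg} applied separately to the two identical product channels (one per receiver) conditionally on $(V_1,V_2)$, to write
\[
2\textrm{V}_\la(K) = \mathsf{s}_\la(\Xv_1|V_1) + \mathsf{s}_\la(\Xv_2|V_2) = \mathsf{s}_\la(\Xv_1,\Xv_2 \mid V_1,V_2) = \mathsf{s}_\la(\tilde{\Xv},\Xv' \mid \Vt).
\]
Passing to the upper concave envelope and then applying the factorization Claim~\ref{cl:sep-S} conditionally at each fixed $\Vt = \tilde v$ and averaging yields
\[
\mathsf{s}_\la(\tilde{\Xv},\Xv'|\Vt) \leq S_\la(\tilde{\Xv},\Xv'|\Vt) \leq S_\la(\tilde{\Xv}|\Vt) + S_\la(\Xv'|\Vt) \leq 2\textrm{V}_\la(K).
\]
The last inequality is the feasibility step: measurably select, for each $\tilde v$, an inner auxiliary $U_{\tilde v}$ realizing $S_\la(\tilde{\Xv}|\Vt=\tilde v)$; then $(\Vt, U)$ is an admissible pair in the optimization defining $\textrm{V}_\la(K)$ because $\E(\tilde{\Xv}\tilde{\Xv}^T) = \sum_v p_*(v) K_v \nsd K$ (the $\Xv_v$ are zero mean, so cross-terms vanish), and analogously for $\Xv'$.

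Because the extremes of the cascade coincide, every inequality is an equality. Comparing the nonnegative differences pointwise in $\tilde v$ and noting they have zero average, both the concave-envelope inequality and the factorization inequality must hold with equality for a.e.\ $\tilde v$. Applying the ``furthermore'' clause of Claim~\ref{cl:sep-S} at each such $\tilde v$, with the trivial (constant) auxiliary realizing $S_\la(\tilde{\Xv},\Xv'|\Vt=\tilde v)$, then yields simultaneously: (1) $\tilde{\Xv}$ and $\Xv'$ are conditionally independent given $\Vt$; (2) the trivial auxiliary also realizes $S_\la(\tilde{\Xv}|\Vt=\tilde v)$, so that $\mathsf{s}_\la(\tilde{\Xv}|\Vt=\tilde v) = S_\la(\tilde{\Xv}|\Vt=\tilde v)$, and averaging plus the equality $S_\la(\tilde{\Xv}|\Vt) = \textrm{V}_\la(K)$ forced by the cascade shows that $(\Vt,\tilde{\Xv})$ attains $\textrm{V}_\la(K)$; (3) symmetrically for $(\Vt,\Xv')$.

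The step I expect to require the most care is the bookkeeping that promotes equality of averages into pointwise (a.e.) equality in $\tilde v$, together with a jointly measurable choice of the inner auxiliaries $U_{\tilde v}$, so that the ``furthermore'' clause of Claim~\ref{cl:sep-S} may be legitimately invoked at each relevant $\tilde v$ and so that $(\Vt,U)$ is a bona fide admissible pair in $\textrm{V}_\la(K)$. Modulo these standard measurability issues, the argument is a near line-by-line analogue of the proof of Claim~\ref{cl:twolet-I}.
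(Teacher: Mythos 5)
Your proof is correct and follows essentially the same sandwich argument as the paper: the chain from $2{\rm V}_\la(K)$ through Claim~\ref{cl:cmg} and the factorization Claim~\ref{cl:sep-S} back to $2{\rm V}_\la(K)$, with equality throughout forcing the three conclusions. The only cosmetic difference is that you factorize the conditional envelopes $S_\la(\cdot\,|\,\Vt=\tilde v)$ pointwise and invoke the equality clause of Claim~\ref{cl:sep-S} at each $\tilde v$ with the trivial auxiliary, whereas the paper passes to the unconditional envelope $S_\la(\tilde{\Xv},\Xv')$ and applies the clause once with $\Vt$ as the realizing auxiliary; the measurability concern you flag is moot since $\Vt$ has finite support ($|\Vt|\leq(\tfrac{t(t+1)}{2}+1)^2$), so ``a.e.\ $\tilde v$'' is just every atom of positive probability.
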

\begin{proof}
\begin{align*}
2 \textrm{V}_\la(K) &= \mathsf{s}_\la(\Xv_1|V_1) +  \mathsf{s}_\la(\Xv_2|V_2)  \\
& = \mathsf{s}_\la(\Xv_1,\Xv_2|V_1,V_2) \\
& \stackrel{(a)}{=}  \mathsf{s}_\la(\tilde{\Xv},\Xv\rq{}|\Vt) \\
&  \stackrel{(b)}{\leq} S_\la(\tilde{\Xv}, \Xv\rq{}) \\
& \stackrel{(c)}{\leq} S_\la(\tilde{\Xv}) + S_\la(\Xv\rq{}) \\
& \leq \textrm{V}_\la(K) + \textrm{V}_\la(K) =  2 \textrm{V}_\la(K).
\end{align*}
Here the first equality comes because $p_*(v,\xv)$  achieves $\textrm{V}_\la(K)$, the second one because $(V_1,\Xv_1)$ and $(V_2,\Xv_2)$ are independent. Equality $(a)$ is a consequence of Claim \ref{cl:cmg}, inequality $(c)$ is a consequence of Claim \ref{cl:sep-S}, and the last inequality follows from the following:
$$ \E(\tilde{\Xv}\tilde{\Xv}^T) = \E(\Xv' \Xv'^T) = \sum_{v_1,v_2} p_*(v_1)p_*(v_2) \frac{(K_{v_1} + K_{v_2})}{2}=  \sum_{v=1}^m p_*(v) K_v \nsd K,$$
and the definition of $\textrm{V}_\la(K)$. Since the extremes match, all inequalities must be equalities. Hence $(b)$ must be an equality, $p(\vt,\tilde{\xv},\xv\rq{})$ achieves $S_\la(\tilde{\Xv}, \Xv\rq{})$; and since $(c)$ is also equality from Claim \ref{cl:sep-S} we conclude that $\tilde{\Xv}, \Xv'$ are conditionally independent of $\Vt$.Furthermore, we also obtain that  $p(\vt|\tilde{\Xv})$ achieves $S_\la(\tilde{\Xv})$, which from the last inequality matches ${\rm V}_\la(K)$. Similarly for $p(\vt|\Xv\rq{})$.
\end{proof}

As a consequence, $\Xv_{v_1}$, $\Xv_{v_2}$ are independent random variables and $ \left( \Xv_{v_1} + \Xv_{v_2} \right),  \left( \Xv_{v_1} - \Xv_{v_2} \right)$ are also independent random variables. Thus from Corollary \ref{co:tworv} (in Appendix \ref{sse:gau}) $\Xv_{v_1}, \Xv_{v_2}$ are Gaussians, say having the same distribution as $\Xv_v\sim \Nc(0,K\rq{})$. Since $v_1, v_2$ are arbitrary, all $\Xv_{v_i}$ are  Gaussians, having the same distribution as $\Xv_v$. Then
$$ {\rm V}_\la(K) = \sum_{i=1}^m p_*(v_i) \mathsf{s}_\la(\Xv_{v_i}) = \sum_{i=1}^m p_*(v_i) \mathsf{s}_\la(\Xv_v) = \mathsf{s}_\la(\Xv_v). $$
Hence we obtain the following theorem.

\begin{theorem}
\label{th:lvalt}
There exists $\Xv_* \sim \Nc(0,K'), K' \nsd K$ such that ${\rm V}_\la(K) = \mathsf{s}_\la(\Xv_*).$
\end{theorem}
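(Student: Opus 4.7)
My plan is to string together Claims~\ref{cl:max-S} and~\ref{cl:twolet-S} with the Bernstein--Ghurye--Olkin type characterization given by Corollary~\ref{co:tworv} in Appendix~\ref{sse:gau}. The whole argument is essentially already laid out in the paragraph preceding the theorem, so my job is to package it cleanly.

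First I would invoke Claim~\ref{cl:max-S} to obtain a maximizer $(V_*,\Xv_*)$ with $|V_*| \le \tfrac{t(t+1)}{2}+1$ attaining $\textrm{V}_\la(K) = \mathsf{s}_\la(\Xv_*|V_*)$. Since the channels are additive Gaussian, shifting each conditional $\Xv_v \sim p(\Xv \mid V_*=v)$ to be zero-mean does not change $\mathsf{s}_\la$, so without loss of generality I may assume the $\Xv_v$ are centered; write $K_v := \E(\Xv_v\Xv_v^T)$, which satisfy $\sum_v p_*(v) K_v \nsd K$.

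Next I would take two independent copies of $(V_*,\Xv_*)$ and apply Claim~\ref{cl:twolet-S}. Item~(1) of that claim asserts that the symmetrized random variables $\tilde{\Xv}$ and $\Xv'$ are conditionally independent given $\tilde V = (V_1,V_2)$. Specializing to any fixed pair $(v_1,v_2)$ in the support of $V_*$, this says that when $\Xv_{v_1}$ and $\Xv_{v_2}$ are taken independent, the vectors $\tfrac{1}{\sqrt 2}(\Xv_{v_1}+\Xv_{v_2})$ and $\tfrac{1}{\sqrt 2}(\Xv_{v_1}-\Xv_{v_2})$ are independent. By Corollary~\ref{co:tworv}, any pair of independent random vectors with this property must both be Gaussian \emph{with a common covariance}. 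Applying this to every pair $(v_1,v_2)$ drawn from the (finite) support of $V_*$ forces all the $\Xv_v$ to be identically distributed Gaussians $\Nc(0,K')$ for some common $K' \psd 0$; in particular $K_v = K'$ for every $v$.

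Finally, since $\mathsf{s}_\la(\Xv \mid V_*=v) = \mathsf{s}_\la(\Xv_v)$ depends only on the distribution of $\Xv_v$, and this distribution is $\Nc(0,K')$ for every $v$, I get
\[
\textrm{V}_\la(K) \;=\; \sum_v p_*(v)\,\mathsf{s}_\la(\Xv_v) \;=\; \mathsf{s}_\la(\Xv_*),
\]
where $\Xv_* \sim \Nc(0,K')$. The covariance bound $K' \nsd K$ follows from $\sum_v p_*(v) K_v = K' \nsd K$. The only real obstacle is the Ghurye--Olkin/Skitovich--Darmois step (Corollary~\ref{co:tworv}) and the compactness/cardinality bound in Claim~\ref{cl:max-S}; both have been deferred to the appendix, so the theorem itself is then a one-line consequence of assembling these ingredients.
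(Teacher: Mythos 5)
Your proposal is correct and follows essentially the same route as the paper: extract the finite-support maximizer from Claim~\ref{cl:max-S}, use item (1) of Claim~\ref{cl:twolet-S} to deduce that $\Xv_{v_1}+\Xv_{v_2}$ and $\Xv_{v_1}-\Xv_{v_2}$ are independent for each pair in the support, invoke Corollary~\ref{co:tworv} to force all the $\Xv_v$ to be identically distributed Gaussians $\Nc(0,K')$ with $K'\nsd K$, and collapse the mixture. The only addition you make is to spell out the harmless centering step, which the paper handles by fiat in its notation.
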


{\em Remark}: Notice that we never used the precise form of $S_\la(X)$ but just used that the implications of Claim \ref{cl:sep-S}. In the next section we will define a new concave envelope that will also satisfy a condition similar to Claim \ref{cl:sep-S}, and then establish the optimality of Gaussian.

\begin{corollary}
\label{co:tri}
If $\Xv \sim \Nc(0,K)$ then there exists $\Xv_* \sim \Nc(0,K'), K' \nsd K$ such that
$S_\la(\Xv) = \mathsf{s}_\la(\Xv_*) = {\rm V}_\la(K).$
\end{corollary}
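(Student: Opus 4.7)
The plan is to prove the double equality $S_\la(\Xv) = \mathsf{s}_\la(\Xv_*) = {\rm V}_\la(K)$ for $\Xv \sim \Nc(0,K)$, where the second equality is exactly Theorem \ref{th:lvalt}. Thus only $S_\la(\Xv) = {\rm V}_\la(K)$ requires work, and I would establish this by two matching inequalities.

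The easy direction $S_\la(\Xv) \le {\rm V}_\la(K)$ is immediate: since $\Xv \sim \Nc(0,K)$ satisfies $\E(\Xv\Xv^T) = K \nsd K$, the random variable $\Xv$ is feasible in the supremum defining ${\rm V}_\la(K) = \sup_{\E(\Xv\Xv^T)\nsd K} S_\la(\Xv)$, so $S_\la(\Xv) \le {\rm V}_\la(K)$.

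For the reverse inequality, I would exploit the divisibility of the Gaussian distribution. Let $\Xv_* \sim \Nc(0,K')$, $K' \nsd K$, be the optimizer from Theorem \ref{th:lvalt}, so $\mathsf{s}_\la(\Xv_*) = {\rm V}_\la(K)$. Since $K - K' \psd 0$, we may write $\Xv \stackrel{d}{=} \Xv_* + V$, where $V \sim \Nc(0,K-K')$ is independent of $\Xv_*$. Using this $V$ as an auxiliary in the concave-envelope representation of $S_\la$, conditioned on $V = v$ the law of $\Xv$ is $\Nc(v, K')$, which is just a translate of $\Xv_*$. Since $\Yv_1 = G_1 \Xv + \Zv_1$ and $\Yv_2 = G_2 \Xv + \Zv_2$ differ under this conditioning only by deterministic shifts $G_1 v, G_2 v$, and mutual information is invariant under deterministic shifts of jointly observed random vectors, we get $\mathsf{s}_\la(\Xv | V = v) = \mathsf{s}_\la(\Xv_*)$ for every $v$. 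Averaging over $v$ gives
\[
S_\la(\Xv) \;\ge\; \mathsf{s}_\la(\Xv | V) \;=\; \mathsf{s}_\la(\Xv_*) \;=\; {\rm V}_\la(K),
\]
which combined with the easy direction yields $S_\la(\Xv) = {\rm V}_\la(K) = \mathsf{s}_\la(\Xv_*)$, proving the corollary.

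There is no real obstacle here; the corollary is essentially a bookkeeping consequence of Theorem \ref{th:lvalt} together with the infinite divisibility of Gaussians, which is exactly the property that allows the optimal $K'$ (possibly strictly smaller than $K$) to be ``reached'' from the full covariance $K$ by introducing an independent Gaussian auxiliary. The only minor point to double-check is the translation-invariance of $\mathsf{s}_\la$ under a deterministic additive shift of $\Xv$, which is immediate from the definition of mutual information in additive-noise channels.
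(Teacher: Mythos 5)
Your proposal is correct and matches the paper's proof: the upper bound is the feasibility of $\Xv$ in the definition of ${\rm V}_\la(K)$ together with Theorem \ref{th:lvalt}, and the lower bound is obtained by writing $\Xv \sim \Xv' + \Xv_*$ with $\Xv' \sim \Nc(0,K-K')$ and using $\Xv'$ as the auxiliary, so that $S_\la(\Xv) \geq \mathsf{s}_\la(\Xv|\Xv') = \mathsf{s}_\la(\Xv_*)$. Your explicit justification of the conditional step via translation invariance of mutual information is exactly what the paper leaves implicit.
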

\begin{proof}
Clearly from Theorem \ref{th:lvalt} and definition of $ {\rm V}_\la(K)$ we have 
$$ S_\la(\Xv) \leq {\rm V}_\la(K) = \mathsf{s}_\la(\Xv_*). $$
On the other hand let $\Xv' \sim \Nc(0, K-K')$ be independent of $\Xv_*$. Note that $\Xv \sim \Xv' + \Xv_*$ and
\[ S_\la(\Xv) =\sup_V \mathsf{s}_\la(\Xv|V) \geq \mathsf{s}_\la(\Xv|\Xv') = \mathsf{s}_\la(\Xv_*). \qedhere \]
\end{proof}

\subsection{Example 3: A more complicated example}
\label{sse:fun-T}
The function we considered in the previous section can be used determine the capacity region of vector Gaussian broadcast channel with only private messages\cite{liv07}(see Section \ref{sse:pm}). The function we consider in this section will enable us to determine the capacity region of vector Gaussian broadcast channel with common message as well (see Section \ref{sse:cm}).

 For $\la_0, \la_1, \la_2 > 0$ and for $\alpha \in [0,1]$ (and $\bar\alpha:=1-\alpha$) consider the following function of $p(\xv)$ defined by
$$ \mathsf{t}_\lav(\Xv) :=-\la_0\alpha I(\Xv;\Yv_1) - \la_0 \bar \alpha I(\Xv;\Yv_2) + (\la_1+\la_2) I(\Xv;\Yv_2) +\la_1 S_{\frac{\la_1 + \la_2}{\la_1}}(\Xv). $$

Further let 
$$ T_\lav(\Xv) := \ce(\mathsf{t}_\lav(\Xv)) $$
denote the upper concave envelope of $\mathsf{t}_\lav(\Xv)$. 
 It is easy to see that
{\small $$ \ce(\mathsf{t}_\lav(\Xv)) = \sup_{p(w|\xv)} -\la_0\alpha I(\Xv;\Yv_1|W) - \la_0 \bar \alpha I(\Xv;\Yv_2|W) + (\la_1+\la_2) I(\Xv;\Yv_2|W) + \la_1 S_{\frac{\la_1 + \la_2}{\la_1}}(\Xv|W). $$} 

For a product broadcast channel $\qmf_1(\yv_{11},\yv_{21}|\xv_1) \times \qmf_2(\yv_{12},\yv_{22}|\xv_2)$ let $T_\lav(\Xv_1,\Xv_2)$ denote the corresponding upper concave envelope.
The following claim is referred to as the \lq\lq{}factorization of $T_\lav(\Xv_1,\Xv_2)$\rq\rq{}.

\begin{claim}
\label{cl:sep-T}
When $\la_0 > \la_1 + \la_2$ the following inequality holds for product broadcast channels
$$ T_\lav(\Xv_1,\Xv_2)  \leq T_\lav(\Xv_1|\Yv_{22}) +  T_\lav(\Xv_2|\Yv_{11})  \leq T_\lav(\Xv_1) + T_\lav(\Xv_2). $$
For additive Gaussian noise broadcast channels if $p(w,\Xv_1,\Xv_2)$  realizes $ T_\lav(\Xv_1,\Xv_2)$ and equality is achieved above then all of the following must be true
\begin{enumerate}
\item $\Xv_1$ and $\Xv_2$ are conditionally independent of $W$
\item $W,\Xv_1$ achieves $T_\lav(\Xv_1)$
\item $W,\Xv_2$ achieves $T_\lav(\Xv_2).$
\end{enumerate}
\end{claim}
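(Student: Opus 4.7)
My plan is to follow the template of Claim~\ref{cl:sep-S}, treating the inner object $S_\mu$ (with $\mu = (\la_1 + \la_2)/\la_1 > 1$) as a black box via the already-established factorization.  Fix any $W$ satisfying $W \to (\Xv_1, \Xv_2) \to (\Yv_{11}, \Yv_{21}, \Yv_{12}, \Yv_{22})$; it will suffice to prove
\[
\mathsf{t}_\lav(\Xv_1, \Xv_2|W) \le T_\lav(\Xv_1|\Yv_{22}) + T_\lav(\Xv_2|\Yv_{11}),
\]
since taking the supremum over $W$ yields the first inequality of the claim. The second inequality $T_\lav(\Xv_i|\cdot) \le T_\lav(\Xv_i)$ is an immediate consequence of concavity of the upper envelope (Jensen in the conditioning distribution).

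The heart of the argument is the product-channel chain-rule identity used in the proof of Claim~\ref{cl:sep-S}, namely
\[
I(\Xv_1, \Xv_2; \Yv_{j1}, \Yv_{j2}|W) = I(\Xv_1; \Yv_{j1}|W, \Yv_{22}) + I(\Xv_2; \Yv_{j2}|W, \Yv_{11}) + I(\Yv_{11}; \Yv_{22}|W), \quad j = 1, 2,
\]
whose proof uses that, for the product channel with independent noises, $\Yv_{11}$ and $\Yv_{22}$ are conditionally independent given $(W, \Xv_1)$ (and also given $(W, \Xv_2)$). Substituting this decomposition into the mutual-information pieces of $\mathsf{t}_\lav(\Xv_1,\Xv_2|W)$, then using Claim~\ref{cl:sep-S} applied conditionally on each value of $W$ to bound
\[
\la_1 S_\mu(\Xv_1, \Xv_2|W) \le \la_1 S_\mu(\Xv_1|W, \Yv_{22}) + \la_1 S_\mu(\Xv_2|W, \Yv_{11}),
\]
and finally regrouping the eight resulting terms into $\Xv_1$-pieces (conditioned on $W, \Yv_{22}$), $\Xv_2$-pieces (conditioned on $W, \Yv_{11}$), and two copies of $I(\Yv_{11};\Yv_{22}|W)$ with coefficients $-\la_0\alpha$ and $-(\la_0\bar\alpha - \la_1 - \la_2)$, produces
\[
\mathsf{t}_\lav(\Xv_1, \Xv_2|W) \le \mathsf{t}_\lav(\Xv_1|W, \Yv_{22}) + \mathsf{t}_\lav(\Xv_2|W, \Yv_{11}) - (\la_0 - \la_1 - \la_2)\, I(\Yv_{11}; \Yv_{22}|W).
\]
The hypothesis $\la_0 > \la_1 + \la_2$ enters at exactly this point, making the residual non-positive. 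The bounds $\mathsf{t}_\lav(\Xv_i|\cdot) \le T_\lav(\Xv_i|\cdot)$ (definition of envelope) and $T_\lav(\Xv_i|W, \cdot) \le T_\lav(\Xv_i|\cdot)$ (concavity in the conditioning distribution) then finish the first inequality.

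For the equality statement, assume $T_\lav(\Xv_1, \Xv_2) = T_\lav(\Xv_1) + T_\lav(\Xv_2)$ with realizer $W$; every link in the chain above must then be tight. Tightness of the residual gives $I(\Yv_{11}; \Yv_{22}|W) = 0$, whence Claim~\ref{cl:cm1} applied conditionally on $W$ forces $\Xv_1 \perp \Xv_2 \mid W$, which is item~(1). This conditional independence makes $\Yv_{22}$ conditionally independent of $\Xv_1$ given $W$, so $\mathsf{t}_\lav(\Xv_1|W, \Yv_{22}) = \mathsf{t}_\lav(\Xv_1|W)$ and $T_\lav(\Xv_1|W, \Yv_{22}) = T_\lav(\Xv_1|W)$; the tight chain $\mathsf{t}_\lav(\Xv_1|W) = \mathsf{t}_\lav(\Xv_1|W,\Yv_{22}) \le T_\lav(\Xv_1|W, \Yv_{22}) \le T_\lav(\Xv_1|\Yv_{22}) \le T_\lav(\Xv_1)$ therefore collapses to $\mathsf{t}_\lav(\Xv_1|W) = T_\lav(\Xv_1)$, which is item~(2), and item~(3) follows by symmetry. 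The only genuine obstacle in executing this plan is the algebraic bookkeeping in the regrouping step: one must check that the two appearances of $I(\Yv_{11};\Yv_{22}|W)$ coming from the $j=1$ and $j=2$ expansions combine cleanly into the single residual $-(\la_0 - \la_1 - \la_2)\,I(\Yv_{11};\Yv_{22}|W)$, whose sign is controlled precisely by the hypothesis.
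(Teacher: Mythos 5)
Your proposal is correct and follows essentially the same route as the paper: the same chain-rule decomposition of the product-channel mutual informations (your unified identity for $j=1,2$ is just the paper's two-step expansion written at once), the same conditional application of the $S_\mu$ factorization from Claim~\ref{cl:sep-S}, the same residual $-(\la_0-\la_1-\la_2)I(\Yv_{11};\Yv_{22}|W)$, and the same use of Claim~\ref{cl:cm1} plus the collapse of the tight chain for the equality case. No gaps.
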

\begin{proof}
Observe the following
{\small \begin{align*}
& -\la_0\alpha I(\Xv_1,\Xv_2;\Yv_{11},\Yv_{12}|W) - \la_0 \bar \alpha I(\Xv_1,\Xv_2;\Yv_{21},\Yv_{22}|W) + (\la_1+\la_2) I(\Xv_1,\Xv_2;\Yv_{21},\Yv_{22}|W) \\
& \qquad + \la_1 S_{\frac{\la_1 + \la_2}{\la_1}}(\Xv_1, \Xv_2|W) \\
& \quad \leq -\la_0\alpha I(\Xv_1;\Yv_{11}|W)  -\la_0\alpha I(\Xv_2;\Yv_{12}|W,\Yv_{11})  - \la_0 \bar \alpha I(\Xv_2;\Yv_{22}|W) - \la_0 \bar \alpha I(\Xv_1;\Yv_{21}|W,\Yv_{22}) \\
& \qquad \qquad + (\la_1+\la_2) I(\Xv_2;\Yv_{22}|W) + (\la_1+\la_2) I(\Xv_1;\Yv_{21}|W,\Yv_{22}) +  \la_1 S_{\frac{\la_1 + \la_2}{\la_1}}(\Xv_1|W,\Yv_{22}) +  \la_1 S_{\frac{\la_1 + \la_2}{\la_1}}( \Xv_2|W, \Yv_{11}) \\
& \quad \leq -\la_0\alpha I(\Xv_1;\Yv_{11}|W, \Yv_{22})  -\la_0\alpha I(\Xv_2;\Yv_{12}|W,\Yv_{11})  - \la_0 \bar \alpha I(\Xv_2;\Yv_{22}|W, \Yv_{11}) - \la_0 \bar \alpha I(\Xv_1;\Yv_{21}|W,\Yv_{22})\\ 
& \qquad \qquad + (\la_1+\la_2) I(\Xv_2;\Yv_{22}|W, \Yv_{11})  + (\la_1+\la_2) I(\Xv_1;\Yv_{21}|W,\Yv_{22})  +  \la_1 S_{\frac{\la_1 + \la_2}{\la_1}}(\Xv_1|W,\Yv_{22})\\
& \qquad \qquad  +  \la_1 S_{\frac{\la_1 + \la_2}{\la_1}}( \Xv_2|W, \Yv_{11}) - (\la_0 - \la_1 - \la_2) I(\Yv_{11};\Yv_{22}|W) \\
& \quad \leq T_\lav(\Xv_1|\Yv_{22}) + T_\lav(\Xv_2|\Yv_{11}) - (\la_0 - \la_1 - \la_2) I(\Yv_{11};\Yv_{22}|W)  \\
& \quad \leq T_\lav(\Xv_1) + T_\lav(\Xv_2) - (\la_0 - \la_1 - \la_2) I(\Yv_{11};\Yv_{22}|W).
\end{align*}}\noindent 
Since equality holds, using Claim~\ref{cl:cmg} we have $\Xv_1$ and $\Xv_2$ are conditionally independent of $W$. Further using this and the equality observe that 
{\small
\begin{align*}  
& -\la_0\alpha I(\Xv_1;\Yv_{11}|W, \Yv_{22}) - \la_0 \bar \alpha I(\Xv_1;\Yv_{21}|W,\Yv_{22}) + (\la_1+\la_2) I(\Xv_1;\Yv_{21}|W,\Yv_{22}) +  \la_1 S_{\frac{\la_1 + \la_2}{\la_1}}(\Xv_1|W,\Yv_{22}) \\
& \quad = -\la_0\alpha I(\Xv_1;\Yv_{11}|W) - \la_0 \bar \alpha I(\Xv_1;\Yv_{21}|W) + (\la_1+\la_2) I(\Xv_1;\Yv_{21}|W) +  \la_1 S_{\frac{\la_1 + \la_2}{\la_1}}(\Xv_1|W) \\
& \quad = T_\lav(\Xv_1).
\end{align*}}\noindent 
Similarly for $\Xv_2$. This completes the proof.
\end{proof}
{\em Remark}: The above claim is the equivalent of Claim \ref{cl:sep-S}.

\medskip

For $K \psd 0$, define
$$ {\rm \hat V}_\lav(K) = \sup_{\Xv:E(\Xv\Xv^T)\nsd K} T_\lav(\Xv). $$

\begin{claim}
\label{cl:max-T}
There exists a pair $(W_*, \Xv_*)$ with $|W_*| \leq \frac{t(t+1)}{2} + 1$  such that ${\rm \hat V}_\lav(K) = \mathsf{t}_\lav(\Xv_*|W_*).$
\end{claim}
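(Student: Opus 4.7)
The plan is to mirror the argument of Claim~\ref{cl:max-S}, first noting that the two layers of suprema in the definition of ${\rm \hat V}_\lav(K)$ collapse into a single joint supremum
\[
{\rm \hat V}_\lav(K) \;=\; \sup_{p(w,\xv)\,:\,\E(\Xv\Xv^T)\nsd K}\; \sum_w p(w)\,\mathsf{t}_\lav(\Xv_w),
\]
where $\Xv_w\sim p(\xv|W{=}w)$, since $\mathsf{t}_\lav(\cdot)$ is a functional of the marginal distribution of $\Xv_w$ alone. I would split the task into two parts: first reduce the cardinality of $W$ by a Carath\'eodory-style argument, and then extract a maximizing pair from the resulting finite-dimensional problem by compactness and upper semicontinuity.

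For the cardinality bound I would invoke the Fenchel--Bunt form of Carath\'eodory's theorem. Each conditional $p(\xv|w)$ can be identified with a point $\bigl(K_w,\,\mathsf{t}_\lav(\Xv_w)\bigr)\in\Real^{t(t+1)/2+1}$: its symmetric covariance matrix and its functional value. Replacing $p(w)$ by any measure supported on a subfamily of atoms producing the same weighted average of these points preserves both the objective $\sum_w p(w)\mathsf{t}_\lav(\Xv_w)$ and the constraint $\sum_w p(w)K_w\nsd K$. Since the set of achievable pairs is the continuous image of the convex (hence connected) set of probability measures on $\Real^t$, Fenchel--Bunt gives support size at most $\frac{t(t+1)}{2}+1$, as required.

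With $|W|\le m:=\frac{t(t+1)}{2}+1$ fixed, the parameter space for the outer sup is a simplex on $m$ weights together with $m$ conditional distributions on $\Real^t$ whose covariances satisfy $\sum_w p(w)K_w\nsd K$. This uniform second-moment bound makes the family $\{p(\xv|w)\}_w$ tight, so any maximizing sequence admits a subsequence converging weakly along each $w$ (while the $p(w)$ converge on the finite simplex). The mutual information terms $I(\Xv_w;\Yv_1)$ and $I(\Xv_w;\Yv_2)$ are continuous under weak convergence with bounded second moments, because the additive Gaussian noise smooths the output density and keeps the output differential entropy continuous, while the input--output noise entropy $h(\Zv_j)$ is a constant. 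The delicate term is $S_{(\la_1+\la_2)/\la_1}(\Xv_w)$, for which I would invoke Claim~\ref{cl:max-S} to write $S_{(\la_1+\la_2)/\la_1}(\Xv_w^n)=\mathsf{s}_{(\la_1+\la_2)/\la_1}(\Xv_w^n|V_w^n)$ with $|V_w^n|\le\frac{t(t+1)}{2}+1$, then pass to a further subsequential weak limit of $(V_w^n,\Xv_w^n)$ and use the same continuity of mutual informations to conclude $\limsup_n S_{(\la_1+\la_2)/\la_1}(\Xv_w^n)\le \mathsf{s}_{(\la_1+\la_2)/\la_1}(\Xv_w|V_w)\le S_{(\la_1+\la_2)/\la_1}(\Xv_w)$, i.e.\ upper semicontinuity of the $S$ term. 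Combining these continuity statements gives that the limit point attains the supremum.

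The main obstacle I expect is precisely this last step. The function $S_{(\la_1+\la_2)/\la_1}$ is defined as a concave envelope over an infinite-dimensional space of conditional distributions, and its continuity properties are not immediate. The cardinality reduction guaranteed by Claim~\ref{cl:max-S} is what makes compactness applicable to the inner variational problem; once that is in hand, the remainder of the argument reduces to the same careful handling of weak limits of Gaussian-convolved densities that underlies the proof of Claim~\ref{cl:max-S}, and should therefore go through verbatim.
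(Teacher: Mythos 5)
Your overall architecture (collapsing the two suprema into one joint supremum, a Fenchel--Bunt reduction of $|W|$, tightness from the covariance constraint, and continuity of the plain mutual-information terms via the smoothing effect of the additive Gaussian noise) matches the paper's. You are also right that, since $S_{(\la_1+\la_2)/\la_1}$ enters $\mathsf{t}_\lav$ with the positive coefficient $\la_1$, upper semicontinuity of that term along the maximizing sequence would suffice. The gap is exactly at the point you flag as delicate, and the fix you propose does not work: Claim~\ref{cl:max-S} asserts that the \emph{covariance-constrained supremum} ${\rm V}_\la(K)=\sup_{\E(\Xv\Xv^T)\nsd K}S_\la(\Xv)$ is attained by a pair $(V_*,\Xv_*)$ with $|V_*|\le\frac{t(t+1)}{2}+1$; it does \emph{not} assert that for a fixed input law $\Xv_w^n$ the envelope $S_{(\la_1+\la_2)/\la_1}(\Xv_w^n)=\sup_{p(v|\xv)}\mathsf{s}_{(\la_1+\la_2)/\la_1}(\Xv_w^n|V)$ is attained, nor that near-optimal $V$'s can be taken with bounded cardinality. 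The Carath\'eodory step in Claim~\ref{cl:max-S} works because only the $\frac{t(t+1)}{2}$-dimensional covariance of $\Xv$ has to be preserved when thinning the mixture; when you evaluate the concave envelope at a \emph{fixed} $\Xv_w^n$, the mixture components must reconstitute the entire infinite-dimensional measure of $\Xv_w^n$, so Carath\'eodory gives no cardinality bound and your extraction of a joint weak limit of $(V_w^n,\Xv_w^n)$ over a finite alphabet has nothing to stand on.

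The semicontinuity of $S_\la$ along weakly convergent sequences with uniformly bounded covariance is precisely what the paper's Appendix section on ``continuity in a pathwise sense'' establishes (Claim~\ref{cl:main2}), and it is the central analytic ingredient of the proof of Claim~\ref{cl:max-T}: it requires a uniform upper bound $C_\la$ on $\mathsf{s}_\la$ (Claim~\ref{cl:ubs}), a Tietze extension of the truncated function to a bounded continuous function on the whole space of measures, the nontrivial fact from \cite{prs09} that the concave envelope of a bounded continuous function on the space of Borel probability measures is continuous under weak convergence, and a uniform-in-$n$ truncation estimate to interchange the limits. Given Claim~\ref{cl:main2}, the paper's proof of Claim~\ref{cl:max-T} is a one-line reduction plus the Bunt--Carath\'eodory step you already have. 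To complete your argument you would either have to import Claim~\ref{cl:main2}, or rework your limit argument at the level of mixing measures on the space of probability measures (Prokhorov applied one level up, together with the uniform bound $\mathsf{s}_\la\le C_\la$ to control escaping mass), which essentially amounts to re-proving it.
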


\noindent{\em Notation}: In the remainder of the section we assume that $p_*(w,\xv)$ achieves ${\rm \hat V}_\lav(K)$,  $|W| = m  \leq  \frac{t(t+1)}{2} + 1$ and $\Xv_w$ be a centered random variable (zero-mean) distributed according to $p(\Xv|W=w)$. Further let $K_w = \E(\Xv_w \Xv_w^T)$. Then we have $\sum_{w=1}^m p_*(w) K_w \nsd K$ and in particular that $K_w$\rq{}s are bounded.

\begin{claim}
\label{cl:twolet-T}
Let $(W_1,W_2,\Xv_1,\Xv_2) \sim p_*(w_1,\xv_1)p_*(w_2,\xv_2)$ be two i.i.d. copies of $p_*(w,x)$. We assume that $|W| \leq  \frac{t(t+1)}{2} + 1$.  
Let
$$ \Wt = (W_1, W_2), \quad  \tilde{\Xv}|\big(\Wt = (w_1, w_2)\big)  \sim  \frac{1}{\sqrt{2}} \left( \Xv_{w_1} + \Xv_{w_2} \right), \quad \Xv'|\big(\Wt = (w_1, w_2)\big)  \sim  \frac{1}{\sqrt{2}} \left( \Xv_{w_1} - \Xv_{w_2} \right).$$
In the above we take $\Xv_{w_1}$ and $\Xv_{w_2}$ to be independent random variables. Then the following hold:
\begin{enumerate}
\item $\tilde{\Xv}, \Xv\rq{}$ are conditionally independent given $\tilde{W}$.
\item  $\Wt, \tilde{\Xv}$ achieves ${\rm \hat V}_\lav(K)$.
\item  $\Wt, \Xv'$ achieves ${\rm \hat V}_\lav(K)$.
\end{enumerate}
\end{claim}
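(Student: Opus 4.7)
The plan is to mirror the argument given for Claim \ref{cl:twolet-S} almost verbatim, with Claim \ref{cl:sep-T} taking over the role of Claim \ref{cl:sep-S}. I would set up the chain
\begin{align*}
2\,{\rm \hat V}_\lav(K) &= \mathsf{t}_\lav(\Xv_1|W_1) + \mathsf{t}_\lav(\Xv_2|W_2) \\
& = \mathsf{t}_\lav(\Xv_1,\Xv_2|W_1,W_2) \\
& \stackrel{(a)}{=} \mathsf{t}_\lav(\tilde{\Xv},\Xv'|\Wt) \\
& \stackrel{(b)}{\leq} T_\lav(\tilde{\Xv},\Xv') \\
& \stackrel{(c)}{\leq} T_\lav(\tilde{\Xv}) + T_\lav(\Xv') \\
& \leq 2\,{\rm \hat V}_\lav(K),
\end{align*}
and then observe that since the two extreme terms agree, every step must be an equality; the three conclusions are read off from these equalities.

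The first equality uses that $p_*$ attains ${\rm \hat V}_\lav(K)$ (Claim \ref{cl:max-T}) on each of the two i.i.d.\ copies. The second equality --- additivity of $\mathsf{t}_\lav$ under the independence of $(W_1,\Xv_1)$ and $(W_2,\Xv_2)$ --- is the one step that needs genuine care, since $\mathsf{t}_\lav$ contains the concave envelope $S_{(\la_1+\la_2)/\la_1}$ rather than a pure information quantity. The mutual information pieces add by the chain rule for the product broadcast channel with independent inputs, and for the $S$ piece I would establish
\begin{align*}
S_{\frac{\la_1+\la_2}{\la_1}}(\Xv_1,\Xv_2|W_1,W_2) = S_{\frac{\la_1+\la_2}{\la_1}}(\Xv_1|W_1) + S_{\frac{\la_1+\la_2}{\la_1}}(\Xv_2|W_2)
\end{align*}
by a sandwich argument: the direction $\geq$ follows by picking $V_i^*$ independently attaining $S(\Xv_i|W_i)$ and evaluating $\mathsf{s}$ at the joint auxiliary $V=(V_1^*,V_2^*)$ (using the chain rule for the product channel), while $\leq$ is Claim \ref{cl:sep-S} applied conditionally on $(W_1,W_2)$. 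Equality $(a)$ is Claim \ref{cl:cmg} applied term by term: the rotation is invertible and the product channel is unchanged, so both the mutual informations and the embedded $S$ value (defined through a sup over auxiliaries with a Markov constraint) are preserved. Inequality $(b)$ is by definition of the upper concave envelope, $(c)$ is Claim \ref{cl:sep-T} (tacitly using $\la_0 > \la_1 + \la_2$), and the final inequality follows from
\begin{align*}
\E(\tilde{\Xv}\tilde{\Xv}^T) = \E(\Xv'\Xv'^T) = \sum_{w=1}^{m} p_*(w) K_w \nsd K
\end{align*}
combined with the definition of ${\rm \hat V}_\lav(K)$.

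Once the endpoints match, each inequality must be tight, and the three conclusions all fall out at once. Tightness of $(b)$ says that $(\Wt,\tilde{\Xv},\Xv')$ attains $T_\lav(\tilde{\Xv},\Xv')$; tightness of $(c)$ then triggers the equality clause of Claim \ref{cl:sep-T}, which simultaneously yields the conditional independence of $\tilde{\Xv}$ and $\Xv'$ given $\Wt$ together with the statements that $(\Wt,\tilde{\Xv})$ achieves $T_\lav(\tilde{\Xv})$ and $(\Wt,\Xv')$ achieves $T_\lav(\Xv')$; and tightness of the last inequality forces $T_\lav(\tilde{\Xv}) = T_\lav(\Xv') = {\rm \hat V}_\lav(K)$, giving items 2 and 3. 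The main obstacle is really just the additivity of $\mathsf{t}_\lav$ under independence --- more precisely the $S$ piece embedded in $\mathsf{t}_\lav$ --- after which the remainder of the proof is a direct transcription of the argument for Claim \ref{cl:twolet-S}.
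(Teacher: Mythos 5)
Your proposal is correct and follows essentially the same route as the paper: the identical six-line sandwich chain with steps $(a)$ (Claim \ref{cl:cmg}), $(b)$ (definition of the concave envelope), $(c)$ (Claim \ref{cl:sep-T}), and the covariance computation, after which the three conclusions are read off from tightness exactly as in Claim \ref{cl:twolet-S}. Your extra care over the additivity of the $S_{(\la_1+\la_2)/\la_1}$ term inside $\mathsf{t}_\lav$ under independence is a legitimate filling-in of a step the paper leaves implicit, and your sandwich argument for it is sound.
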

\begin{proof}
\begin{align*}
2 {\rm \hat V}_\lav(K) &= \mathsf{t}_\lav(\Xv_1|W_1) +  \mathsf{t}_\lav(\Xv_2|W_2)  \\
& = \mathsf{t}_\lav(\Xv_1,\Xv_2|W_1,W_2) \\
& \stackrel{(a)}{=}  \mathsf{t}_\lav(\tilde{\Xv},\Xv\rq{}|\Wt) \\
& \stackrel{(b)}{\leq} T_\lav(\tilde{\Xv}, \Xv\rq{}) \\
& \stackrel{(c)}{\leq} T_\lav(\tilde{\Xv}) + T_\lav(\Xv\rq{}) \\
& \leq {\rm \hat V}_\lav(K) + {\rm \hat V}_\lav(K) =  2 {\rm \hat V}_\lav(K).
\end{align*}
The proof mirrors that of Claim \ref{cl:twolet-S}.
Here the first equality comes because $p_*(w,\xv)$  achieves ${\rm \hat V}_\lav(K)$, the second one because $(W_1,\Xv_1)$ and $(W_2,\Xv_2)$ are independent. Equality $(a)$ is a consequence of Claim \ref{cl:cmg}, inequality $(c)$ is a consequence of Claim \ref{cl:sep-T}, and the last inequality follows from the following:
$$ \E(\tilde{\Xv}\tilde{\Xv}^T) = \E(\Xv' \Xv'^T) = \sum_{w_1,w_2} p_*(w_1)p_*(w_2) \frac{(K_{w_1} + K_{w_2})}{2}=  \sum_{w=1}^m p_*(w) K_w \nsd K,$$
and the definition of ${\rm \hat V}_\lav(K)$. Since the extremes match, all inequalities must be equalities. Hence $(b)$ must be an equality, $p(\wt,\tilde{\xv},\xv\rq{})$ achieves $T_\lav(\tilde{\Xv}, \Xv\rq{})$; and since $(c)$ is also equality from Claim \ref{cl:sep-T} we conclude that $\tilde{\Xv}, \Xv'$ are conditionally independent of $\Wt$. Furthermore, we also obtain that  $p(\wt|\tilde{\Xv})$ achieves $T_\lav(\tilde{\Xv})$, which from the last inequality matches ${\rm \hat V}_\lav(K)$. Similarly for $p(\wt|\Xv\rq{})$.
\end{proof}

As a consequence, $\Xv_{w_1}$, $\Xv_{w_2}$ are independent random variables and $ \left( \Xv_{w_1} + \Xv_{w_2} \right),  \left( \Xv_{w_1} - \Xv_{w_2} \right)$ are also independent random variables. Thus from Corollary \ref{co:tworv} (in Appendix \ref{sse:gau}) $\Xv_{w_1}, \Xv_{w_2}$ are Gaussians, say having the same distribution as $\Xv_w\sim \Nc(0,K\rq{})$. Since $w_1, w_2$ are arbitrary, all $\Xv_{w_i}$ are  Gaussians, having the same distribution as $\Xv_w$. Then
$$ {\rm \hat V}_\lav(K) = \sum_{i=1}^m p_*(w_i) \mathsf{t}_\lav(\Xv_{w_i}) = \sum_{i=1}^m p_*(w_i) \mathsf{t}_\lav(\Xv_w) = \mathsf{t}_\lav(\Xv_w). $$
Hence we obtain the following theorem.

\begin{theorem}
\label{th:gnnew}
There exists $\Xv_* \sim \Nc(0,K'), K' \nsd K$ such that ${\rm \hat V}_\lav(K) = \mathsf{t}_\lav(\Xv_*).$
\end{theorem}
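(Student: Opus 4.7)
The plan is to combine Claim \ref{cl:max-T} with Claim \ref{cl:twolet-T} and the Gaussian characterization from Corollary \ref{co:tworv}, following the same template as the proof of Theorem \ref{th:lvalt}. First, I would invoke Claim \ref{cl:max-T} to fix an optimizer $p_*(w,\xv)$ of ${\rm \hat V}_\lav(K)$ with alphabet size $|W| = m \leq \frac{t(t+1)}{2} + 1$. Write $\Xv_w$ for the zero-mean conditional distribution of $\Xv$ given $W = w$ and $K_w = \E(\Xv_w \Xv_w^T)$, so that $\sum_{w} p_*(w) K_w \nsd K$.

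Next I would apply Claim \ref{cl:twolet-T} to two i.i.d.\ copies of $p_*$. For each realization $\Wt = (w_1, w_2)$ the claim asserts that $\tilde{\Xv} = \frac{1}{\sqrt{2}}(\Xv_{w_1} + \Xv_{w_2})$ and $\Xv' = \frac{1}{\sqrt{2}}(\Xv_{w_1} - \Xv_{w_2})$ are conditionally independent, while by construction $\Xv_{w_1}$ and $\Xv_{w_2}$ are themselves independent. Corollary \ref{co:tworv} then forces $\Xv_{w_1}$ and $\Xv_{w_2}$ to be Gaussian; invoking the corollary across symmetric pairs in the support of $p_*$ pins down that they must in fact share the same covariance, giving a common law $\Nc(0, K')$ for all $\Xv_w$.

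Once all the conditional distributions collapse to a single Gaussian $\Nc(0, K')$, the functional identity
$$ {\rm \hat V}_\lav(K) = \sum_{w} p_*(w) \mathsf{t}_\lav(\Xv_w) = \mathsf{t}_\lav(\Xv_*) $$
for $\Xv_* \sim \Nc(0, K')$ follows immediately, and the constraint $\sum_w p_*(w) K_w \nsd K$ degenerates to $K' \nsd K$, completing the theorem.

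The step I expect to be most delicate is the transition from "each pair $(\Xv_{w_1}, \Xv_{w_2})$ is a pair of Gaussians" to "all $\Xv_w$ have a common law," which requires Corollary \ref{co:tworv} to deliver equality of covariances, not merely Gaussianity. This is a clean symmetry argument (the optimization is invariant under swapping $w_1 \leftrightarrow w_2$, which preserves $\tilde{\Xv}$ and negates $\Xv'$), but it is the only nonroutine piece beyond directly quoting the preceding claims. Everything else is a verbatim repetition of the structure of Theorem \ref{th:lvalt}, with $T_\lav$ playing the role of $S_\la$ and Claim \ref{cl:sep-T} replacing Claim \ref{cl:sep-S}.
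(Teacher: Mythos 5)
Your proposal is correct and follows essentially the same route as the paper: invoke Claim \ref{cl:max-T} for a finite-alphabet optimizer, apply Claim \ref{cl:twolet-T} to two i.i.d.\ copies to get conditional independence of $\tilde{\Xv}$ and $\Xv'$ given each $(w_1,w_2)$, and then use Corollary \ref{co:tworv} to conclude that every $\Xv_w$ is Gaussian with a common covariance $K'$, so the mixture collapses to $\mathsf{t}_\lav(\Xv_*)$ with $\Xv_* \sim \Nc(0,K')$ and $K' \nsd K$. The ``delicate'' step you flag is already delivered by Corollary \ref{co:tworv}, which gives identical covariances (not just Gaussianity) for each pair, exactly as the paper uses it.
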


\begin{corollary}
\label{co:tri2}
If $\Xv \sim \Nc(0,K)$ then there exists $\Xv_{1*} \sim \Nc(0,K_1)$ and an independent random variable $\Xv_{2*} \sim \Nc(0,K_2), K_1 + K_2=K\rq{} \nsd K$ such that
$T_\lav(\Xv) = \mathsf{t}_\lav(\Xv_{1*} + \Xv_{2*} ) = {\rm \hat V}_\lav(K)$ and $S_{\frac{\la_1 + \la_2}{\la_1}}(\Xv_{1*} + \Xv_{2*} ) = \mathsf{s}_{\frac{\la_1 + \la_2}{\la_1}}(\Xv_{1*}) = {\rm V}_{\frac{\la_1 + \la_2}{\la_1}}(K_1 + K_2)$.
\end{corollary}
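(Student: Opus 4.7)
The plan is to assemble this corollary from Theorem \ref{th:gnnew} and Corollary \ref{co:tri}, using the same ``condition on an independent Gaussian'' device that was used to derive Corollary \ref{co:tri} from Theorem \ref{th:lvalt}.

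First I would invoke Theorem \ref{th:gnnew} at covariance $K$ to obtain a centered Gaussian $\Xv_* \sim \Nc(0, K')$ with $K' \nsd K$ satisfying ${\rm \hat V}_\lav(K) = \mathsf{t}_\lav(\Xv_*)$. Since mutual informations through additive Gaussian channels are invariant under deterministic translations of the input, the same holds for each summand of $\mathsf{t}_\lav$ and, via its concave-envelope definition, for $S_{\frac{\la_1+\la_2}{\la_1}}$ as well. Letting $\Xv_0 \sim \Nc(0, K-K')$ be independent of $\Xv_*$, the random variable $\Xv_0+\Xv_*$ has the same law as $\Xv$, and using $W=\Xv_0$ as the auxiliary in the definition of the upper concave envelope gives
$$ T_\lav(\Xv) \;\geq\; \mathsf{t}_\lav(\Xv_0+\Xv_* \mid \Xv_0) \;=\; \mathsf{t}_\lav(\Xv_*) \;=\; {\rm \hat V}_\lav(K). $$
The reverse inequality $T_\lav(\Xv)\leq {\rm \hat V}_\lav(K)$ is immediate from the definition of ${\rm \hat V}_\lav$, so $T_\lav(\Xv) = {\rm \hat V}_\lav(K) = \mathsf{t}_\lav(\Xv_*)$.

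Next I would apply Corollary \ref{co:tri} with parameter $\la := \frac{\la_1+\la_2}{\la_1}$ to the Gaussian $\Xv_*$ itself, which produces $\Xv_{1*} \sim \Nc(0, K_1)$ with $K_1 \nsd K'$ such that
$$ S_{\frac{\la_1+\la_2}{\la_1}}(\Xv_*) \;=\; \mathsf{s}_{\frac{\la_1+\la_2}{\la_1}}(\Xv_{1*}) \;=\; {\rm V}_{\frac{\la_1+\la_2}{\la_1}}(K'). $$
Define $K_2 := K' - K_1 \psd 0$ and take $\Xv_{2*} \sim \Nc(0, K_2)$ independent of $\Xv_{1*}$; then $\Xv_{1*}+\Xv_{2*} \sim \Nc(0, K_1+K_2) = \Nc(0, K')$ is equal in law to $\Xv_*$. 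Substituting this decomposition into the two identities already established for $\Xv_*$ gives $T_\lav(\Xv) = \mathsf{t}_\lav(\Xv_{1*}+\Xv_{2*}) = {\rm \hat V}_\lav(K)$ and $S_{\frac{\la_1+\la_2}{\la_1}}(\Xv_{1*}+\Xv_{2*}) = \mathsf{s}_{\frac{\la_1+\la_2}{\la_1}}(\Xv_{1*}) = {\rm V}_{\frac{\la_1+\la_2}{\la_1}}(K_1+K_2)$, with $K_1 + K_2 = K' \nsd K$ as required.

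There is no serious obstacle: the corollary is essentially a packaging of Theorem \ref{th:gnnew} with an inner invocation of Corollary \ref{co:tri}. The only point that deserves verification is the translation invariance of $\mathsf{t}_\lav$ (and hence of the $\mathsf{t}_\lav(\cdot \mid W)$ construction) used to conclude $T_\lav(\Xv) \geq \mathsf{t}_\lav(\Xv_*)$, but this is standard for additive Gaussian channels since shifting $\Xv$ by a constant merely shifts $\Yv_1, \Yv_2$ by the corresponding constants, leaving every entropy and mutual information unchanged.
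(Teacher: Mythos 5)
Your proof is correct and follows essentially the same route as the paper's: the upper bound $T_\lav(\Xv)\le{\rm \hat V}_\lav(K)=\mathsf{t}_\lav(\Xv_*)$ from Theorem \ref{th:gnnew}, the lower bound by conditioning on an independent Gaussian $\Nc(0,K-K')$ so that $T_\lav(\Xv)\ge\mathsf{t}_\lav(\Xv\mid\Xv_0)=\mathsf{t}_\lav(\Xv_*)$, and then splitting $\Xv_*$ via Corollary \ref{co:tri}. You simply spell out the translation-invariance and the final substitution in more detail than the paper does.
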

\begin{proof}
Clearly from Theorem \ref{th:gnnew} and definition of $ {\rm \hat V}_\lav(K)$ we have 
$$ T_\lav(\Xv) \leq {\rm \hat V}_\lav(K) = \mathsf{t}_\lav(\Xv_*). $$
On the other hand let $\Xv' \sim \Nc(0, K-K')$ be independent of $\Xv_*$. Note that $\Xv \sim \Xv' + \Xv_*$ and
\[ T_\lav(\Xv) =\sup_W \mathsf{t}_\lav(\Xv|W) \geq \mathsf{t}_\lav(\Xv|\Xv') = \mathsf{t}_\lav(\Xv_*). \]
Now splitting of $\Xv_*$ into $\Xv_{1*}$, $\Xv_{2*}$ is possible by Corollary \ref{co:tri}.
\end{proof}

\section{Two capacity regions}

\subsection{Vector Gaussian Broadcast channel with private messages}
\label{sse:pm}

Consider a vector Gaussian broadcast channel with only private message requirements.
Let $\Cc$ be the capacity region. For $\la > 1$ we will seek to maximize the following expression
$$ \max_{(R_1, R_2) \in \Cc} R_1 + \la R_2. $$
The case for $\la < 1$ is dealt with similarly (with roles of $(Y_1,Y_2)$ interchanged). The case for $\la=1$ follows by continuity.

Here we consider the Korner-Marton outer bound and Marton\rq{}s inner bound (both from \cite{mar79}) to the capacity region of the broadcast channel. 
\begin{bound}
The union of rate pairs $(R_1, R_2)$ satisfying
\begin{align*}
R_2 &\leq I(V;Y_2) \\
R_1  &\leq I(X;Y_1) \\
R_1 + R_2 & \leq I(V;Y_2) + I(X;Y_1|V)
\end{align*}
over all $V \to X \to (Y_1,Y_2)$ forms an outer bound to the broadcast channel.
\end{bound}
Denote this region as $\Oc$.

\begin{bound}
The union of rate pairs $(R_1, R_2)$ satisfying
\begin{align*}
R_2 &\leq I(V;Y_2) \\
R_1  &\leq I(U;Y_1) \\
R_1 + R_2 & \leq I(U;Y_1) + I(V;Y_2) - I(U;V)
\end{align*}
over all $(U,V) \to X \to (Y_1,Y_2)$ forms an inner bound to the broadcast channel.
\end{bound}
Denote this region as $\Ic$.

One can adapt these inner and outer bounds to additive Gaussian setting by introducing a power constraint, i.e. an upper bound on  the trace of the covariance matrix, $\textrm{tr}(K)$.
However let us put a covariance constraint on $\Xv$ and denote $\Ic_K, \Cc_K, \Oc_K$ to be the corresponding inner bound, capacity region, and the outer bound.

Clearly we have
$$  \max_{(R_1, R_2) \in \Ic_K} R_1 + \la R_2 \leq  \max_{(R_1, R_2) \in \Cc_K} R_1 + \la R_2 \leq  \max_{(R_1, R_2) \in \Oc_K} R_1 + \la R_2. $$
To exhibit  the capacity region we will show that 
$$ \max_{(R_1, R_2) \in \Oc_K} R_1 + \la R_2  \leq  \max_{(R_1, R_2) \in \Ic_K} R_1 + \la R_2. $$
Thus Marton\rq{}s inner bound and Korner-Marton\rq{}s outer bound will match in this setting, and therefore also with the usual trace constraint.

Observe that
\begin{align*}
 \max_{(R_1, R_2) \in \Oc_K} R_1 + \la R_2  
 & \leq \sup_{\substack{V \to \Xv \to (\Yv_1,\Yv_2)\\ \E(\Xv \Xv^T) \nsd K}} \la I(V;\Yv_2) + I(\Xv;\Yv_1|V) \\
 & = \sup_{\substack{V \to \Xv \to (\Yv_1,\Yv_2)\\ \E(\Xv \Xv^T) \nsd K}} \la I(\Xv;\Yv_2) + I(\Xv;\Yv_1|V) - \la I(\Xv;\Yv_2|V) \\
 & \leq  \max_{\Xv: \E(\Xv \Xv^T) \nsd K}  \la I(\Xv;\Yv_2) +  \sup_{\substack{V \to \Xv \to (\Yv_1,\Yv_2)\\ \E(\Xv \Xv^T) \nsd K}}  I(\Xv;\Yv_1|V) - \la I(\Xv;\Yv_2|V) \\
 & \leq  \max_{\Xv: \E(\Xv \Xv^T) \nsd K}  \la I(\Xv;\Yv_2) + {\rm V}_\la(K).
\end{align*}

We know that the first term is maximized (Section \ref{sse:fun-S}) when $\Xv \sim \Nc(0,K)$ and ${\rm V}_\la(K)$ is achieved by  $\mathsf{s}_\la(\Xv_*)$ where $\Xv_* \sim \Nc(0,K\rq{}), K\rq{} \nsd K$.
Now let $V_* \sim \Nc(0,K-K\rq{})$ be independent of $\Xv_*$ and let $\Xv=V_* + \Xv_*$. Observe that this choice attains both maxima simultaneously.
Hence
$$  \max_{(R_1, R_2) \in \Oc_K} R_1 + \la R_2 \leq \la I(V_*;\Yv_2) + I(\Xv;\Yv_1|V_*) =  \la I(V_*;\Yv_2) + I(\Xv_*;\Yv_1|V_*). $$

\begin{lemma}[Dirty paper coding]
Let $\Xv = V_* + \Xv_*$ and $V_*$, $\Xv_*$ be independent Gaussians with covariances $K-K\rq{},K\rq{}$ respectively. Then there exists $U_*$ jointly Gaussian with $V_*$ such that
$$ I(\Xv;\Yv_1|V_*) = I(U_*; \Yv_1) - I(U_*;V_*). $$
Here $\Yv_1 = G \Xv + \Zv$, where $\Zv \sim \Nc(0,I)$ is independent of $V_*, \Xv_*$.
\label{le:dpc}
\end{lemma}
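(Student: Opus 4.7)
The plan is to apply the vector-valued analogue of Costa's dirty paper coding. I would define the auxiliary as $U_* = \Xv_* + B V_*$ for a deterministic matrix $B \in \Real^{t \times t}$ to be chosen below. Since $\Xv_*$, $V_*$, and $\Zv$ are jointly Gaussian and independent, the triple $(U_*, V_*, \Yv_1)$ is jointly Gaussian and all relevant mutual informations reduce to determinants of covariance matrices.

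The first step is the chain-rule identity
$$ I(U_*; \Yv_1) - I(U_*; V_*) = I(U_*; \Yv_1 \mid V_*) - I(U_*; V_* \mid \Yv_1). $$
For any $B$, the first term on the right already equals the left-hand side of the lemma: given $V_*$, the shift $B V_*$ is deterministic and $\Yv_1 - G V_* = G \Xv_* + \Zv$, so
$$ I(U_*; \Yv_1 \mid V_*) = I(\Xv_*; G \Xv_* + \Zv) = I(\Xv; \Yv_1 \mid V_*), $$
using $\Xv_* \perp V_*$ and $\Xv = V_* + \Xv_*$. Hence it suffices to choose $B$ so that $I(U_*; V_* \mid \Yv_1) = 0$.

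For jointly Gaussian vectors this last condition is equivalent to the conditional cross-covariance $\Cov(U_*, V_* \mid \Yv_1)$ vanishing. Writing $K_0 := K - K'$ and $M := G^T (G K G^T + I)^{-1} G$ and expanding the standard Gaussian conditioning formula, the condition becomes the linear matrix equation
$$ B K_0 (I - M K_0) = K' M K_0, $$
which is the vector analogue of Costa's scalar scaling $\alpha = P/(P+N)$. The main obstacle is verifying solvability of this equation. Using the Woodbury identity with $H := G K_0^{1/2}$, one computes
$$ K_0^{1/2} M K_0^{1/2} = I - \bigl( I + H^T (G K' G^T + I)^{-1} H \bigr)^{-1}, $$
which is strictly less than $I$ in the positive-definite order, so $I - M K_0$ is invertible and $B$ is determined explicitly (e.g., $B = K' M (I - K_0 M)^{-1}$). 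The degenerate case where $K_0$ is singular is handled by restricting to its range, since only the action of $B$ on that range affects $U_*$, or by a continuity argument in $K_0$ on the space of covariance matrices.
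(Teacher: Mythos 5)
Your proposal is correct and follows essentially the same route as the paper: the paper simply sets $U_* = \Xv_* + A V_*$ with $A = K'G^T(GK'G^T+I)^{-1}$ and outsources the verification to Costa / El Gamal--Kim Ch.~9.5, whereas you derive the precoder from the condition $I(U_*;V_*\mid\Yv_1)=0$ via the chain-rule identity, and your explicit $B = K'M(I-K_0M)^{-1}$ indeed simplifies to $K'G^T(GK'G^T+I)^{-1}G$, i.e.\ the paper's matrix composed with $G$ (consistent with the interference at the receiver being $GV_*$). All steps check out --- the identity $I(U_*;\Yv_1)-I(U_*;V_*)=I(U_*;\Yv_1\mid V_*)-I(U_*;V_*\mid\Yv_1)$, the reduction to a vanishing conditional cross-covariance, and the Woodbury bound $K_0^{1/2}MK_0^{1/2}\prec I$ --- and in fact your closed-form $B$ satisfies $BK_0(I-MK_0)=K'MK_0$ even when $K_0$ is singular, so the separate degenerate-case discussion is not needed.
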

\begin{proof}
This well-known identification stems from the celebrated  paper\cite{cos83}. Set $U_* = \Xv_* + AV_*$ where
$ A = K\rq{} G^T(GK\rq{}G^T + I)^{-1}$ and this works (see Chapter 9.5 of \cite{elk12}).
\end{proof}

Now using $U_*$ as in the above lemma, we obtain
\begin{align*}
 \max_{(R_1, R_2) \in \Oc_K} R_1 + \la R_2 & \leq \la I(V_*;\Yv_2) + I(\Xv_*;\Yv_1|V_*) \\
 & =  \la I(V_*;\Yv_2) +   I(U_*; \Yv_1) - I(U_*;V_*).
 \end{align*}
 
However using Marton\rq{}s inner bound any rate pair satisfying $R_2 = I(V;\Yv_2)$, $R_1 = I(U;\Yv_1) - I(U;V)$ such that $\E(\Xv\Xv^T)\nsd K$ belongs to $\Ic_K$. Hence 
$$  \max_{(R_1, R_2) \in \Oc_K} R_1 + \la R_2 \leq  \la I(V_*;\Yv_2) +   I(U_*; \Yv_1) - I(U_*;V_*) \leq \max_{(R_1, R_2) \in \Ic_K} R_1 + \la R_2. $$
Thus the inner and outer bound match for vector Gaussian product channels establishing its capacity region.

\subsection{Vector Gaussian Broadcast channel with common message}
\label{sse:cm}
Consider a vector Gaussian broadcast channel with common and  private message requirements.
Let $\Cc$ be the capacity region. Assume $\la_0 > \la_1 + \la_2$. We will seek to maximize the following expression
$$ \max_{(R_0,R_1, R_2) \in \Cc} \la_0 R_0 + \la_1 R_1 + (\la_1 + \la_2) R_2. $$
\begin{remark}
The case of maximizing $ \la_0 R_0 + (\la_1 + \la_2) R_1 + \la_2 R_2 $ can be dealt with similarly. On the other hand if $\la_0 \leq (\la_1 + \la_2)$ then it suffices to consider the private messages capacity region. Actually the setting $\la_0 \geq 2 \la_1 + \la_2$ can be deduced from the degraded message sets capacity region and this is also known; however this will be subsumed in our treatment. Hence the setting we are considering is the only interesting unestablished case.
\end{remark}
  
In this section we consider the UVW outer bound\cite{nai11} and Marton\rq{}s inner bound\cite{mar79} to the capacity region of the broadcast channel with private and common messages.
\begin{bound} [UVW outer bound]
The union of rate triples $(R_0,R_1, R_2)$ satisfying
{\small \begin{align*}
R_0 &\leq \min \{I(W;Y_1), I(W;Y_2) \} \\
R_0 + R_1 &\leq  \min \{I(W;Y_1), I(W;Y_2) \} + I(U;Y_1|W) \\
R_0 + R_2 &\leq   \min \{I(W;Y_1), I(W;Y_2) \} + I(V;Y_2|W) \\
R_0 + R_1 + R_2 &\leq  \min \{I(W;Y_1), I(W;Y_2) \} + I(V;Y_2|W) + I(X;Y_1|V,W)\\ 
R_0 + R_1 + R_2 &\leq  \min \{I(W;Y_1), I(W;Y_2) \} + I(U;Y_1|W) + I(X;Y_2|U,W)
\end{align*}}\noindent 
over all $(U,V,W) \to X \to (Y_1,Y_2)$ forms an outer bound to the broadcast channel.
\end{bound}
As before, denote this region as $\Oc$.

\begin{bound}[Marton\rq{}s inner bound]
The union of rate pairs $(R_1, R_2)$ satisfying
{\small \begin{align*}
R_0 &\leq  \min \{I(W;Y_1), I(W;Y_2) \} \\
R_0 + R_1 &\leq I(U,W;Y_1) \\
R_0 + R_2  &\leq I(V,W;Y_2) \\
R_0 + R_1 + R_2 & \leq  \min \{I(W;Y_1), I(W;Y_2) \} + I(U;Y_1|W) + I(V;Y_2|W) - I(U;V|W)
\end{align*}}\noindent 
over all $(U,V) \to X \to (Y_1,Y_2)$ forms an inner bound to the broadcast channel.
\end{bound}
Denote this region as $\Ic$.
 
 \medskip
 
Impose a covariance constraint $K$ on $\Xv$ and denote $\Ic_K, \Cc_K, \Oc_K$ to be the corresponding inner bound, capacity region, and the outer bound respectively. Trivially we have
{\small  \begin{align*} \max_{(R_0,R_1, R_2) \in \Ic_K} \la_0 R_0 + \la_1 R_1 + (\la_1 + \la_2) R_2 &\leq  \max_{(R_0,R_1, R_2) \in \Cc_K} \la_0 R_0 + \la_1 R_1 + (\la_1 + \la_2) R_2 \\
& \leq  \max_{(R_0,R_1, R_2) \in \Oc_K} \la_0 R_0 + \la_1 R_1 + (\la_1 + \la_2) R_2. \end{align*}}

For any $\alpha \in [0,1]$ observe that (from first, third, and fourth constraints of UVW outer bound)
{\small \begin{align*}
&\max_{(R_0,R_1, R_2) \in \Oc_K} \la_0 R_0 + \la_1 R_1 + (\la_1 + \la_2) R_2 \\
& \quad \leq  \sup_{\substack{(V,W) \to \Xv \to (\Yv_1,\Yv_2)\\ \E(\Xv \Xv^T) \nsd K}} \alpha \la_0 I(W;\Yv_1) + \bar \alpha \la_0 I(W;\Yv_2) + (\la_1 + \la_2) I(V;\Yv_2|W) + \la_1 I(\Xv;\Yv_1|V,W)\\
& \quad =  \sup_{\substack{(V,W) \to \Xv \to (\Yv_1,\Yv_2)\\ \E(\Xv \Xv^T) \nsd K}}  \alpha \la_0 I(X;\Yv_1) + \bar \alpha \la_0 I(X;\Yv_2) -  \alpha \la_0 I(\Xv;\Yv_1|W) - \bar \alpha \la_0 I(\Xv;\Yv_2|W) \\
& \qquad \qquad  \quad \qquad \qquad +   (\la_1 + \la_2) I(\Xv;\Yv_2|W) + \la_1 I(\Xv;\Yv_1|V,W) -  (\la_1 + \la_2) I(\Xv;\Yv_2|V,W) \\
& \quad \leq \sup_{\substack{W \to \Xv \to (\Yv_1,\Yv_2)\\ \E(\Xv \Xv^T) \nsd K}}  \alpha \la_0 I(X;\Yv_1) + \bar \alpha \la_0 I(X;\Yv_2) -  \alpha \la_0 I(\Xv;\Yv_1|W) - \bar \alpha \la_0 I(\Xv;\Yv_2|W) \\
& \qquad \qquad \quad  \qquad \qquad +   (\la_1 + \la_2) I(\Xv;\Yv_2|W) +  \la_1 S_{\frac{\la_1 + \la_2}{\la_1}}(\Xv|W) \\
&  \quad \leq \max_{ \E(\Xv \Xv^T) \nsd K} \left( \alpha \la_0 I(X;\Yv_1) + \bar \alpha \la_0 I(X;\Yv_2)\right) +  \max_{\substack{W \to \Xv \to (\Yv_1,\Yv_2)\\ \E(\Xv \Xv^T) \nsd K}}\mathsf{t}_{\lav}(\Xv|W) \\
&  \quad \leq \max_{ \E(\Xv \Xv^T) \nsd K}  \left(\alpha \la_0 I(X;\Yv_1) + \bar \alpha \la_0 I(X;\Yv_2)\right)  + {\rm \hat V}_\lav(K).
\end{align*}}

We know that the first term is maximized (Section \ref{sse:fun-T}) when $\Xv \sim \Nc(0,K)$ and ${\rm\hat V}_\lav(K)$ is achieved by  $\mathsf{t}_\lav(\Xv_{1*}+\Xv_{2*})$ where $\Xv_{1*}, \Xv_{2*}$ are independent and $\Xv_{1*} \sim \Nc(0,K_1), \Xv_{2*} \sim \Nc(0,K_2), K_1 + K_2 \nsd K$, and $S_{\frac{\la_1 + \la_2}{\la_1}}(\Xv_{1*} + \Xv_{2*}) = \mathsf{s}_{\frac{\la_1 + \la_2}{\la_1}}(\Xv_{1*})$.  See Theorem \ref{th:gnnew} and Corollary \ref{co:tri2}.
Now let $W_* \sim \Nc(0,K - (K_1 + K_2))$ be independent of $\Xv_{1*},\Xv_{2*}$ and let $\Xv=W_* + \Xv_{1*} + \Xv_{2*}$. Observe that this choice attains both maxima simultaneously. For conforming to more standard notation, let us call $V_* = \Xv_{2*}$, thus $\Xv=W_* + \Xv_{1*} + V_*$. Thus
{\small \begin{align*}
&\max_{(R_0,R_1, R_2) \in \Oc_K} \la_0 R_0 + \la_1 R_1 + (\la_1 + \la_2) R_2 \\
& \quad \leq \alpha \la_0 I(X;\Yv_1) + \bar \alpha \la_0 I(X;\Yv_2) -  \alpha \la_0 I(\Xv;\Yv_1|W_*) - \bar \alpha \la_0 I(\Xv;\Yv_2|W_*) \\
& \qquad \qquad  \quad \qquad \qquad +   (\la_1 + \la_2) I(\Xv;\Yv_2|W_*) + \la_1 I(\Xv;\Yv_1|V_*,W_*) -  (\la_1 + \la_2) I(\Xv;\Yv_2|V_*,W_*) \\
& \quad =   \alpha\la_0 I(W_*;\Yv_1) + \bar \alpha \la_0 I(W_*;\Yv_2) + (\la_1 + \la_2) I(V_*;\Yv_2|W_*) + \la_1 I(\Xv;\Yv_1|V_*,W_*)\\
& \quad =   \alpha\la_0 I(W_*;\Yv_1) + \bar \alpha \la_0 I(W_*;\Yv_2) + (\la_1 + \la_2) I(V_*;\Yv_2|W_*) + \la_1 I(\Xv_{1*};\Yv_1|V_*,W_*)
\end{align*}}

Now using Lemma \ref{le:dpc} choose $U_* = X_{1*} + \tilde{A}V_*$ as before to have
$$ I(\Xv_{1*};\Yv_1|V_*,W_*) = I(U_*; \Yv_1|W_*) - I(U_*;V_*|W_*). $$

Hence
{\small \begin{align*}
&\max_{(R_0,R_1, R_2) \in \Oc_K} \la_0 R_0 + \la_1 R_1 + (\la_1 + \la_2) R_2 \\
& \leq \alpha\la_0 I(W_*;\Yv_1) + \bar \alpha \la_0 I(W_*;\Yv_2) + (\la_1 + \la_2) I(V_*;\Yv_2|W_*) + \la_1( I(U_*; \Yv_1|W_*) - I(U_*;V_*|W_*)) \\
& \leq \sup_{\substack{(U,V,W) \to \Xv \to (\Yv_1,\Yv_2)\\ \E(\Xv \Xv^T) \nsd K}}  \alpha\la_0 I(W;\Yv_1) + \bar \alpha \la_0 I(W;\Yv_2) + (\la_1 + \la_2) I(V;\Yv_2|W) + \la_1( I(U; \Yv_1|W) - I(U;V|W))
\end{align*}}\noindent 
Since the above holds for all $\alpha\in[0,1]$, we have
{\small \begin{align*}
&\max_{(R_0,R_1, R_2) \in \Oc_K} \la_0 R_0 + \la_1 R_1 + (\la_1 + \la_2) R_2 \\
&\leq \min_{\alpha \in [0,1]}  \sup_{\substack{(U,V,W) \to \Xv \to (\Yv_1,\Yv_2)\\ \E(\Xv \Xv^T) \nsd K}}  \alpha\la_0 I(W;\Yv_1) + \bar \alpha \la_0 I(W;\Yv_2) + (\la_1 + \la_2) I(V;\Yv_2|W) + \la_1 I(U; \Yv_{1}|W) - \la_1 I(U;V|W).
\end{align*}}

To complete the proof that the inner and outer bounds match we present the following Claim \ref{cl:mmin} (essentially established in \cite{ggny11b}). We will defer the proof of this claim to the Appendix \ref{sse:mm}.
 \begin{claim}
 \label{cl:mmin}
 We claim that
 {\small \begin{align*}
   & \min_{\alpha \in [0,1]}  \sup_{\substack{(U,V,W) \to \Xv \to (\Yv_1,\Yv_2)\\ \E(\Xv \Xv^T) \nsd K}}  \alpha\la_0 I(W;\Yv_1) + \bar \alpha \la_0 I(W;\Yv_2) + (\la_1 + \la_2) I(V;\Yv_2|W)  + \la_1 I(U; \Yv_{1}|W) - \la_1 I(U;V|W) \\
   & =  \sup_{\substack{(U,V,W) \to \Xv \to (\Yv_1,\Yv_2)\\ \E(\Xv \Xv^T) \nsd K}}  \min_{\alpha \in [0,1]}  \alpha\la_0 I(W;\Yv_1) + \bar \alpha \la_0 I(W;\Yv_2) + (\la_1 + \la_2) I(V;\Yv_2|W)  + \la_1 I(U; \Yv_{1}|W) - \la_1 I(U;V|W) \\
   & = \sup_{\substack{(U,V,W) \to \Xv \to (\Yv_1,\Yv_2)\\ \E(\Xv \Xv^T) \nsd K}} \la_0 \min\{ I(W;\Yv_1), I(W;\Yv_2) \} +  (\la_1 + \la_2) I(V;\Yv_2|W) + \la_1 I(U; \Yv_{1}|W) - \la_1 I(U;V|W).
   \end{align*}}
 \end{claim}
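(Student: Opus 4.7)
The plan is to handle the two equalities separately. The second equality (middle equals right) is immediate: for any fixed $(U,V,W,\Xv)$, the $\alpha$-dependent part of the objective, $\alpha\la_0 I(W;\Yv_1) + \bar\alpha\la_0 I(W;\Yv_2)$, is affine in $\alpha\in[0,1]$, so its minimum equals $\la_0\min\{I(W;\Yv_1),I(W;\Yv_2)\}$. Taking the supremum over $(U,V,W,\Xv)$ preserves this identity.

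For the first equality, weak duality gives $\sup\min\leq\min\sup$. For the reverse direction, I would introduce $g(\alpha):=\sup_{(U,V,W,\Xv)} f(\alpha;U,V,W,\Xv)$, where $f$ denotes the full inner objective. Because $f$ is affine in $\alpha$ for each fixed distribution, $g$ is a pointwise supremum of affine functions, hence convex on $[0,1]$, and attains its minimum at some $\alpha^*\in[0,1]$. The goal is to exhibit an optimizer $(U^*,V^*,W^*,\Xv^*)$ satisfying both $f(\alpha^*;\cdot)=g(\alpha^*)$ and the matching slope condition $\sigma^*:=\la_0[I(W^*;\Yv_1)-I(W^*;\Yv_2)]\geq 0$ if $\alpha^*=0$, $\sigma^*=0$ if $\alpha^*\in(0,1)$, and $\sigma^*\leq 0$ if $\alpha^*=1$. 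For such a distribution, the affine map $\alpha\mapsto f(\alpha;\cdot)$ is minimized on $[0,1]$ exactly at $\alpha^*$, so $\min_\alpha f(\alpha;\cdot)=f(\alpha^*;\cdot)=g(\alpha^*)$, yielding the desired $\sup\min\geq\min\sup$.

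To produce such an optimizer in the interior case, I would use a time-sharing argument. By convexity of $g$ and the first-order minimum condition, there must exist optimizers at $\alpha^*$ with slopes of non-negative and non-positive sign (or else a single optimizer with slope zero, in which case we are done). Given two such optimizers, I introduce an independent time-sharing variable $Q$ and form $W'=(W,Q)$, $U'=U$, $V'=V$, $\Xv'=\Xv$, under the $Q$-mixture distribution. The Markov chain and the covariance constraint are preserved, and all conditional mutual informations given $W'$ split as convex combinations in $Q$; the unconditional terms $I(W';\Yv_i)$ acquire extra summands $I(Q;\Yv_i)\geq 0$. Because the time-shared distribution must obey $f(\alpha^*;\cdot)\leq g(\alpha^*)$ (by definition of the supremum) while manifestly $f(\alpha^*;\cdot)\geq g(\alpha^*)+\alpha^*\la_0 I(Q;\Yv_1)+\bar\alpha^*\la_0 I(Q;\Yv_2)$ (convex combination plus extras), these extras are forced to vanish. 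The slope of the new optimizer is then the genuine convex combination of the component slopes, and a suitable choice of $P(Q=\cdot)$ renders it zero. The boundary cases $\alpha^*\in\{0,1\}$ are simpler: the one-sided minimum condition already furnishes an optimizer with the correctly signed slope.

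The main obstacle will be formalizing the attainment and derivative arguments in the infinite-dimensional distribution setting, namely establishing that the supremum defining $g(\alpha^*)$ is actually attained (parallel to Claims \ref{cl:max-S} and \ref{cl:max-T}) and identifying the one-sided derivatives of $g$ with the appropriate extrema of achievable slopes over the set of optimizers. These are the technical ingredients handled in \cite{ggny11b} and deferred by the authors to the Appendix.
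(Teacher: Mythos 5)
Your handling of the second equality and of the weak-duality direction of the first is correct, and your central convexification device --- time-sharing via $W'=(W,Q)$, under which the conditional terms mix linearly and the extra $I(Q;\Yv_i)$ contributions are forced to vanish at an interior $\alpha^*$ --- is exactly the mechanism the paper's proof relies on. But the route you take from there is genuinely different from the paper's and carries a real unresolved burden. You run a Danskin-type argument at the minimizer $\alpha^*$ of $g(\alpha)=\sup_p f(\alpha;p)$: this needs (i) attainment of the supremum over $(U,V,W,\Xv)$ at $\alpha^*$, and (ii) identification of the one-sided derivatives $g'_{\pm}(\alpha^*)$ with extremal slopes over the set of optimizers, so that optimizers with slopes of both signs exist. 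Neither is available off the shelf here: the paper's attainment results (Claims \ref{cl:max-S} and \ref{cl:max-T}) concern the functionals $S_\la$ and $T_\lav$, not this supremum over three auxiliaries, and each required a lengthy weak-convergence argument; and without attainment, the passage from $g'_-(\alpha^*)\le 0\le g'_+(\alpha^*)$ to the existence of exact optimizers with nonpositive and nonnegative slope does not follow. Your attribution of these missing steps to \cite{ggny11b} is also not quite right: what that reference supplies (Corollary \ref{coro:mm}) is a minimax identity whose only hypothesis is convexity of the achievable region $\Ac=\{(a_1,a_2):a_i\le T_i(p)\}$, where $T_1,T_2$ are the objectives at $\alpha=1$ and $\alpha=0$. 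The paper's proof simply verifies that convexity by the same $W'=(W,Q)$ time-sharing you use and then invokes the corollary; no attainment or subdifferential analysis is needed, since the separating-hyperplane proof of such minimax statements works directly with suprema. Your proposal is therefore salvageable by redirecting the time-sharing argument onto the convex set $\Ac$ and citing the minimax corollary, but as written the interior-case construction rests on technical facts that are neither proved nor actually provided by the cited source.
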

 
Now using Marton\rq{}s inner bound we can always achieve the following triples: $R_0 =  \min\{ I(W;\Yv_1), I(W;\Yv_2) \}$, $R_2 = I(V;\Yv_2|W)$, $R_1 =  I(U; \Yv_{1}|W) -  I(U;V|W).$ 
Hence
 {\small \begin{align*}
 &\max_{(R_0,R_1, R_2) \in \Oc_K} \la_0 R_0 + \la_1 R_1 + (\la_1 + \la_2) R_2 \\
 & \leq  \sup_{\substack{(U,V,W) \to \Xv \to (\Yv_1,\Yv_2)\\ \E(\Xv \Xv^T) \nsd K}} \la_0 \min\{ I(W;\Yv_1), I(W;\Yv_2) \} +  (\la_1 + \la_2) I(V;\Yv_2|W) + \la_1 I(U; \Yv_{1}|W) - \la_1 I(U;V|W) \\
 & \leq \max_{(R_0,R_1, R_2) \in \Ic_K} \la_0 R_0 + \la_1 R_1 + (\la_1 + \la_2) R_2.
 \end{align*}}
 
Hence Marton\rq{}s inner bound and UVW outer bound match and further the boundary is achieved via Gaussian signaling. To get a explicit characterization of the Gaussian signaling region (established as capacity here) please see the region given by equations $(2)-(4)$ in \cite{wss06a}.
 
\section{Conclusion}
We developed a new method to show the optimality of Gaussian distributions. We illustrated this technique for three examples and computed the capacity region of the two-receiver vector Gaussian broadcast channel with common and private messages. We can see several other problems where this technique can have immediate impact. Some of the mathematical tools and results in the Appendix can also be of independent interest.

\section*{Acknowledgement}
A lot of this work was motivated by the work on the discrete memoryless broadcast channel; a lot of which was jointly developed with Amin Gohari. The authors are also grateful to Venkat Anantharam, Abbas El Gamal, Amin Gohari, and Young-Han Kim  for their comments on early drafts and suggestions on improving the presentation. 

The work of Chandra Nair was partially supported by the following grants from the University Grants Committee of the Hong Kong Special Administrative Region, China: a) (Project No. AoE/E-02/08), b) GRF Project 415810. He also acknowledges the support from the Institute of Theoretical Computer Science and Communications (ITCSC) at the Chinese University of Hong Kong.

\bibliographystyle{amsplain}
\bibliography{mybiblio}

\providecommand{\bysame}{\leavevmode\hbox to3em{\hrulefill}\thinspace}
\providecommand{\MR}{\relax\ifhmode\unskip\space\fi MR }
\providecommand{\MRhref}[2]{%
  \href{http://www.ams.org/mathscinet-getitem?mr=#1}{#2}
}
\providecommand{\href}[2]{#2}
\begin{thebibliography}{10}

\bibitem{boo85}
Dennis~D Boos, \emph{A converse to scheffe's theorem}, Annals of Statistics
  \textbf{13} (1985), no.~1, 423--427.

\bibitem{bun34}
L.N.H. Bunt, \emph{Bijdrage tot de theorie der convexe puntverzamelingen},
  Ph.D. thesis, Univ. Groningne, Amsterdam, 1934.

\bibitem{cos83}
M.~Costa, \emph{Writing on dirty paper (corresp.)}, Information Theory, IEEE
  Transactions on \textbf{29} (1983), no.~3, 439 -- 441.

\bibitem{cov72}
T~Cover, \emph{Broadcast channels}, IEEE Trans. Info. Theory \textbf{IT-18}
  (January, 1972), 2--14.

\bibitem{dur96}
R.~Durrett, \emph{Probability: Theory and examples}, second ed., Duxbury Press,
  1996.

\bibitem{elk12}
Abbas~El Gamal and Young-Han Kim, \emph{Network information theory}, Cambridge
  University Press, 2012.

\bibitem{ggny11b}
Y~Geng, A~Gohari, C~Nair, and Y~Yu, \emph{The capacity region of classes of
  product broadcast channels}, Proceedings of IEEE International Symposium on
  Information Theory (2011), 1549--1553.

\bibitem{gho62}
S.~G. Ghurye and Ingram Olkin, \emph{A characterization of the multivariate
  normal distribution}, The Annals of Mathematical Statistics \textbf{33}
  (1962), no.~2, pp. 533--541 (English).

\bibitem{goh04}
Mahesh Godavarti and Alfred~O. Hero, \emph{Convergence of differential
  entropies}, IEEE Transactions on Information Theory \textbf{50} (2004),
  no.~1, 171--176.

\bibitem{liv07}
Tie Liu and P.~Viswanath, \emph{An extremal inequality motivated by
  multiterminal information-theoretic problems}, Information Theory, IEEE
  Transactions on \textbf{53} (2007), no.~5, 1839 --1851.

\bibitem{mar79}
K~Marton, \emph{A coding theorem for the discrete memoryless broadcast
  channel}, IEEE Trans. Info. Theory \textbf{IT-25} (May, 1979), 306--311.

\bibitem{nai11}
C~Nair, \emph{A note on outer bounds for broadcast channel}, Presented at
  International Zurich Seminar (2010).

\bibitem{prs09}
V.~Yu Protasov and M.~E. Shirokov, \emph{Generalized compactness in linear
  spaces and its applications}, MATHEMATICS \textbf{200} (2009), no.~5,
  697--722.

\bibitem{wss06}
H.~Weingarten, Y.~Steinberg, and S.~Shamai, \emph{The capacity region of the
  gaussian multiple-input multiple-output broadcast channel}, Information
  Theory, IEEE Transactions on \textbf{52} (2006), no.~9, 3936 --3964.

\bibitem{wss06a}
\bysame, \emph{On the capacity region of the multi-antenna broadcast channel
  with common messages}, Information Theory, 2006 IEEE International Symposium
  on, july 2006, pp.~2195 --2199.

\end{thebibliography}

\appendix

\section{Some known results}

\subsection{A characterization of Gaussian distribution}
\label{sse:gau}
\begin{theorem}[Theorem 1 in \cite{gho62}]
\label{th:gho}
Let $\Xv_1,..,\Xv_n$ be $n$ mutually independent $t$-dimensional random column vectors, and let $A_1,..,A_n$ and $B_1,...,B_n$ be non-singular $t \times t$ matrices. If $\sum_{i=1}^n A_i \Xv_i$ is independent of $\sum_{i=1}^n B_i \Xv_i$, then the $\Xv_i$ are normally distributed.
\end{theorem}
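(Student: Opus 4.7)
The plan is to translate the independence hypothesis into a functional equation for the logarithms of the characteristic functions, and then use finite-difference arguments to force each such logarithm to be a quadratic polynomial on a neighborhood of the origin; a standard extension argument then identifies each $\Xv_i$ as Gaussian.

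First I would reduce to a simpler form by the change of variables $\Xv_i \mapsto A_i \Xv_i$, which preserves Gaussianity because $A_i$ is invertible. The hypothesis becomes that $U := \sum_i \Xv_i$ is independent of $V := \sum_i C_i \Xv_i$, where $C_i := B_i A_i^{-1}$ is invertible. Let $\phi_i(u) := \E[\exp(i u^T \Xv_i)]$. By continuity and $\phi_i(0)=1$, choose a convex neighborhood $N$ of the origin on which each $\phi_i$ is non-zero, so $\psi_i := \log \phi_i$ is a well-defined continuous function on $N$. Equating the joint characteristic function of $(U,V)$ with the product of the marginals yields, for $s,t$ small,
\begin{equation}
\sum_{i=1}^n \psi_i(s + C_i^T t) \;=\; \sum_{i=1}^n \psi_i(s) \;+\; \sum_{i=1}^n \psi_i(C_i^T t).
\label{eq:ghoplan}
\end{equation}

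Next I would apply a mixed finite-difference operator: a second-order difference in $s$ along a direction $h_s$, combined with a first-order difference in $t$ along $h_t$. This annihilates both pure-$s$ and pure-$t$ sums on the right, leaving
$$\sum_{i=1}^n \bigl(\Delta^{2}_{h_s}\, \Delta_{h_t}\, \psi_i\bigr)\!\bigl(s + C_i^T t\bigr) \;=\; 0$$
on a smaller neighborhood. Using invertibility of the $C_i$'s, I would arrange the shift directions $h_s, h_t$ (varying them in a finite-dimensional parameter and exploiting the continuity of $\psi_i$) so as to isolate one index at a time, concluding that the third-order finite differences of $\psi_i$ vanish identically on a neighborhood of the origin. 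A continuous function whose third-order differences vanish on an open set is a polynomial of total degree at most two; thus each $\psi_i$ is quadratic on $N$.

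Finally, $\psi_i(u) = i\mu_i^T u - \tfrac{1}{2} u^T K_i u$ on $N$ for some $\mu_i$ and symmetric $K_i$, and positivity of $K_i$ follows from the fact that $|\phi_i(u)|\leq 1$ forces $u^T K_i u \geq 0$ in a neighborhood of $0$. Since $\phi_i$ agrees on a neighborhood of $0$ with the characteristic function of $\Ncal(\mu_i, K_i)$, and characteristic functions of Gaussians extend to entire functions determined by their local behavior (one form of the Marcinkiewicz theorem), $\Xv_i \sim \Ncal(\mu_i, K_i)$.

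The main obstacle is the isolation step within the finite-difference argument. In the scalar Darmois--Skitovich theorem the $C_i$ are scalars and isolation is a clean Vandermonde computation; in the multivariate setting the $C_i$ are general invertible matrices, not simultaneously diagonalizable, so the elimination requires the more delicate parametric argument of Ghurye--Olkin. Executing this carefully --- so that the shift parameters remain inside the neighborhood on which $\psi_i$ is defined while still spanning enough directions to extract each index --- is the technical heart of the proof.
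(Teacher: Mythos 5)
The paper does not prove this theorem; it quotes it verbatim from Ghurye and Olkin, and the accompanying remark only records that the proof ``relies on solving the functional equations satisfied by the characteristic functions.'' Your outline follows exactly that classical route (reduce to $\sum_i\psi_i(s+C_i^Tt)=\sum_i\psi_i(s)+\sum_i\psi_i(C_i^Tt)$, kill terms by finite differences, conclude each $\psi_i$ is a quadratic), so the strategy is the correct one. But as a proof it has a genuine gap, and it is the one you name yourself: the elimination step that isolates a single $\psi_i$ from $\sum_i(\Delta\cdots\Delta\,\psi_i)(s+C_i^Tt)=0$ is asserted, not performed, and for non-commuting invertible $C_i$ this step \emph{is} the content of Ghurye--Olkin's argument --- the proposal in effect defers the heart of the theorem to the very source being proved. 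Two further points need repair even granting the elimination. First, for general $n$ the successive differencing only shows that each $\psi_i$ is a polynomial of degree at most about $n$ (each elimination raises the degree bound by one); Marcinkiewicz's theorem ($e^{P}$ with $P$ a polynomial is a characteristic function only if $\deg P\le 2$) is what then forces the degree down to $2$. That is its actual role in this proof, not the local-to-global extension you assign to it. Second, the passage from ``$\phi_i$ agrees with a Gaussian characteristic function on a neighborhood of $0$'' to ``$\Xv_i$ is Gaussian'' needs its own argument: either symmetrize first to conclude $\phi_i$ has no zeros, so that $\psi_i$ and the functional equation extend to all of $\mathbb{R}^t$ and the polynomial identity holds globally, or observe that smoothness of $\phi_i$ at $0$ with Gaussian derivatives determines all moments and invoke determinacy of the Gaussian moment problem.

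It is worth adding that for everything this paper actually uses (Corollary \ref{co:tworv} and the remark following the theorem), only the case $n=2$ with $A_1=A_2=B_1=I$ and $B_2=-I$ is needed, i.e.\ $C_1=I$ and $C_2=-I$. There the $C_i$ commute, the elimination is the ``clean Vandermonde computation'' you describe for the scalar Darmois--Skitovich setting, and your sketch can be completed essentially as written.
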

\begin{remark}
In this paper we only use $A_i, B_i$ as multiples of $I$. In this case, the theorem follows from an earlier result of Skitovic. There were scalar versions of this known since the 30s, including Bernstein's theorem. The proof relies on solving the functional equations satisfied by the characteristic functions.
\end{remark}

\begin{corollary}
\label{co:tworv}
If $\Xv_1$ and $\Xv_2$ are zero-mean independent $t$-dimensional random column vectors, and if $\Xv_1 + \Xv_2$ and $\Xv_1 - \Xv_2$ are independent then $\Xv_1, \Xv_2$ are normally distributed with identical covariances.
\end{corollary}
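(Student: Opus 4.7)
The statement is essentially a specialization of Theorem \ref{th:gho}, followed by a one-line covariance matching.

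The plan is to invoke Theorem \ref{th:gho} with $n=2$ and the choices $A_1 = A_2 = I$, $B_1 = I$, $B_2 = -I$, all of which are non-singular $t \times t$ matrices. Under the hypothesis, $A_1 \Xv_1 + A_2 \Xv_2 = \Xv_1 + \Xv_2$ and $B_1 \Xv_1 + B_2 \Xv_2 = \Xv_1 - \Xv_2$ are independent, so the conclusion of Theorem \ref{th:gho} immediately yields that both $\Xv_1$ and $\Xv_2$ are (multivariate) normally distributed. Since they were assumed zero-mean, we have $\Xv_1 \sim \Nc(0, K_1)$ and $\Xv_2 \sim \Nc(0, K_2)$ for some PSD matrices $K_1, K_2$.

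It remains to show $K_1 = K_2$. The plan here is to use the fact that for independent random vectors the cross-covariance vanishes. Since $\Xv_1 + \Xv_2$ and $\Xv_1 - \Xv_2$ are independent and each has zero mean, their cross-covariance matrix is zero. Expanding using the independence of $\Xv_1$ and $\Xv_2$ (so that $\E(\Xv_1 \Xv_2^T) = \E(\Xv_2 \Xv_1^T) = 0$), we obtain
$$ 0 = \E\bigl[(\Xv_1+\Xv_2)(\Xv_1-\Xv_2)^T\bigr] = \E(\Xv_1\Xv_1^T) - \E(\Xv_2\Xv_2^T) = K_1 - K_2, $$
so $K_1 = K_2$, as desired.

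There is no real obstacle: the first step is a direct quotation of Theorem \ref{th:gho}, and the second step is a routine second-moment computation that uses only independence and the zero-mean assumption. The only mild subtlety is that in order to form the cross-covariance one needs $\Xv_1, \Xv_2$ to have finite second moments, but this follows automatically once Theorem \ref{th:gho} identifies them as Gaussian.
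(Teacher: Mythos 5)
Your proof is correct and follows essentially the same route as the paper: invoke Theorem \ref{th:gho} with $A_1=A_2=B_1=I$, $B_2=-I$ to get normality, then compute the cross-covariance $\E\bigl[(\Xv_1+\Xv_2)(\Xv_1-\Xv_2)^T\bigr]$ two ways (zero by independence, and $K_1-K_2$ by expansion) to conclude $K_1=K_2$. Nothing to add.
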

\begin{proof}
The fact that $\Xv_1, \Xv_2$ are normally distributed follows from Theorem \ref{th:gho}. Now observe that $\E((\Xv_1 + \Xv_2)(\Xv_1 - \Xv_2)^T) = \E(\Xv_1 + \Xv_2) \E(\Xv_1 - \Xv_2)^T = \mathbf{0}$. On the other hand
\[ \E((\Xv_1 + \Xv_2)(\Xv_1 - \Xv_2)^T)  = \E(\Xv_1 \Xv_1^T) - \E(\Xv_2 \Xv_2^T).\qedhere \]
\end{proof}

\subsection{Min-max theorem}
\label{sse:mm}
We reproduce the following Corollary from the Appendix of \cite{ggny11b} (full version can be found in arXiv).

\begin{corollary}[Corollary 2 in arXiv version of \cite{ggny11b}]
\label{coro:mm} Let $\Lambda_d$ be the $d$-dimensional simplex, i.e. $\alpha_i\geq0$ and $\sum_{i=1}^d\alpha_i=1$. Let $\Pc$ be a set of probability distributions $p(u)$. Let $T_i(p(u)), i=1,..,d$ be a set of functions such that the set $\Ac$, defined by
\begin{align*}\Ac &=\{(a_1,a_2,...,a_d)\in \mathbb{R}^d: a_i\leq T_i(p(u))\mbox{ for some }~p(u) \in \Pc\},
\end{align*}
is a convex set.

Then
$$\sup_{p(u) \in \Pc} \min_{\alpha \in \Lambda_d} \sum_{i=1}^d \alpha_i T_i(p(u))  = \min_{\alpha \in \Lambda_d}\sup_{p(u)\in \Pc}  \sum_{i=1}^d \alpha_i T_i(p(u)). $$
\end{corollary}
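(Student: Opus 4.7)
The plan is to combine the trivial direction with a separating-hyperplane argument that crucially uses the convexity of $\Ac$. For the easy inequality, let $L$ and $R$ denote the left- and right-hand sides. Then for any fixed $p \in \Pc$ and $\alpha \in \Lambda_d$ one has $\sum_i \alpha_i T_i(p) \leq \sup_{p' \in \Pc} \sum_i \alpha_i T_i(p')$; taking $\min_\alpha$ and then $\sup_p$ yields $L \leq R$. I would also simplify the LHS using the fact that a linear function on a simplex attains its minimum at a vertex: $\min_{\alpha \in \Lambda_d} \sum_i \alpha_i t_i = \min_i t_i$, so $L = \sup_{p \in \Pc} \min_i T_i(p)$.

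To prove $L \geq R$ I would argue by contradiction: suppose $L < R$. Define the open convex set $\Dc := \{(a_1, \ldots, a_d) \in \mathbb{R}^d : a_i > L \text{ for all } i\}$. The key observation is that $\Ac \cap \Dc = \emptyset$: if some $(a_1, \ldots, a_d) \in \Ac$ had every $a_i > L$, then by the definition of $\Ac$ there would exist $p \in \Pc$ with $T_i(p) \geq a_i > L$ for all $i$, whence $\min_i T_i(p) > L$, contradicting $L = \sup_p \min_i T_i(p)$. Now I would apply the standard separating-hyperplane theorem to the disjoint convex sets $\Ac$ and $\Dc$, the latter being open (which is all that is needed for separation), yielding a nonzero $\alpha \in \mathbb{R}^d$ and a scalar $c$ with $\alpha \cdot a \leq c \leq \alpha \cdot d$ for all $a \in \Ac$ and all $d \in \Dc$.

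Finally, I would extract the contradiction. The set $\Ac$ is downward closed in each coordinate by its very definition (if $a \in \Ac$ and $a' \leq a$ entrywise then $a' \in \Ac$), so any negative component of $\alpha$ would make $\alpha \cdot a$ unbounded above on $\Ac$, violating $\alpha \cdot a \leq c$; hence $\alpha \geq 0$, and after rescaling I may assume $\alpha \in \Lambda_d$. Letting $d \to L \mathbf{1}$ from inside $\Dc$ and using $\sum_i \alpha_i = 1$ gives $c \leq L$. On the other hand, for each $p \in \Pc$ the point $(T_1(p), \ldots, T_d(p))$ itself lies in $\Ac$, so $\sum_i \alpha_i T_i(p) \leq c \leq L$, and taking supremum over $p$ produces $\sup_{p \in \Pc} \sum_i \alpha_i T_i(p) \leq L < R$, contradicting $R \leq \sup_p \sum_i \alpha_i T_i(p)$ for this particular $\alpha \in \Lambda_d$. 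The main obstacle is handling the separation cleanly: since $\Ac$ is not assumed closed, one must use the version of the separation theorem that requires only one of the two convex sets to have nonempty interior (which $\Dc$ does), and then carefully exploit the downward-closedness of $\Ac$ both to force $\alpha \geq 0$ and to ensure $\alpha$ can be normalized to lie in $\Lambda_d$.
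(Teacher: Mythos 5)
Your argument is correct. Note, however, that the paper itself offers no proof of this corollary at all: it is imported verbatim as Corollary~2 from the arXiv version of \cite{ggny11b}, and the appendix only shows how to instantiate it to obtain Claim~\ref{cl:mmin}. What you have written is therefore a self-contained replacement for the citation, and it follows the standard route for such minimax statements: the trivial direction $L\leq R$, the identity $\min_{\alpha\in\Lambda_d}\sum_i\alpha_i t_i=\min_i t_i$, and then separation of the convex, downward-closed achievable set $\Ac$ from the open set $\{a:a_i>L\ \forall i\}$, with downward-closedness forcing the separating functional into the simplex. All the steps check out: the disjointness of the two sets follows correctly from $L=\sup_p\min_i T_i(p)$; openness of $\Dc$ suffices for separation without any closedness assumption on $\Ac$; sending a coordinate to $-\infty$ inside $\Ac$ rules out negative components of $\alpha$; and evaluating the separating inequality at the points $(T_1(p),\dots,T_d(p))\in\Ac$ delivers the contradiction. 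Two points are left implicit and worth a sentence each if this were to be written up: you need $\Pc\neq\emptyset$ (so that $\Ac\neq\emptyset$ and the separation theorem applies; the degenerate case is trivial), and the separation theorem only gives the two inequalities up to a global sign, so one should say that $\alpha$ is replaced by $-\alpha$ if necessary before invoking downward-closedness. Neither is a gap in substance.
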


We will now show how one can use the Corollary \ref{coro:mm} to establish Claim \ref{cl:mmin}.
\subsubsection*{ Proof of Claim \ref{cl:mmin}}
\begin{proof}
We take $\Pc$ as the set of $p(u,v,w,\xv)$  that satisfy the covariance constraint.
Here we take $d=2$ and set
\begin{align*}
T_1(p(u,v,w,x)) &= \la_0I(W;\Yv_1) + \la_1I(U;\Yv_1|W) +  (\la_1 + \la_2)I(V;\Yv_2|W) - \la_1I(U;V|W) \\
T_2(p(u,v,w,x)) &= \la_0I(W;\Yv_2) + \la_1I(U;\Yv_1|W) +  (\la_1 + \la_2)I(V;\Yv_2|W) - \la_1I(U;V|W)
\end{align*}
It is clear that the set
$$ \Ac = \{(a_1, a_2): a_1 \leq T_1(p(u,v,w,\xv)), a_2 \leq T_2(p(u,v,w,\xv)) \} $$
is a convex set. (In the standard manner, choose $\Wt = (W,Q)$, and when $Q=0$ choose $(U,V,W,\Xv) \sim p_1(u,v,w,\xv)$ and $Q=1$ choose $(U,V,W,\Xv) \sim p_2(u,v,w,\xv)$). Hence from Corollary \ref{coro:mm}, we have
{\small \begin{align*}
   & \min_{\alpha \in [0,1]}  \sup_{\substack{(U,V,W) \to \Xv \to (\Yv_1,\Yv_2)\\ \E(\Xv \Xv^T) \nsd K}}  \alpha\la_0 I(W;\Yv_1) + \bar \alpha \la_0 I(W;\Yv_2) + (\la_1 + \la_2) I(V;\Yv_2|W) + \la_1 I(U; \Yv_{1}|W) - \la_1 I(U;V|W) \\
   & =  \sup_{\substack{(U,V,W) \to \Xv \to (\Yv_1,\Yv_2)\\ \E(\Xv \Xv^T) \nsd K}}  \min_{\alpha \in [0,1]}  \alpha\la_0 I(W;\Yv_1) + \bar \alpha \la_0 I(W;\Yv_2) + (\la_1 + \la_2) I(V;\Yv_2|W)  + \la_1 I(U; \Yv_{1}|W) - \la_1 I(U;V|W) \\
   & = \sup_{\substack{(U,V,W) \to \Xv \to (\Yv_1,\Yv_2)\\ \E(\Xv \Xv^T) \nsd K}} \la_0 \min\{ I(W;\Yv_1), I(W;\Yv_2) \} +  (\la_1 + \la_2) I(V;\Yv_2|W) + \la_1 I(U; \Yv_{1}|W) - \la_1 I(U;V|W). \qedhere
   \end{align*}}
\end{proof}

\section{Existence of maximizing distributions}
\label{se:maxd}
The aim of this section is to give formal proofs of Claims \ref{cl:max-S} and \ref{cl:max-T} as our arguments critically hinge on proving properties of maximizing distributions. Our basic topological space consists of Borel probability measures on $\mathbb{R}^t$ endowed with the weak-convergence topology. This is a metric space with the Levy-Prokhorov metric defining the distance between two probability measures. 

\begin{remark}
For the proofs in this section, it is not necessary to know the precise definition of the metric; but just that the topological space is a metric space and hence normal. Notation wise, most of the time we use random variables $\Xv$ instead of the induced probability measure to represent points on this space. We will also try to state the various theorems that we employ in this section as and when we use them.
\end{remark}

\subsection{Properties of  Additive Gaussian noise}
In this section, we will establish certain properties of distributions obtained according to $Y = X + Z$, where $X$ and $Z$ are independent and $Z \sim \Nc(0,I)$. For simplicity of notation, we consider the scalar case. The authors are confident that these results are known in literature but could not find the relevant sources by a quick Google search.

Let $\tilde{F}(x) = \P(X \leq x)$ (where the inequality is co-ordinate wise. Note that $0 \leq \tilde{F}(x) \leq 1$. Then we see that since $f_z(z)$ has a density, we have
$$ \P(Y \leq y) = \int_{-\infty}^{\infty} \frac{1}{\sqrt{2\pi}} e^{-z^2/2} \tilde{F}(y-z)dz. $$
Thus we have
\begin{align*}
\P(Y \leq y+\delta) = \int_{-\infty}^{\infty} \frac{1}{\sqrt{2\pi}} e^{-z^2/2} \tilde{F}(y+\delta -z)dz 
= \int_{-\infty}^{\infty} \frac{1}{\sqrt{2\pi}} e^{-(z+\delta)^2/2} \tilde{F}(y-z)dz.
\end{align*}
By Dominated convergence theorem stated below (to justify interchange of derivative and integration) $Y$ has a density given by
\begin{align*}
f_Y(y) = \lim_{\delta\to 0} \frac1\delta (\P(Y \leq y+\delta) -\P(Y \leq y)) 
= \int_{-\infty}^{\infty} \frac{-z}{\sqrt{2\pi}} e^{-z^2/2} \tilde{F}(y-z)dz.
\end{align*}
Hence
$$ |f_Y(y)| \leq \int_{-\infty}^{\infty} \frac{|z|}{\sqrt{2\pi}} e^{-z^2/2}dz = \frac{2}{\sqrt{2\pi}}.$$
Again by Dominated convergence theorem we have
$$ f_Y'(y) = \int_{-\infty}^{\infty} \frac{z^2-1}{\sqrt{2\pi}} e^{-z^2/2} \tilde{F}(y-z)dz. $$
Thus
$$ |f_Y'(y)| \leq \int_{-\infty}^{\infty} \frac{|z^2-1|}{\sqrt{2\pi}} e^{-z^2/2}dz \leq 2.$$

\begin{remark}
\label{re:gau}
Thus $Y$ has a bounded density and a bounded first derivative of the density.
In the vector case, similarly we have a bounded density and a uniformly bounded $L_1$ norm for $\nabla f_\Yv(\yv).$
\end{remark}

Next, we state a general lemma which relates weak convergence to convergence of densities.
\begin{lemma}[Lemma 1 in \cite{boo85}]
\label{le:boo}
Suppose that $\Yv_n$ and $\Yv$ have continuous densities $f_n(\yv), f(\yv)$ with respect to the Lebesgue measure on $\mathbb{R}^t$. If $\Yv_n \wc \Yv$ and 
$$ \sup_n |f_n(\yv)| \leq M(\yv) < \infty, ~ \forall \yv \in \mathbb{R}^t$$ and
$$ f_n \text{ is equicontinuous, i.e. } \forall ~\yv, \eps > 0, ~\exists ~\delta(\yv,\eps), n(\yv,\eps)$$
such that $|\yv-\yv_1| < \delta(\yv,\eps)$ implies that $|f_n(\yv) - f_n(\yv_1)| < \eps ~\forall n \geq n(\yv,\eps)$,
then for any compact subset $C$ of $\mathbb{R}^t$
$$ \sup_{\yv \in C} | f_n(\yv) - f(\yv)| \to 0 ~\mbox{as} ~ n \to \infty.$$
If $\{f_n\}$ is uniformly equicontinuous, i.e. $\delta(\yv,\eps)$, $n(\yv,\eps)$ do not depend on $\yv$ and 
$f(\yv_n) \to 0$ whenever $|\yv_n| \to \infty$ then
$$ \sup_{\yv \in \mathbb{R}^t} | f_n(\yv) - f(\yv)| = \|f_n(\yv) - f(\yv)\|_\infty \to 0 ~\mbox{as} ~ n \to \infty.$$
\end{lemma}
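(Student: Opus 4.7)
}
The plan is to first obtain pointwise convergence $f_n(\yv)\to f(\yv)$ by sandwiching pointwise values between local integral averages (which converge by weak convergence on continuity sets), and then upgrade pointwise to uniform-on-compacts convergence using a finite subcover of $C$ together with the equicontinuity bound. For the global uniform statement, I would additionally exploit the tightness automatically provided by weak convergence to a probability measure together with uniform equicontinuity to control the tails.

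\paragraph*{Step 1: pointwise convergence.}
Fix $\yv_0\in \mathbb{R}^t$ and $\eps>0$. Choose $\delta=\delta(\yv_0,\eps)$ and $n^\star=n(\yv_0,\eps)$ from the equicontinuity hypothesis; shrinking $\delta$ if necessary, continuity of $f$ also yields $|f(\yv')-f(\yv_0)|<\eps$ on the closed ball $\overline{B(\yv_0,\delta)}$. Since the densities $f_n,f$ exist, the boundary $\partial B(\yv_0,\delta)$ has Lebesgue (hence $\P_\Yv$) measure zero, so $B(\yv_0,\delta)$ is a continuity set for $\Yv$; weak convergence $\Yv_n\wc\Yv$ implies
$$\int_{B(\yv_0,\delta)} f_n(\yv')\,d\yv'\;\longrightarrow\; \int_{B(\yv_0,\delta)} f(\yv')\,d\yv'.$$
Dividing by the volume $V(B(\yv_0,\delta))$ and inserting $\pm$ the averages against $f_n(\yv_0)$ and $f(\yv_0)$, the triangle inequality gives, for all $n$ large,
$$|f_n(\yv_0)-f(\yv_0)|\;\leq\;\eps\;+\;\eps\;+\;\bigl|\tfrac{1}{V(B)}\!\!\int_B(f_n-f)\bigr|\;\leq\;3\eps.$$
Hence $f_n(\yv_0)\to f(\yv_0)$ pointwise.

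\paragraph*{Step 2: uniform convergence on compacta.}
Let $C\subset\mathbb{R}^t$ be compact and fix $\eps>0$. At each $\yv\in C$ pick $\delta(\yv,\eps)$ and $n(\yv,\eps)$ from equicontinuity; by compactness extract a finite subcover by balls $B(\yv_i,\delta(\yv_i,\eps)/2)$, $i=1,\dots,N$. For $\yv\in B(\yv_i,\delta(\yv_i,\eps)/2)$ and $n\geq \max_i n(\yv_i,\eps)$,
$$|f_n(\yv)-f(\yv)|\leq |f_n(\yv)-f_n(\yv_i)|+|f_n(\yv_i)-f(\yv_i)|+|f(\yv_i)-f(\yv)|,$$
where the first term is at most $\eps$ by equicontinuity, the second is controlled by Step~1 (finitely many centers), and the third is at most $\eps$ by continuity of the limit $f$ (valid after shrinking $\delta$ using that $f$ is continuous on the compact set $C$). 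Taking the sup over $\yv\in C$ and then $\eps\to 0$ gives $\sup_{\yv\in C}|f_n(\yv)-f(\yv)|\to 0$.

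\paragraph*{Step 3: global uniform convergence.}
Assume now uniform equicontinuity ($\delta,n^\star$ independent of $\yv$) and $f(\yv)\to 0$ as $|\yv|\to\infty$. Given $\eps>0$, by the tail decay of $f$ choose $R_0$ with $f(\yv)<\eps$ for $|\yv|>R_0$. The main task is to show $\sup_{|\yv|>R}f_n(\yv)<2\eps$ for all $n$ large and some $R\geq R_0$; combined with Step~2 applied on $\overline{B(0,R)}$ this yields the claim. Suppose for contradiction that along a subsequence (still indexed by $n$) there exist $\yv_n$ with $|\yv_n|\to\infty$ and $f_n(\yv_n)\geq 2\eps$. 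By uniform equicontinuity, $f_n\geq\eps$ on $B(\yv_n,\delta)$ for some fixed $\delta>0$, so $\P(\Yv_n\in B(\yv_n,\delta))\geq \eps\,V(B(\delta))$. But weak convergence of $\Yv_n$ to a probability measure is tight, i.e.\ for each $\eta>0$ there is $R_\eta$ with $\sup_n \P(|\Yv_n|>R_\eta)<\eta$; choosing $\eta<\eps V(B(\delta))$ contradicts $|\yv_n|\to\infty$. This establishes the uniform bound on the tails and completes the proof.

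\paragraph*{Main obstacle.}
Steps~1 and~2 are fairly mechanical. The delicate point is Step~3: translating weak convergence of measures into a pointwise tail bound on densities. The mechanism is exactly the trade-off between uniform equicontinuity (which upgrades a single large value to a chunk of mass in a fixed-size ball) and tightness (which forbids mass from escaping to infinity). If uniform equicontinuity is weakened to mere equicontinuity, the above contradiction argument breaks down, which explains why the hypothesis is strengthened for the global statement.
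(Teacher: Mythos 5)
The paper does not prove this lemma at all: it is imported verbatim as Lemma~1 of \cite{boo85} (Boos's converse to Scheff\'e's theorem), so there is no in-paper argument to compare against. Your proposal is a correct, self-contained proof, and its three-step structure (pointwise convergence by comparing $f_n(\yv_0)$ with the ball average $\frac{1}{V(B)}\int_B f_n$, which converges by the portmanteau theorem since $\partial B$ is Lebesgue-null and hence a $\P_\Yv$-continuity set; a finite-subcover upgrade to compacta; and a tightness-versus-uniform-equicontinuity contradiction for the tails) is essentially the standard route to this result. The delicate points all check out: in Step~1 the equicontinuity hypothesis is exactly what bounds $|f_n(\yv_0)-\frac{1}{V(B)}\int_B f_n|$ by $\eps$ for $n\geq n^\star$; in Step~3 the extraction of a subsequence with both $n_k\to\infty$ and $|\yv_{n_k}|\to\infty$ follows from negating the tail bound with $R=N=k$, and tightness of a weakly convergent sequence on $\mathbb{R}^t$ (or, equivalently, the portmanteau upper bound on the closed set $\{|\yv|\geq R_\eta\}$) gives the contradiction. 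Two cosmetic remarks: you never use the dominating function $M(\yv)$, which is fine since Step~1 in fact shows the $f_n$ are eventually pointwise bounded; and in Step~2, rather than shrinking $\delta$ via uniform continuity of $f$ on a neighborhood of $C$, you could let $f$ inherit the equicontinuity modulus of the $f_n$ directly by passing to the limit in $|f_n(\yv)-f_n(\yv_i)|\leq\eps$ using Step~1, which slightly streamlines the bookkeeping and is also what makes the hypothesis ``$f(\yv_n)\to 0$'' interact cleanly with uniform equicontinuity in Step~3.
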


\begin{claim}
\label{cl:ecb}
Let $\{\Xv_n\}$ be any sequence of random variables and let $\Yv_n = \Xv_n + \Zv$. Let $f_n(\yv)$ represent the density of $\Yv_n$. Then  the collection of functions
$\{f_n(\yv)\}$ is uniformly bounded and uniformly equicontinuous.
\end{claim}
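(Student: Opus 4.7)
The plan is to write $f_n$ as a convolution and then transfer all the smoothness and boundedness properties to $f_n$ uniformly in $n$. Specifically, let $\phi(\zv) = (2\pi)^{-t/2} e^{-\|\zv\|^2/2}$ denote the standard $t$-dimensional Gaussian density and let $\mu_n$ denote the law of $\Xv_n$. Since $\Xv_n$ and $\Zv$ are independent, we have
$$ f_n(\yv) = \int_{\mathbb{R}^t} \phi(\yv - \xv)\, d\mu_n(\xv). $$
All the estimates will then follow by pulling the pointwise bounds on $\phi$ and $\nabla\phi$ out of this integral.

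For uniform boundedness, since $\phi$ attains its maximum at $\zv=0$,
$$ f_n(\yv) \leq \sup_{\zv \in \mathbb{R}^t} \phi(\zv) \cdot \mu_n(\mathbb{R}^t) = (2\pi)^{-t/2}, $$
which is independent of $n$ and $\yv$. This is essentially the vector analogue of the bound noted in Remark \ref{re:gau}.

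For uniform equicontinuity, I would compute $\nabla \phi(\zv) = -\zv\, \phi(\zv)$, so that $\|\nabla \phi(\zv)\| = \|\zv\|\phi(\zv)$. The function $r \mapsto r\, e^{-r^2/2}$ attains its maximum at $r=1$, giving
$$ M := \sup_{\zv \in \mathbb{R}^t} \|\nabla \phi(\zv)\| = (2\pi)^{-t/2} e^{-1/2} < \infty. $$
By the mean value theorem applied coordinatewise (or by writing $\phi(\yv_1 - \xv) - \phi(\yv_2 - \xv) = \int_0^1 \nabla\phi\bigl(\yv_2 - \xv + s(\yv_1 - \yv_2)\bigr)\cdot (\yv_1 - \yv_2)\, ds$), we obtain the pointwise bound $|\phi(\yv_1 - \xv) - \phi(\yv_2 - \xv)| \leq M \|\yv_1 - \yv_2\|$, uniformly in $\xv$. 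Integrating against $\mu_n$ yields
$$ |f_n(\yv_1) - f_n(\yv_2)| \leq M \|\yv_1 - \yv_2\|, $$
uniformly in $n$, $\yv_1$, and $\yv_2$. Thus $\{f_n\}$ is uniformly Lipschitz, hence uniformly equicontinuous (taking $\delta(\eps) = \eps/M$ works for every $n$ and every $\yv$).

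There is no real obstacle here: the argument only uses that convolution with a $C^1$ function having bounded density and bounded gradient produces functions that inherit both bounds uniformly, regardless of the input distribution. The only step requiring a sentence of care is the passage from the mean value estimate on $\phi$ to the estimate on $f_n$, which is justified by Fubini/dominated convergence (already in force via the integrable majorant $M$).
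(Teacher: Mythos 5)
Your proof is correct and follows essentially the same route as the paper: both arguments exploit the convolution representation of $f_n$ with the Gaussian kernel and transfer the uniform bounds on $\phi$ and $\nabla\phi$ to $f_n$, independently of the law of $\Xv_n$. The only cosmetic difference is that you place the derivative on the kernel inside the integral (giving a clean Lipschitz constant $M=\sup_{\zv}\|\nabla\phi(\zv)\|$), whereas the paper first establishes a uniform bound on $\nabla f_n$ itself (Remark \ref{re:gau}) and then applies the mean value theorem with an $\ell_1$/$\ell_\infty$ H\"older pairing.
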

\begin{proof}
The uniform bound on density is clear from Remark \ref{re:gau}. To see the uniform equicontinuity observe that by mean value theorem
$$| f_n(\yv + \mathbf{\delta}) - f_n(\yv)| =| \nabla f_n(\yv\rq{}) \cdot \mathbf{\delta}| \stackrel{(a)}{\leq} \| \nabla f_n(\yv\rq{})\|_1 \|  \mathbf{\delta}\|_\infty \leq \sqrt{t}  \| \nabla f_n(\yv\rq{})\|_1 \|  \mathbf{\delta}\|_2 $$
where $(a)$ follows from Holder\rq{}s inequality. Now the uniform bound on $L_1$ norm of $\nabla f_\Yv(\yv)$ from Remark \ref{re:gau} yields the desired equicontinuity.
\end{proof}

\begin{definition}
A collection of random variables $\Xv_n$ on $\mathbb{R}^t$ is said to be {\em tight}  if for every $\eps > 0$ there is a compact set $C_\eps \subset \mathbb{R}^t $ such that $\P(\Xv_n \notin C_\eps) \leq \eps,  ~ \forall n$.
\end{definition}
\begin{lemma}
\label{le:smt}
Consider a sequence of random variables $\{\Xv_n\}$ such that $\E(\Xv_n \Xv_n^T) \nsd K, ~ \forall n$. Then the sequence is tight.
\end{lemma}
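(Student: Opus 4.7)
The plan is to deduce tightness from the uniform bound on the second-moment matrix via a simple Markov-type argument. The intuition is that if the $\Xv_n$ have bounded second moments coordinate-wise, then almost all of their mass must concentrate in a sufficiently large (fixed) Euclidean ball, and the closed balls of $\mathbb{R}^t$ are compact.

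First I would observe that since $\E(\Xv_n \Xv_n^T) \nsd K$ in the positive semidefinite order, and trace is monotone with respect to $\nsd$ on positive semidefinite matrices, we have
\[
\E\|\Xv_n\|_2^2 = \tr\bigl(\E(\Xv_n \Xv_n^T)\bigr) \leq \tr(K) < \infty,
\]
with the bound independent of $n$. Given any $\eps > 0$, choose $R = R(\eps) > 0$ such that $\tr(K)/R^2 \leq \eps$, e.g.\ $R = \sqrt{\tr(K)/\eps}$, and let $C_\eps := \{\xv \in \mathbb{R}^t : \|\xv\|_2 \leq R\}$. This $C_\eps$ is closed and bounded in $\mathbb{R}^t$, hence compact by Heine-Borel.

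Finally, by Markov's inequality applied to the nonnegative random variable $\|\Xv_n\|_2^2$,
\[
\P(\Xv_n \notin C_\eps) = \P\bigl(\|\Xv_n\|_2 > R\bigr) \leq \frac{\E\|\Xv_n\|_2^2}{R^2} \leq \frac{\tr(K)}{R^2} \leq \eps,
\]
uniformly in $n$, which is exactly the definition of tightness.

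There is essentially no obstacle here; the lemma is a one-line consequence of Markov's inequality once one recognizes that the trace of the covariance matrix controls $\E\|\Xv_n\|_2^2$. The only minor subtlety is making sure one uses $\E(\Xv_n \Xv_n^T)$ as a second-moment matrix (not a centered covariance); the statement of the lemma is written in that form, so no centering step is needed.
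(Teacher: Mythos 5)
Your proof is correct and is essentially identical to the paper's: both bound $\E\|\Xv_n\|_2^2$ by $\tr(K)$ and apply Markov's inequality to the ball of radius $\sqrt{\tr(K)/\eps}$. Nothing to add.
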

\begin{proof}
Define $C_\eps = \{ \xv: \|\xv\|_2^2 \leq \frac{tr(K)}{\eps} \}$. By Markov's inequality 
$\P( \|\Xv_n\|^2 >\frac{tr(K)}{\eps} ) \leq \frac{\eps\E(\|\Xv_n\|^2)}{tr(K)} \leq \eps, ~\forall n$.
\end{proof}

\begin{theorem}[Prokhorov]
\label{th:pro}
If $\{\Xv_n\}$ is a tight sequence of random variables in $\mathbb{R}^t$ then there exists a subsequence $\{\Xv_{n_i}\}$ and a limiting probability distribution $\Xv_*$ such that $\Xv_{n_i} \wc \Xv_*$.
\end{theorem}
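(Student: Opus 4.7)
The plan is to prove Prokhorov's theorem for $\mathbb{R}^t$ via a Helly-style selection argument together with a diagonal extraction, using tightness only at the end to rule out escape of mass to infinity. First I would work with the multivariate distribution functions $F_n(\xv) = \P(\Xv_n \leq \xv)$ (with componentwise inequality), which are monotone, right-continuous, and bounded in $[0,1]$. Fix a countable dense set $D \subset \mathbb{R}^t$, say $D = \mathbb{Q}^t$, and enumerate its points as $\{\dv_j\}_{j \geq 1}$.

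Next, I would perform a standard diagonal extraction. Since $\{F_n(\dv_1)\} \subset [0,1]$, by Bolzano--Weierstrass there is a subsequence along which $F_n(\dv_1)$ converges. Inside that subsequence pick a further one along which $F_n(\dv_2)$ converges, and so on. Taking the diagonal yields a single subsequence $\{n_i\}$ such that $F_{n_i}(\dv_j) \to G(\dv_j)$ exists in $[0,1]$ for every $j$. Then I would define a candidate distribution function on all of $\mathbb{R}^t$ by
\[
F(\xv) = \inf\{\, G(\dv_j) : \dv_j \geq \xv, \ \dv_j \in D\,\},
\]
which is monotone and right-continuous by construction, and agrees with $G$ wherever $G$ is already right-continuous.

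Now I would invoke tightness. Given $\eps > 0$, pick a compact $C_\eps$ with $\P(\Xv_n \notin C_\eps) \leq \eps$ for all $n$, and choose a box $[-R,R]^t \supset C_\eps$. Then $F_n(R,\ldots,R) \geq 1 - \eps$ and $F_n(\xv) \leq \eps$ whenever some coordinate of $\xv$ is less than $-R$. Passing to the limit along the subsequence yields the same bounds for $F$, so $F(\xv) \to 1$ as every coordinate goes to $+\infty$ and $F(\xv) \to 0$ as any coordinate goes to $-\infty$. Together with monotonicity and right-continuity this makes $F$ a bona fide distribution function, corresponding to a probability measure $\mu_*$ on $\mathbb{R}^t$; let $\Xv_* \sim \mu_*$.

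Finally, I would upgrade pointwise convergence on $D$ to weak convergence. At any continuity point $\xv$ of $F$, for each $\eps > 0$ I can pick $\dv_j, \dv_k \in D$ with $\dv_j \leq \xv \leq \dv_k$ such that $F(\dv_k) - F(\dv_j) < \eps$, and monotonicity of $F_{n_i}$ combined with $F_{n_i}(\dv_j) \to G(\dv_j)$, $F_{n_i}(\dv_k) \to G(\dv_k)$ forces $F_{n_i}(\xv) \to F(\xv)$. Convergence of $F_{n_i}$ at all continuity points of $F$ is equivalent to $\Xv_{n_i} \wc \Xv_*$ by the Portmanteau theorem. The step I expect to be the main obstacle is verifying rigorously that $F$ defined from its values on the countable dense set $D$ is in fact the distribution function of a probability measure (no mass escaping to infinity and no mass concentrating on lower-dimensional faces of a box); this is precisely where tightness, rather than mere pointwise convergence of $F_n$ on $D$, is essential.
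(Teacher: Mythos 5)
Your proof is correct. Note, however, that the paper does not prove this statement at all: it is quoted as Prokhorov's theorem, a classical result, and used as a black box (the paper's own technical work goes into Lemma \ref{le:smt}, which verifies tightness from the covariance constraint, and Lemma \ref{le:gau}, which upgrades the weak limit to convergence of differential entropies). What you have written is the standard Helly-selection proof of the ``tightness implies sequential compactness'' half of Prokhorov's theorem, specialized to $\mathbb{R}^t$: diagonal extraction on a countable dense set, extension of the limit to a right-continuous candidate $F$, tightness to prevent escape of mass, and sandwiching at continuity points to get $\Xv_{n_i} \wc \Xv_*$. The one point worth tightening is the meaning of ``monotone'' in the multivariate setting: for $F$ to define a probability measure you need nonnegativity of the alternating-sum increments over boxes (the rectangle condition), not merely coordinatewise monotonicity; this does follow by passing to the limit in the corresponding inequalities for $F_{n_i}$ at corners in $D$ and then approximating general boxes from the right, but it should be stated explicitly since coordinatewise monotonicity alone does not yield a measure. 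With that caveat addressed, the argument is complete, and it is more elementary (though less general) than the abstract proof of Prokhorov's theorem for Polish spaces.
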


\begin{lemma}
\label{le:gau}
Let $\Xv_n \wc \Xv_*$ and let $\Zv \sim \Nc(0,I)$ be pairwise independent of $\{\Xv_n\},\Xv_*$. Let $\Yv_n = \Xv_n + \Zv$, $\Yv_* = \Xv_* + \Zv$. Further let $\E(\Xv_n\Xv_n^T) \nsd K, \E(\Xv_* \Xv_*^T) \nsd K$. Let $f_{n}(\yv)$ denote the density of $\Yv_n$ and $f_{*}(\yv)$ denote the density of $\Yv_*$. Then
\begin{enumerate}
\item $\Yv_n \wc \Yv$
\item $f_n(\yv) \to f_*(\yv)$ for all $\yv$
\item $h(\Yv_n) \to h(\Yv).$
\end{enumerate}
\end{lemma}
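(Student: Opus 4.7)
My plan is to establish the three items in order, each building on the previous. Items (1) and (2) are essentially invocations of the tools already set up. For (1), since $\Zv$ is independent of every $\Xv_n$ and of $\Xv_*$, characteristic functions factor as $\phi_{\Yv_n}(\tv) = \phi_{\Xv_n}(\tv)\, e^{-\|\tv\|^2/2}$; L\'evy's continuity theorem turns $\Xv_n \wc \Xv_*$ into pointwise convergence of $\phi_{\Xv_n}$, and a second application of L\'evy's theorem in the reverse direction delivers $\Yv_n \wc \Yv_*$. For (2), Claim~\ref{cl:ecb} supplies uniform boundedness and uniform equicontinuity of $\{f_n\}$; combined with (1) these are exactly the hypotheses of the first half of Lemma~\ref{le:boo}, whose conclusion is uniform convergence of $f_n$ to $f_*$ on every compact set, in particular pointwise convergence $f_n(\yv) \to f_*(\yv)$ for all $\yv$.

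For (3) I would prove $\int \phi(f_n)\,d\yv \to \int \phi(f_*)\,d\yv$ with $\phi(t) := -t\log t$ by a ``compact plus tail'' decomposition. On any ball $B_R$, the uniform bound $f_n \leq M$ yields $|\phi(f_n)| \leq \max(e^{-1}, M|\log M|)$, so dominated convergence together with (2) gives $\int_{B_R}\phi(f_n) \to \int_{B_R}\phi(f_*)$. The remaining work is uniform tail control. I would partition $B_R^c$ by whether $f_n \geq e^{-1}$: on $\{f_n \geq e^{-1}\}$, $|\phi(f_n)| \leq f_n \log M$, and Markov's inequality with the uniform bound $\E\|\Yv_n\|^2 \leq \tr K + t$ drives that piece to $0$ uniformly in $n$; on $\{f_n < e^{-1}\}$, I would use the polynomial envelope
\[
f_n(\yv) \;\leq\; (2\pi)^{-t/2}\!\left(e^{-\|\yv\|^2/8} + \frac{4\,\tr K}{\|\yv\|^2}\right),
\]
obtained from $f_n(\yv) = \E[\phi_{\Zv}(\yv - \Xv_n)]$ by splitting at $\{\|\Xv_n\| \leq \|\yv\|/2\}$ and applying Chebyshev to the complement. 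Since $\phi$ is increasing on $(0, e^{-1})$, this majorizes $\phi(f_n)$ by an $n$-independent function; the same envelope then handles $\phi(f_*)$ and completes the argument.

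The main obstacle is the tail step of (3). Because only the \emph{second} moment of $\Xv_n$ is uniformly bounded, the envelope above decays only like $\|\yv\|^{-2}$, and $\phi$ of it behaves like $\log\|\yv\|/\|\yv\|^{2}$ at infinity; the radial integral $\int_R^\infty r^{t-3}\log r\,dr$ is finite only for $t=1$, so a bare Chebyshev bound is marginally insufficient in higher dimensions. For $t \ge 2$ an extra argument is needed; a clean route is to write $\Zv = \Zv_1 + \Zv_2$ with $\Zv_i \sim \Nc(0, I/2)$ independent, carry the whole argument out first for the intermediate variable $\Uv_n := \Xv_n + \Zv_1$, and then propagate entropy convergence from $\Uv_n$ to $\Yv_n = \Uv_n + \Zv_2$ via the de Bruijn / I-MMSE identity, which reduces the question to continuity of MMSE under weak convergence (known to hold under uniform second-moment control). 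A complementary route combines the lower semicontinuity of $D(f_n \,\|\, \phi_{\Zv})$ with a matching upper bound coming from the Gaussian maximum-entropy principle applied to the limit.
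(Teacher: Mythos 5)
Your treatment of parts (1) and (2) is exactly the paper's: characteristic functions plus L\'evy continuity for (1), and Claim~\ref{cl:ecb} feeding Lemma~\ref{le:boo} for (2). The divergence is in part (3), where the paper does not attempt a hands-on proof at all: it invokes Theorem~\ref{th:goh} (Godavarti--Hero) with $\kappa=2$, whose hypotheses --- pointwise convergence of the densities, the uniform density bound from Remark~\ref{re:gau}, and the uniform moment bound $\E\|\Yv_n\|^2\le\tr(K)+t$ --- are precisely what parts (1)--(2) and the covariance constraint supply.

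The obstacle you identify in your direct attack is real \emph{for the method you chose}: a pointwise envelope of order $\|\yv\|^{-2}$ makes $-f_n\log f_n$ non-integrable over the tail when $t\ge 2$. But the standard fix is much lighter than either of your proposed detours. On the tail, do not dominate $-f_n\log f_n$ by $\phi$ of an envelope; instead use $-u\log u\le \tfrac{1}{ea}\,u^{1-a}$ for $u\in[0,1]$ and any $a\in(0,1)$, and then H\"older with exponents $\tfrac{1}{1-a},\tfrac1a$:
\[
\int_{\|\yv\|>R} f_n(\yv)^{1-a}\,d\yv \;\le\; \Bigl(\int f_n(\yv)\,\|\yv\|^{2}\,d\yv\Bigr)^{1-a}\Bigl(\int_{\|\yv\|>R}\|\yv\|^{-2(1-a)/a}\,d\yv\Bigr)^{a},
\]
which is finite and tends to $0$ as $R\to\infty$, uniformly in $n$, once $a<2/(2+t)$; the portion of the tail where $f_n>1$ has Lebesgue measure at most $1$ and is controlled by the uniform density bound. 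This uniform-integrability argument is exactly the content of Theorem~\ref{th:goh}. By contrast, your ``complementary route'' has a genuine hole: lower semicontinuity of $D(f_n\|\phi_{\Zv})$ only yields $\limsup_n h(\Yv_n)\le h(\Yv_*)+\tfrac12\bigl(\limsup_n\E\|\Yv_n\|^2-\E\|\Yv_*\|^2\bigr)$, and weak convergence with merely \emph{bounded} second moments does not give convergence of second moments, so the correction term need not vanish (and the $\liminf$ direction is not addressed at all by that route). The de Bruijn/MMSE detour can presumably be made to work, but it imports far more machinery than the problem needs.
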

\begin{proof}
The first part follows from pointwise convergence of characteristic functions (which is equivalent to weak convergence) since
$\Phi_{\Yv_n}(\tv) = \Phi_{\Xv_n}(\tv) e^{-\|\tv\|^2/2}$. The second part (a stronger claim that weak convergence) comes from Lemma \ref{le:boo}. We have uniform equicontinuity  since  $\nabla f_n(\yv)$ has a uniformly bounded  $L_1$ norm (see Remark \ref{re:gau}). Bounded $L_1$ norm of  $\nabla f_n(\yv)$ also implies that $f_*(\yv_n) \to 0$ whenever $|\yv_n| \to \infty$ (A reason: if a point has density $> \eps$ then it has a neighbourhood depending only on $\eps$ where the density is bigger than $\frac{\eps}{2}$, hence this implies that this neighbourhood has a lower bounded probability measure depending only on $\eps$. This cannot happen at infinitely many points of a sequence $\yv_n$ such that $|\yv_n| \to \infty$). The third part comes from Theorem \ref{th:goh}(below) in a direct manner as the densities are uniformly bounded, the second moment($\kappa=2$) is  uniformly bounded by $tr(K)$, and the pointwise convergence from the second part.
\end{proof}

\begin{theorem} [Theorem 1 in \cite{goh04}]
\label{th:goh}
Let $\{\Yv_i \in \mathbb{C}^t\}$ be a sequence of continuous random variables with pdf's $\{f_i\}$ and $\Yv_*$ be a continuous random variable with pdf $f_*$ such that $f_i \to f_*$ pointwise. Let $\|\yv\| = \sqrt{\yv^\dagger\yv}$ denote the Euclidean norm of $\yv \in \mathbb{C}^t$. If $1) \max\{\sup_\yv f_i(\yv), \sup_\yv f_*(\yv) \} \leq F ~ \forall i$ and $2) \max \{ \int \|\yv\|^\kappa f_i(\yv) d \yv, \int \|\yv\|^\kappa f_*(\yv) d \yv\} \leq L$ for some $\kappa > 1$ and for all $i$ then $h(\Yv_i) \to h(\Yv_*)$.
\end{theorem}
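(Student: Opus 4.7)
The plan is to write $h(\Yv_i) = -\int_{\mathbb{C}^t} \phi(f_i(\yv))\,d\yv$ with $\phi(t) = t\log t$, and to split the domain into a large ball $B_R = \{\|\yv\| \le R\}$ and its complement. On $B_R$ I would deploy the bounded convergence theorem using the pointwise convergence $f_i \to f_*$ together with the uniform bound $|\phi(t)| \le \max\{F\log F,\,1/e\}$, which holds for every $t \in [0,F]$; this shows the compact part of $h(\Yv_i)$ converges to the compact part of $h(\Yv_*)$ for each fixed $R$. What remains is to show that the tail contributions $\int_{\|\yv\|>R} |\phi(f_i)|\,d\yv$, and the analogous quantity for $f_*$, can be made arbitrarily small for $R$ large, uniformly in $i$.

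The tail splits naturally according to the sign of $\log f$. On $\{f \ge 1\}$ I would bound $\phi(f) \le (\log F)\,f$, so the tail contribution is at most $(\log F)\int_{\|\yv\|>R} f\,d\yv \le (L\log F)/R^\kappa$ by Markov's inequality applied to the moment hypothesis. The delicate piece is $\{f < 1\}$, where $-\phi(f)$ is pointwise small but the domain can have very large Lebesgue measure. Here the key elementary estimate is $-t\log t \le (e\delta)^{-1}\,t^{1-\delta}$ for $t \in (0,1]$ and any $\delta \in (0,1)$. Applied with $t = f_i(\yv)$, this reduces the problem to bounding $\int_{\|\yv\|>R} f_i^{1-\delta}\,d\yv$ uniformly in $i$. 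I would then factorize $f_i^{1-\delta} = (f_i \|\yv\|^\kappa)^{1-\delta}\,\|\yv\|^{-\kappa(1-\delta)}$ and apply H\"older with conjugate exponents $(1/(1-\delta),\,1/\delta)$ to obtain
\[
\int_{\|\yv\|>R} f_i^{1-\delta}\,d\yv \le L^{1-\delta}\left(\int_{\|\yv\|>R} \|\yv\|^{-\kappa(1-\delta)/\delta}\,d\yv\right)^{\delta}.
\]
Choosing $\delta$ small enough that $\kappa(1-\delta)/\delta$ exceeds the real dimension $2t$ of $\mathbb{C}^t$ makes the inner integral convergent and decaying as a negative power of $R$; such a $\delta$ exists since $\kappa > 1 > 0$.

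Combining the two halves, for any $\epsilon > 0$ I would first pick $\delta$ and $R$ so that the tail integral is at most $\epsilon$ for every $f_i$ \emph{and} for $f_*$ (the same moment bound applies to $f_*$ by hypothesis), and then pass to $i\to\infty$ on the compact part, which by the bounded convergence argument contributes at most another $\epsilon$ for $i$ sufficiently large; letting $\epsilon\to 0$ gives $h(\Yv_i)\to h(\Yv_*)$. The main technical obstacle I anticipate is precisely the lower-tail piece on $\{f_i < 1,\,\|\yv\|>R\}$: the pointwise smallness of $-f\log f$ must be weighed against the possibly huge Lebesgue measure of the region, and the $\kappa$-th moment bound is the only tool controlling how much mass of $f_i$ escapes to infinity. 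The polynomial surrogate $-t\log t \le C_\delta\, t^{1-\delta}$ together with the H\"older split above is what converts a pure moment assumption into uniform decay of the entropy tails; everything else is routine bookkeeping via bounded convergence on compact subsets.
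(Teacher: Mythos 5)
Your proposal is correct, and it is worth noting that the paper does not actually prove this statement: it is imported as Theorem 1 of the cited reference, and the accompanying remark only sketches a two-sided semicontinuity argument --- $\liminf_i h(\Yv_i)\ge h(\Yv_*)$ via Fatou applied to the nonnegative integrand $f_i(\log F-\log f_i)$ (this is where the density bound enters), and $\limsup_i h(\Yv_i)\le h(\Yv_*)$ via the moment constraint, in the style of Appendix 3A of the El Gamal--Kim reference. Your route is structurally different and symmetric: bounded convergence on $B_R$ handles both directions at once, and the entire burden is shifted to a uniform-in-$i$ tail estimate, with the moment bound doing double duty (Markov on $\{f\ge 1\}$, and the surrogate $-t\log t\le (e\delta)^{-1}t^{1-\delta}$ followed by H\"older on $\{f<1\}$). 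The computations check out: the bound $|t\log t|\le\max\{F\log F,1/e\}$ on $[0,F]$ is valid, the H\"older split with exponents $(1/(1-\delta),1/\delta)$ gives $L^{1-\delta}\bigl(\int_{\|\yv\|>R}\|\yv\|^{-\kappa(1-\delta)/\delta}d\yv\bigr)^{\delta}$, and choosing $\delta$ with $\kappa(1-\delta)/\delta>2t$ makes this decay in $R$ uniformly in $i$ (the same bounds apply to $f_*$, so $h(\Yv_*)$ is finite and its tail is also controlled). What your approach buys is a fully self-contained, quantitative proof in which the upper and lower bounds are obtained simultaneously from uniform integrability of $f_i\log f_i$, rather than from two separate one-sided arguments; the cost is essentially nil, since the only hypotheses used are exactly the two stated ones. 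One cosmetic point: in the H\"older step the left factor should be the norm of $(f_i\|\yv\|^\kappa)^{1-\delta}$ restricted to $\{\|\yv\|>R,\ f_i<1\}$, which you correctly majorize by the unrestricted quantity $L^{1-\delta}$.
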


\begin{remark}
This theorem is relatively straightforward. One gets $\liminf h(\Yv_i) \geq h(\Yv_*)$ coming due to upper bound on densities and $\limsup h(\Yv_i) \leq h(\Yv_*)$ due to the moment constraints.  Similar kind of result can be found in Appendix 3A of \cite{elk12}.
\end{remark}

We now have the tools to prove Claim \ref{cl:max-S}.
\subsubsection*{Proof of Claim \ref{cl:max-S}}
\begin{proof}
Define
$$\mathsf{v}_\la(K) = \sup_{\Xv:\E(\Xv\Xv^T)=K} \mathsf{s}_\la(\Xv). $$
Let $\Xv_n$ be a sequence of random variables such that $\E(\Xv_n \Xv_n^T) = K$ and $\mathsf{s}_\la(\Xv_n) \uparrow \mathsf{v}_\la(K).$ By the covariance constraint (Lemma \ref{le:smt}) we know that the sequence of random variables $\Xv_n$ forms a tight sequence and by Theorem \ref{th:pro} there exists $X_K^*$ and a convergent subsequence such that $\Xv_{n_i} \stackrel{w}{\Rightarrow} \Xv_K^*$. From Lemma \ref{le:gau} we have that $h(\Yv_{1n_i}), h(\Yv_{2n_i}) \to h(\Yv_{1K}^*), h(\Yv_{2K}^*)$ and hence $\mathsf{s}_\la(\Xv_K^*) = \mathsf{v}_\la(K).$ Thus ${\rm V}_\la(K)$ can be obtained as a convex combination of $\mathsf{s}_{\la}(X^*_K)$ subject to the covariance constraint.

It takes $\frac{t(t+1)}{2}$ constraints to preserve the covariance matrix and one constraint to preserve $s_\la(\Xv|V)$. Hence by Bunt-Carathedory\rq{}s theorem\footnote{We need to use Bunt's extension\cite{bun34} of Caratheodory's theorem as we no longer have compactness of the set required for the usually referred extension due to Fenchel. We can also use vanilla Caratheordory at the expense of one extra cardinality.} we can find a pair of random variables $(V_*,\Xv_*)$ with $|V_*| \leq \frac{t(t+1)}{2} + 1$ such that $ {\rm V}_\la(K) = \mathsf{s}_\la(\Xv_*|V_*). $
\end{proof}

\subsection{Continuity in a pathwise sense on concave envelopes}

In this section we will establish the validity of Claim \ref{cl:max-T}. For this we need more tools and results from analysis.

\begin{claim}
\label{cl:ubs}
For $\la > 1$, there exists $C_\la$ such that $\mathsf{s}_{\la}(\Xv) \leq C_\la$.
\end{claim}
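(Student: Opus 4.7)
The plan is to use a standard information-theoretic manipulation to reduce the bound to an entropy of a fixed Gaussian, exploiting that $\lambda>1$ only to discard a nonpositive term. The whole argument relies just on chain rules, nonnegativity of mutual information, and the translation/conditioning-reduces-entropy identities; no EPI or perturbation is needed.

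First I would write $\mathsf{s}_\lambda(\Xv) = I(\Xv;\Yv_1) - \lambda I(\Xv;\Yv_2)$ and apply the chain rule in the form $I(\Xv;\Yv_1)\le I(\Xv;\Yv_1,\Yv_2) = I(\Xv;\Yv_2)+I(\Xv;\Yv_1|\Yv_2)$. Subtracting $\lambda I(\Xv;\Yv_2)$ from both sides gives
\[
\mathsf{s}_\lambda(\Xv) \;\le\; I(\Xv;\Yv_1|\Yv_2) \,-\,(\lambda-1)\,I(\Xv;\Yv_2) \;\le\; I(\Xv;\Yv_1|\Yv_2),
\]
where in the last step I use $\lambda>1$ together with $I(\Xv;\Yv_2)\ge 0$. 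Thus it suffices to produce a uniform upper bound on $I(\Xv;\Yv_1|\Yv_2)$.

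Next I would expand $I(\Xv;\Yv_1|\Yv_2)=h(\Yv_1|\Yv_2)-h(\Yv_1|\Xv,\Yv_2)$. Since $\Yv_1 = G_1\Xv+\Zv_1$ and $\Zv_1$ is independent of $(\Xv,\Zv_2)$, the second term is exactly $h(\Zv_1)$, a fixed constant. For the first term I use the translation trick: for any (possibly matrix-valued) $A$,
\[
h(\Yv_1|\Yv_2) \;=\; h(\Yv_1 - A\Yv_2\mid \Yv_2)\;\le\; h(\Yv_1-A\Yv_2).
\]
Choosing $A = G_1 G_2^{-1}$ (which is where invertibility of $G_2$ is used), the $\Xv$-dependence cancels and $\Yv_1 - A\Yv_2 = \Zv_1 - G_1 G_2^{-1}\Zv_2$, a zero-mean Gaussian vector with covariance $I + G_1 G_2^{-1}G_2^{-T}G_1^{T}$. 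Its differential entropy is a finite constant depending only on the channel gains.

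Combining these, one obtains
\[
\mathsf{s}_\lambda(\Xv)\;\le\; h\!\left(\Zv_1 - G_1 G_2^{-1}\Zv_2\right) - h(\Zv_1)\;=:\;C_\lambda,
\]
which is independent of the distribution of $\Xv$. There is no real obstacle here: the only subtlety is the choice of the cancelling matrix $A=G_1G_2^{-1}$, and the fact that the inequality does not really use the strictness $\lambda>1$ beyond discarding the nonpositive term $-(\lambda-1)I(\Xv;\Yv_2)$. The same $C_\lambda$ in fact works for all $\lambda\ge 1$, and if one later wants a bound that also handles $\lambda<1$ by symmetry, one repeats the argument with the roles of $\Yv_1$ and $\Yv_2$ swapped, using $A=G_2 G_1^{-1}$.
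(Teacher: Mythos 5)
Your proof is correct, and it takes a genuinely different and more elementary route than the paper's. The paper derives the bound as a corollary of Theorem \ref{th:lvalt}: it first reduces to Gaussian inputs via $\mathsf{s}_\la(\Xv)\le S_\la(\Xv)\le {\rm V}_\la(K)\le \mathsf{s}_\la(\Xv^*_K)$ with $K=\E(\Xv\Xv^T)$, and then bounds $\log|I+G_1KG_1^T|-\la\log|I+G_2KG_2^T|$ uniformly over $K\succeq 0$ by a Courant--Fischer eigenvalue-interlacing argument. You instead stay entirely at the level of chain rules: $I(\Xv;\Yv_1)\le I(\Xv;\Yv_2)+I(\Xv;\Yv_1|\Yv_2)$, discard the nonpositive term $-(\la-1)I(\Xv;\Yv_2)$, and cancel the $\Xv$-dependence in $h(\Yv_1|\Yv_2)$ with the shift $A=G_1G_2^{-1}$, arriving at the explicit constant $h(\Zv_1-G_1G_2^{-1}\Zv_2)-h(\Zv_1)=\tfrac12\log\bigl|I+G_1G_2^{-1}G_2^{-T}G_1^T\bigr|$. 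Your version buys several things: it is self-contained and does not lean on the Gaussian-optimality machinery (which is pleasant, since Claim \ref{cl:ubs} sits in the appendix supporting Claim \ref{cl:max-T} while Theorem \ref{th:lvalt} is established in the main body through a long chain of results); it yields a single constant valid for every $\la\ge 1$; and it applies verbatim to inputs with no finite covariance, where the paper's reduction through ${\rm V}_\la(K)$ does not literally apply. The only point worth making explicit in your write-up is that $h(\Yv_1|\Xv,\Yv_2)=h(\Zv_1)$ and $h(\Yv_1|\Yv_2)$ are both finite (the latter is sandwiched between $h(\Zv_1)$ and your Gaussian bound), so the decomposition $I(\Xv;\Yv_1|\Yv_2)=h(\Yv_1|\Yv_2)-h(\Yv_1|\Xv,\Yv_2)$ is legitimate.
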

\begin{proof}
We know from Theorem \ref{th:lvalt} that if $\E(\Xv \Xv^T) \nsd K$ then
$$ \mathsf{s}_{\la}(\Xv) \leq S_\la(\Xv) \leq {\rm V}_\la(K) \leq \mathsf{s}_{\la}(\Xv^*_K) $$ for some $\Xv^*_K \sim \Nc(0, K'), K' \nsd K$. This implies that
$$ \sup_{\Xv} \mathsf{s}_{\la}(\Xv) \leq \sup_{K\psd 0:\Xv \sim  \Nc(0, K)} I(\Xv;\Yv_1) - \la I(\Xv;\Yv_2). $$
Let $\Sigma_i = (G_i^TG_i)^{-1}$, $i=1,2$. For $\Xv \sim \Nc(0, K)$, we have
\begin{align*}
2I(\Xv;\Yv_1) - 2\la I(\Xv;\Yv_2) 
&= \log |I+G_1KG_1^T| -\la \log|I+G_2KG_2^T| \\
&= \log |I+KG_1^TG_1| -\la \log |I+KG_2^TG_2| \\
&= -\log|\Sigma_1| +\la\log|\Sigma_2| +\log |\Sigma_1 +K| -\la \log |\Sigma_2 +K| .
\end{align*}
To bound the last two terms, we use the min-max theorem on eigenvalues: Let $\mu_j(A)$ be the $j$-th smallest eigenvalue of symmetric matrix $A\in\mathbb R^{t\times t}$, we have
\begin{align*}
\mu_j(A) = \min_{L_j} \max_{0\neq u\in L_j} \frac{u^TAu}{u^Tu} = \max_{L_{t+1-j}} \min_{0\neq u\in L_{t+1-j}} \frac{u^TAu}{u^Tu},
\end{align*}
where $L_j$ is a $j$ dimensional subspace of $\mathbb R^t$. From this theorem we have
$$\mu_j(K)+\mu_1(\Sigma)\leq \mu_j(K+\Sigma)\leq \mu_j(K) +\mu_t(\Sigma), \quad j=1,2,\ldots,t.$$
Hence
\begin{align*}
\log |\Sigma_1 +K| -\la \log |\Sigma_2 +K| 
&= \sum_{j=1}^t \log \frac{\mu_j(K+\Sigma_1)}{(\mu_j(K+\Sigma_2))^\la} \\
&\leq \sum_{j=1}^t \log \frac{\mu_j(K)+\mu_t(\Sigma_1)}{(\mu_j(K)+\mu_1(\Sigma_2))^{\la}} \\
&\leq t\cdot \log\frac{\mu^* +\mu_t(\Sigma_1)}{(\mu^*+\mu_1(\Sigma_2))^{\la}},
\end{align*}
where $\mu^* = \max\{0,\frac{1}{\la-1} (\mu_1(\Sigma_2)-\la\mu_t(\Sigma_1))\}$.
\end{proof}

For $m\in \mathbb{N}$ the set $\Ac_m:=\{\Xv: \E(\|\Xv\|^2) \leq m \}$ is a closed subset of the topology space. This is because if $\Xv_n \wc \Xv_*$ then $\E(\|\Xv_*\|^2) \leq \liminf_n \E(\|\Xv_n\|^2)$ (by definition of weak convergence and monotone convergence theorem by considering continuous and bounded functions $f_n(x)= \min\{x^2,n\}$.).

We defined $S_\la(\Xv) = \ce(\mathsf{s}_\la(\Xv)) = \sup_{V \to \Xv \to (\Yv_1,\Yv_2)}\mathsf{s}_\la(\Xv|V)$.
Taking $V=\Xv$ we observe that $S_\la(\Xv) \geq 0$. Define $\bar{\mathsf{s}}_\la(\Xv) = \max \{\mathsf{s}_\la(\Xv), 0 \}.$ Now note that $S_\la(\Xv) = \ce(\bar{\mathsf{s}}_\la(\Xv))$, since $S_\la(\Xv) \geq 0$.

Let $ \bar{\mathsf{s}}_\la^m(\Xv)$ be $\bar{\mathsf{s}}_\la(\Xv)$ restricted to $\Ac_m$.  Let $\mathsf{s}_\la^m(\Xv)$ be the continuous extension of $\bar{\mathsf{s}}_\la^m(\Xv) $ from $\Ac_m$ on to $\Pc$. This exists due to Tietze Extension Theorem (produced below).

\begin{theorem}[Tietze Extension Theorem]
\label{th:tietze}
Let $A$ be a closed subset in a normal topological space, then every continuous map $f:A\to \mathbb R$ can be extended to a continuous map on the whole space.
\end{theorem}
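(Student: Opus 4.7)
The plan is to follow the classical two-step strategy, since this is the standard Tietze Extension Theorem: first handle bounded $f$ by iteratively building a uniformly convergent series of partial extensions via Urysohn's lemma, and then reduce the unbounded case to the bounded one through a homeomorphism $\mathbb{R} \to (-1,1)$ combined with a second application of Urysohn. The crucial ambient fact I will lean on is Urysohn's lemma: in a normal space $X$, any two disjoint closed sets $E,F \subset X$ can be separated by a continuous $g: X \to [0,1]$ with $g\equiv 0$ on $E$ and $g\equiv 1$ on $F$. Note that since $A$ is closed in $X$, any set of the form $f^{-1}(C)$ for $C$ closed in $\mathbb{R}$ is closed in $A$, hence closed in $X$, so Urysohn applies directly to preimages of $f$.

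First I would treat the bounded case $f:A\to[-1,1]$. The key inductive lemma is: given continuous $h:A\to[-c,c]$, there is a continuous $g:X\to[-c/3,c/3]$ with $|h-g|\le 2c/3$ on $A$. To prove it, apply Urysohn to the disjoint closed sets $E=h^{-1}([-c,-c/3])$ and $F=h^{-1}([c/3,c])$ to get a map into $[-c/3,c/3]$ with the required boundary values, and verify the uniform bound by a case split on which third of $[-c,c]$ the value $h(x)$ lies in. Iterating with $h_0=f$, $h_{n+1}=h_n - g_n|_A$, each $g_n$ satisfies $\|g_n\|_\infty \le \tfrac{1}{3}(\tfrac{2}{3})^n$, so the series $F(x)=\sum_{n\ge 0} g_n(x)$ converges uniformly on $X$ by the Weierstrass $M$-test. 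Uniform convergence preserves continuity, the sum lies in $[-1,1]$ since $\sum \tfrac{1}{3}(\tfrac{2}{3})^n=1$, and by construction $F|_A = f$.

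Next I would lift to the unbounded case. Pick any homeomorphism $\varphi:\mathbb{R}\to(-1,1)$, e.g.\ $\varphi(t)=t/(1+|t|)$. Then $\varphi\circ f:A\to(-1,1)\subset[-1,1]$ is continuous and bounded, so by the bounded case it extends to a continuous $G:X\to[-1,1]$. The remaining issue, and the one place where a little care is needed, is that $G$ may take the forbidden values $\pm 1$ on the closed set $C:=G^{-1}(\{-1,1\})\subset X\setminus A$ (note $C\cap A=\emptyset$ since $\varphi\circ f$ takes values in the open interval). Apply Urysohn once more to the disjoint closed sets $A$ and $C$ to obtain a continuous $h:X\to[0,1]$ with $h\equiv 1$ on $A$ and $h\equiv 0$ on $C$. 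Then $\widetilde G := hG$ is continuous, equals $G=\varphi\circ f$ on $A$, and $|\widetilde G(x)|<1$ everywhere (trivially on $C$ where $\widetilde G=0$, and elsewhere since $|G(x)|<1$). Setting $F(x):=\varphi^{-1}(\widetilde G(x))$ yields the desired continuous extension $F:X\to\mathbb{R}$ with $F|_A=f$.

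The main technical obstacle, and the reason the theorem needs the normality hypothesis, is ensuring that the approximating $g_n$ in the bounded case can actually be constructed; this is precisely what Urysohn's lemma provides, and Urysohn in turn is equivalent to normality. The unbounded-case subtlety of ``trimming'' $G$ so that it avoids $\pm 1$ is handled cleanly by the second Urysohn application, and no further structure on $X$ (such as metrizability) is required.
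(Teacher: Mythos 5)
Your proof is correct: the inductive Urysohn approximation lemma, the geometric-series convergence via the Weierstrass $M$-test, and the second Urysohn application to trim the extension away from $\pm 1$ in the unbounded case are all carried out accurately, and the observation that closed subsets of the closed set $A$ are closed in $X$ is exactly the point where the hypothesis on $A$ enters. The paper itself imports this as a classical result and offers no proof, so there is nothing to compare against; what you have written is the standard textbook argument and it is sound.
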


  Consider a sequence $\Xv_n \in \Ac_m$ such that $\Xv_n \wc X_*$. Since the second moments are uniformly bounded, similar  arguments as in Claim \ref{cl:max-S} will imply that $\bar{\mathsf{s}}_\la^m(\Xv_n) \to \bar{\mathsf{s}}_\la^m(\Xv_*)$.  Further observe that the function $\mathsf{s}_\la^m(\Xv)$ is bounded and non-negative since $\bar{\mathsf{s}}_\la^m(\Xv)$ is bounded (above by  $C_\la$) and non-negative.

The following result follows from a recent result in \cite{prs09}. The convex hull of a function $f(\Xv)$ is the lower convex envelope, or equivalently $-\ce(-f(\Xv))$, where $\ce(\cdot)$ is the upper concave envelope used in this article.

\begin{theorem}
\label{th:new}
For the set of Borel probability measures on $\mathbb{R}^t$ endowed with the weak-convergence topology, the convex hull of an arbitrary bounded and continuous function is continuous.
\end{theorem}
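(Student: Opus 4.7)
The plan is to deduce the statement directly from the general result of \cite{prs09}, which establishes continuity of the convex hull for bounded continuous functions on suitably rich convex subsets of locally convex spaces. What remains is to verify that $\Pc(\mathbb{R}^t)$ with the weak-convergence topology fits the hypotheses: it is Polish (via the L\'evy--Prokhorov metric used throughout this appendix), it is a convex subset of the locally convex space of signed measures, and convex combinations are jointly continuous in both the coefficients and the measures.

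Once in that framework, I would represent the convex hull in mixing form,
$$\check f(\mu) \;=\; \inf\Big\{\int f(\nu)\, d\rho(\nu):\rho\in\Pc(\Pc(\mathbb{R}^t)),\ \int\nu\, d\rho(\nu) = \mu\Big\},$$
and prove continuity by establishing upper and lower semicontinuity at an arbitrary point $\mu$. For upper semicontinuity, given $\mu_n \wc \mu$, I would pick a near-optimal finite convex decomposition $\mu = \sum_i \lambda_i \mu_i$ with $\sum_i \lambda_i f(\mu_i) \le \check f(\mu) + \epsilon$ and perturb it into decompositions $\mu_n = \sum_i \lambda_i \mu_{n,i}$ with $\mu_{n,i} \wc \mu_i$ for each $i$; continuity of $f$ then yields $\limsup_n \check f(\mu_n) \leq \check f(\mu) + \epsilon$. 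For lower semicontinuity, I would take near-optimal mixing measures $\rho_n \in \Pc(\Pc(\mathbb{R}^t))$ for each $\mu_n$, observe that tightness of $\{\mu_n\}$ forces tightness of $\{\rho_n\}$ on the Polish space $\Pc(\mathbb{R}^t)$, and pass to a weak limit $\rho_*$ whose barycenter is $\mu$, using continuity of the barycenter map together with Theorem~\ref{th:pro}; boundedness of $f$ then transfers $\liminf \int f\, d\rho_n \geq \int f\, d\rho_* \geq \check f(\mu)$ to the desired inequality.

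The main obstacle is the upper semicontinuity step: after fixing the decomposition of $\mu$, the difference $\mu_n - \mu$ is only a signed measure, so adjusting the pieces $\mu_i \rightsquigarrow \mu_{n,i}$ to keep each $\mu_{n,i}$ a probability measure while forcing both $\sum_i \lambda_i \mu_{n,i} = \mu_n$ and $\mu_{n,i} \wc \mu_i$ simultaneously is delicate. This is the technical heart of the argument in \cite{prs09} and exploits the rich convex structure of $\Pc(\mathbb{R}^t)$, namely that any $\mu' \in \Pc$ close to $\mu$ can be written as a small convex perturbation of $\mu$, so that the required perturbation can be absorbed into a single piece without loss of positivity. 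The lower semicontinuity direction, by contrast, is standard compactness-plus-continuity once tightness is in hand.
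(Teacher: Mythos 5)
Your proposal takes essentially the same route as the paper: the paper's entire proof of Theorem \ref{th:new} is the one-line observation that it follows directly from Corollary 5 and Theorem 1 of \cite{prs09}, which is exactly the reduction you make in your first paragraph. The additional sketch you give of the internal barycenter/semicontinuity argument is a reasonable outline of what \cite{prs09} does, but it is not needed here since the result is being quoted rather than reproved.
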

\begin{proof}
This theorem is obtained directly from Corollary 5 and Theorem 1 in \cite{prs09}.
\end{proof}

An immediate corollary which follows from the fact that convex hull of $f(\Xv) \equiv -\ce(-f(\Xv))$ is the following:
\begin{corollary}
\label{co:new}
For the set of Borel probability measures on $\mathbb{R}^t$ endowed with the weak-convergence topology, the upper concave envelope of an arbitrary bounded and continuous function is continuous.
\end{corollary}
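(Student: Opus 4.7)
The plan is to derive Corollary \ref{co:new} by a direct reduction to Theorem \ref{th:new} using the elementary duality between the upper concave envelope and the convex hull (lower convex envelope) of a real-valued function. Concretely, for any function $g$ on the space of Borel probability measures, one has the pointwise identity $\ce(g) = -\mathrm{cv}(-g)$, where $\mathrm{cv}$ denotes the convex hull. This is simply because a concave majorant of $g$ is exactly the negative of a convex minorant of $-g$, and taking the smallest such majorant corresponds to taking the largest such minorant.

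First, I would take the given bounded continuous function $f$ on the space of Borel probability measures on $\mathbb{R}^t$ with the weak-convergence topology, and consider $-f$. Since $f$ is bounded and continuous, so is $-f$. Theorem \ref{th:new} then applies to $-f$, yielding that $\mathrm{cv}(-f)$ is continuous on this topological space.

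Second, I would invoke the identity $\ce(f) = -\mathrm{cv}(-f)$ noted above. Since negation is a continuous operation on $\mathbb{R}$, the composition $-\mathrm{cv}(-f)$ is continuous, which is precisely $\ce(f)$. This gives the claimed continuity of the upper concave envelope.

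There is essentially no obstacle here beyond verifying the duality identity, which follows immediately from the definitions: if $g_1$ is the smallest concave function with $g_1 \geq f$, then $-g_1$ is the largest convex function with $-g_1 \leq -f$, i.e.\ $-g_1 = \mathrm{cv}(-f)$. All the hard analytic content (boundedness and continuity preserved under the envelope operation on the space of Borel probability measures) is already packaged inside Theorem \ref{th:new}, which was cited from \cite{prs09}. Thus Corollary \ref{co:new} is truly a one-line consequence of Theorem \ref{th:new} together with the sign-flip identity.
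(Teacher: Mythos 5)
Your proposal is correct and is exactly the argument the paper intends: the paper derives Corollary \ref{co:new} from Theorem \ref{th:new} via the same identity $\ce(f) = -\mathrm{cv}(-f)$ applied to the bounded continuous function $-f$. Nothing further is needed.
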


Now define $S_\la^m(\Xv)$ to be concave envelope of $\mathsf{s}_\la^m(\Xv)$. From Corollary \ref{co:new} we have that $S_\la^m(\Xv)$ is continuous; Further since $\mathsf{s}_\la^m(\Xv)$ is bounded, and non-negative, so is $S_\la^m(\Xv)$. Continuity in particular implies that
\begin{equation} \text{if }\Xv_n \wc \Xv_*, ~\mbox{then}~ S_\la^m(\Xv_n) \to S_\la^m(\Xv_*).
\label{eq:eq1}
\end{equation}

\begin{claim} [Continuity in a pathwise sense]
\label{cl:main2}
If $\Xv_n \wc \Xv_*$ and $\E(\Xv_n \Xv_n^T), \E(\Xv_* \Xv_*^T) \nsd K$, then 
$ S_\la(\Xv_n) \to S_\la(\Xv_*)$.
\end{claim}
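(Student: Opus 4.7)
The plan is to show that $S_\la^m$ is a uniform $(M/m)C_\la$-approximation of $S_\la$ on the set $\Ac_M:=\{\Xv:\E\|\Xv\|^2\leq M\}$, and then transfer the continuity \eqref{eq:eq1} of $S_\la^m$ to $S_\la$ via a triangle-inequality argument. To make the approximation work I first pin down a convenient Tietze extension: since $\bar{\mathsf{s}}_\la^m$ maps $\Ac_m$ into $[0,C_\la]$ (non-negative by definition, bounded above by Claim \ref{cl:ubs}), the bounded form of Tietze's theorem produces a continuous extension $\mathsf{s}_\la^m:\Pc\to[0,C_\la]$; I work with such an extension throughout.

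The heart of the argument is the uniform approximation $|S_\la^m(\Xv)-S_\la(\Xv)|\leq (M/m)C_\la$ on $\Ac_M$. For any $V\to\Xv\to(\Yv_1,\Yv_2)$, set $\Xv_v=\Xv|V=v$ and split $A=\{v:\Xv_v\in\Ac_m\}$ with $B$ its complement. The identity $\sum_v p(v)\E\|\Xv_v\|^2=\E\|\Xv\|^2\leq M$ combined with Markov's inequality gives $p(V\in B)\leq M/m$. For the upper direction, $\mathsf{s}_\la^m=\bar{\mathsf{s}}_\la$ on $A$ and $\mathsf{s}_\la^m\leq C_\la$ on $B$, so
$\mathsf{s}_\la^m(\Xv|V)\leq\sum_{v\in A}p(v)\bar{\mathsf{s}}_\la(\Xv_v)+(M/m)C_\la\leq S_\la(\Xv)+(M/m)C_\la$,
the last step using that $\{p(v),\Xv_v\}_v$ is a legitimate decomposition of $\Xv$ for the concave-envelope $S_\la$ and that $\bar{\mathsf{s}}_\la\geq 0$; taking the sup over $V$ gives $S_\la^m(\Xv)\leq S_\la(\Xv)+(M/m)C_\la$. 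For the lower direction, fix $\eps>0$ and pick $V$ with $\bar{\mathsf{s}}_\la(\Xv|V)\geq S_\la(\Xv)-\eps$ (possible since $S_\la(\Xv)=\sup_V\bar{\mathsf{s}}_\la(\Xv|V)$); as $S_\la^m$ is the upper concave envelope of the non-negative $\mathsf{s}_\la^m$,
$S_\la^m(\Xv)\geq\sum_v p(v)\mathsf{s}_\la^m(\Xv_v)\geq\sum_{v\in A}p(v)\bar{\mathsf{s}}_\la(\Xv_v)\geq\bar{\mathsf{s}}_\la(\Xv|V)-(M/m)C_\la\geq S_\la(\Xv)-\eps-(M/m)C_\la$,
and letting $\eps\to 0$ yields the matching lower bound.

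With $M=\tr(K)$, both $\Xv_n$ and $\Xv_*$ lie in $\Ac_M$; for each fixed $m$, \eqref{eq:eq1} gives $S_\la^m(\Xv_n)\to S_\la^m(\Xv_*)$. Combined with the uniform approximation and the triangle inequality, $\limsup_n|S_\la(\Xv_n)-S_\la(\Xv_*)|\leq 2(M/m)C_\la$; sending $m\to\infty$ yields $S_\la(\Xv_n)\to S_\la(\Xv_*)$. The main obstacle is the uniform approximation in the second paragraph; the essential mechanism is that conditional pieces whose second moment exceeds $m$ are collectively rare (by Markov, probability at most $M/m$) and each contributes at most $C_\la$ to either envelope (by Claim \ref{cl:ubs}), so they can be absorbed into a single error term. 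Once this is in hand, the rest is a standard $\eps/3$-style continuity transfer through the continuous intermediate $S_\la^m$.
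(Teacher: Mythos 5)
Your proposal is correct and follows essentially the same route as the paper: approximate $S_\la$ by the continuous $S_\la^m$ uniformly over distributions with second moment at most $\tr(K)$, using Markov's inequality to show the conditional pieces outside $\Ac_m$ carry mass at most $\tr(K)/m$ and the global bound $C_\la$ to absorb their contribution, then interchange the limits in $n$ and $m$. Your explicit two-sided bound $|S_\la^m(\Xv)-S_\la(\Xv)|\leq (M/m)C_\la$ and your insistence on a $[0,C_\la]$-valued Tietze extension are just cleaner statements of the same estimates the paper carries out with $\eps/4$ bookkeeping.
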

\begin{proof}
The proof is essentially validating the  interchange of limits between $m,n$ in \eqref{eq:eq1}.
We show a uniform convergence (in $m$) of $S_\la^m(\Xv_n) \to S_\la(\Xv_n)$ and this suffices to justify the interchange as follows:
Given $\eps > 0$ choose $M_\eps>0$ such that $|S_\la(\Xv_n) - S_\la^m(\Xv_n)| < \eps ~ \forall n$ whenever $m > M_\eps$ (such an $M_\eps$ exists by uniform convergence). This implies that $\forall  m >  M_\eps$ we have
$$ S_\la(\Xv_n) \leq S_\la^m(\Xv_n) + \eps, \stackrel{n \to \infty}{\implies} \limsup_n S_\la(\Xv_n) \leq S_\la^m(\Xv_*) + \eps, \stackrel{m \to \infty}{\implies}
\limsup_n S_\la(\Xv_n) \leq S_\la(\Xv_*) + \eps.$$
Similarly $\forall  m >  M_\eps$
$$ S_\la(\Xv_n) \geq S_\la^m(\Xv_n) - \eps, \stackrel{n \to \infty}{\implies} \liminf_n S_\la(\Xv_n) \geq S_\la^m(\Xv_*) - \eps, \stackrel{m \to \infty}{\implies}
\liminf_n S_\la(\Xv_n) \geq S_\la(\Xv_*) - \eps.$$

Hence $ S_\la(\Xv_n) \to S_\la(\Xv_*)$ provided we show the uniform convergence (in $m$) of $S_\la^m(\Xv_n) \to S_\la(\Xv_n)$. Given $\eps > 0$ consider a $V$ such that $S_\la(\Xv_n) \leq s_\la(\Xv_n|V) + \frac{\eps}{4}$. Observe that $V$ induces a probability measure on the space of all probability measures. We now bound the induced probability measure on distributions
such that $\E(\|\Xv\|^2) \geq m$. Since $\E(\|\Xv_n\|^2) \leq tr(K)$, from Markov's inequality the mass of the induced measure on the probability measures such that $\E(\|\Xv\|^2) \geq m$ is at most $\frac{tr(K)}{m}$. Hence their contribution to $s_\la(\Xv_n|V)$ is at most $\frac{C_\la tr(K)}{m}$, where $C_\la$ is the global upper bound on $\mathsf{s}_\la(\Xv)$. Thus by taking $m$ large enough we can make this smaller than $\frac{\eps}{4}$. Hence 
$$S_\la^m(\Xv_n) \geq s_\la^m(\Xv_n|V) \geq s_\la(\Xv_n|V) -  \frac{\eps}{4} \geq  S_\la(\Xv_n) -\frac{\eps}{2}.$$
Similar argument (taking $V\rq{}$ such that $S_\la^m(\Xv_n) \leq s_\la^m(\Xv_n|V\rq{}) + \frac{\eps}{4}$)  also shows that $ S_\la(\Xv_n) \geq S_\la^m(\Xv_n) - \frac{\eps}{2}.$ Hence for all $m > \frac{4 C_\la tr(K)}{\eps}$ we have that $|S_\la(\Xv_n) - S_\la^m(\Xv_n)| \leq \eps $ uniformly in $n$ as desired.
\end{proof}

We now have the tools to prove Claim \ref{cl:max-T}.
\subsubsection*{Proof of Claim \ref{cl:max-T}}

\begin{proof}
From Claim \ref{cl:main2} and using similar arguments as in the proof of Claim \ref{cl:max-S} we see that ${\rm\hat V}_\lav(K)$ can be obtained as a convex combination of $\mathsf{t}_{\lav}(X^*_K)$ subject to the covariance constraint. It takes $\frac{t(t+1)}{2}$ constraints to preserve the covariance matrix and one constraint to preserve $\mathsf{t}_\lav(\Xv|W)$. Hence by Bunt-Carathedory\rq{}s theorem we can find a pair of random variables $(W_*,\Xv_*)$ with $|W_*| \leq \frac{t(t+1)}{2} + 1$ such that $ {\rm\hat V}_\lav(K) = \mathsf{t}_\lav(\Xv_*|W_*). $
\end{proof}

Indeed the proof technique we used carries over almost verbatim to establish this general lemma, which could be useful in other multi-terminal situations..
\begin{lemma}
Consider the space of all Borel probability distributions on $\mathbb{R}^t$ endowed with the topology induced by weak convergence.
If $f(\Xv)$ is a bounded real-valued function with the following property, \textrm{P}: for any sequence $\{\Xv_n\}$ that satisfies the two properties $(i)$ $\exists~ \kappa > 1, ~  s/t ~ \E(|\Xv_n|^\kappa) \leq B ~\forall n$ (i.e. sequence has a uniformly bounded $\kappa$-th moment) and $(ii)$ $\Xv_n \wc \Xv_*$, we have
$f(\Xv_n) \to f(\Xv_*)$; then the same properties holds for $F(\Xv)=\ce(f(\Xv))$, its upper concave envelope; i.e. $F(\Xv)$  is bounded and satisfies \textrm{P}.
\end{lemma}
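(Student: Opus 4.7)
The plan is to mimic, almost verbatim, the structure used to prove Claim \ref{cl:main2}, replacing every instance of "second moment" by "$\kappa$-th moment", and using Corollary \ref{co:new} (the Pratelli--Ripamonti--Santambrogio result) as the central tool.

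\emph{Step 1: Truncation and Tietze extension.} For each $m \in \mathbb{N}$, define $\Ac_m = \{\Xv : \E(|\Xv|^\kappa) \leq m\}$. Since $\Xv \mapsto \E(|\Xv|^\kappa)$ is lower semi-continuous under weak convergence (apply monotone convergence to the bounded continuous truncations $\min(|\cdot|^\kappa, N)$), each $\Ac_m$ is closed in the weak-convergence topology. Property P, applied with uniform bound $m$, says exactly that $f|_{\Ac_m}$ is sequentially continuous; since our space is metric (Levy--Prokhorov), sequential continuity coincides with continuity. Tietze's Extension Theorem (Theorem \ref{th:tietze}) then produces a continuous extension $f^m$ of $f|_{\Ac_m}$ to the whole space, which we may truncate (without losing continuity) to ensure $|f^m| \leq C$, where $C$ is the uniform bound on $f$.

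\emph{Step 2: Apply Corollary \ref{co:new}.} Since $f^m$ is bounded and continuous on the whole space, its upper concave envelope $F^m := \ce(f^m)$ is continuous on the whole space, and clearly $|F^m| \leq C$.

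\emph{Step 3: Uniform approximation of $F$ by $F^m$ on bounded-moment sets.} For any $\Xv$ with $\E(|\Xv|^\kappa) \leq B$, I claim $|F(\Xv) - F^m(\Xv)| \leq 2CB/m$. Given $\eps > 0$, pick $V \to \Xv$ with $F(\Xv) \leq \sum_v p(v) f(\Xv_v) + \eps$. Let $A = \{v : \E(|\Xv_v|^\kappa) \leq m\}$. Since $\E\bigl(\E(|\Xv_V|^\kappa)\bigr) = \E(|\Xv|^\kappa) \leq B$, Markov gives $\P(V \notin A) \leq B/m$. Because $f = f^m$ on $\Ac_m$ and $|f|,|f^m| \leq C$,
\[ \Bigl|\sum_v p(v) f(\Xv_v) - \sum_v p(v) f^m(\Xv_v)\Bigr| \leq 2C \P(V \notin A) \leq \frac{2CB}{m}, \]
so $F^m(\Xv) \geq F(\Xv) - \eps - 2CB/m$. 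The reverse direction is analogous, starting from a near-optimizer of $F^m(\Xv)$. Letting $\eps \downarrow 0$ gives the stated uniform bound, which holds for every $\Xv \in \Ac_B$ (in particular, uniformly across any sequence $\{\Xv_n\}$ with $\sup_n \E(|\Xv_n|^\kappa) \leq B$).

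\emph{Step 4: Interchange of limits.} Boundedness of $F$ by $C$ is immediate from $f \leq C$ and the definition of the concave envelope. For the continuity property, suppose $\Xv_n \wc \Xv_*$ with $\sup_n \E(|\Xv_n|^\kappa) \leq B$; by Fatou applied to the truncations $\min(|\cdot|^\kappa,N)$, $\E(|\Xv_*|^\kappa) \leq B$ too. Given $\eps > 0$, first pick $m$ large enough that the Step~3 bound is $<\eps/3$ uniformly for $\Xv = \Xv_n$ and for $\Xv = \Xv_*$; then use continuity of $F^m$ (Step~2) to get $|F^m(\Xv_n) - F^m(\Xv_*)| < \eps/3$ for $n$ large. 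Combining the three $\eps/3$ pieces via the same $\limsup/\liminf$ sandwich as in Claim \ref{cl:main2} yields $F(\Xv_n) \to F(\Xv_*)$, which is property P for $F$.

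The main conceptual obstacle is Step 1, specifically checking that the hypothesis (which is phrased only along bounded-moment sequences) actually yields continuity of $f|_{\Ac_m}$ in the relative topology, so that Tietze applies; this is where the metric nature of the Levy--Prokhorov space is essential and where the analogy to Claim \ref{cl:main2} is used literally. Every other step is a mechanical generalization, with $\kappa$ in place of $2$ in the Markov estimate.
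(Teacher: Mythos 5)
Your proof is correct and follows essentially the same route as the paper's: restrict $f$ to the closed set $\Ac_m$ of measures with $\kappa$-th moment at most $m$ (where property P gives continuity), extend by Tietze, invoke Corollary \ref{co:new} for continuity of $F^m$, and then interchange limits via a uniform (Markov-inequality) bound on $|F-F^m|$ over bounded-moment sets. Your Step 3 is in fact more explicit than the paper, which only sketches this uniform convergence by analogy with the argument in Claim \ref{cl:main2}.
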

\begin{proof}
The boundedness of $F(\Xv)$ is immediate. To show that $F(\Xv)$ satisfies property P, we use the same argument as earlier. Consider a sequence  $\{\Xv_n\}$ with a uniformly bounded $\kappa$-th moment  such that $\Xv_n \wc \Xv_*$. First, restrict $f$ to $\Ac_m$ (set of all distributions whose $\kappa$-th moment is upper bounded by $m$) and observe that this induces is a continuous (by property $P$ of $f$) and bounded function (on the topology induced by weak convergence)  from this closed set, $\Ac_m$, to reals. Now we extend this restricted function by the Tietze extension theorem to obtain $f^m(\Xv)$, a continuous and bounded function on the whole space. Then from Corollary \ref{co:new} we see that the concave envelope of $f^m(\Xv)$, denoted by $F^m(\Xv)$ is bounded and continuous. Finally one can establish a uniform convergence (in $n$) of $F^m(\Xv_n) \to F(\Xv_n)$ and hence conclude that $F(\Xv_n) \to F(\Xv_*) $.
\end{proof}

\section{Alternate path to Theorem \ref{th:lvalt}}
\label{se:altclt}
Below, we will give an elementary proof of Theorem \ref{th:lvalt} without invoking Corollary \ref{co:tworv}.

\begin{corollary}
\label{co:ind}
For every $l \in \mathbb{N}, n=2^{l},$ let $(V^n,\Xv_n) \sim \prod_{i=1}^n p_*(V_i, \Xv_i).$ Then $\Vt, \tilde{\Xv}_n$ achieves ${\rm V}_\la(K)$ where $\Vt = (V_1, V_2,.., V_n)$ and $\tilde{\Xv}_n|\big(\Vt = (v_1, v_2,..,v_n)\big)  \sim  \frac{1}{\sqrt{n}} \left( \Xv_{v_1} + \Xv_{v_2} + \cdots + \Xv_{v_n} \right).$ We take $\Xv_{v_1}, \Xv_{v_2}, \ldots, \Xv_{v_n}$ to be independent random variables here.
\end{corollary}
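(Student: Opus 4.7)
The plan is a straightforward induction on $l$. The base case $l=1$ (so $n=2$) is precisely the content of Claim \ref{cl:twolet-S}, which already shows that $\tilde{V} = (V_1,V_2)$ together with $\tilde{\Xv} = \tfrac{1}{\sqrt{2}}(\Xv_{v_1}+\Xv_{v_2})$ achieves ${\rm V}_\la(K)$.

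For the inductive step, suppose the statement holds for some $n = 2^l$, and take $2n$ i.i.d.\ copies $(V_i,\Xv_i)_{i=1}^{2n}$ of $p_*$. Split these copies into two independent blocks of size $n$. Applying the inductive hypothesis to each block separately, I obtain two independent pairs $(\tilde{V}^{(1)}_n, \tilde{\Xv}^{(1)}_n)$ and $(\tilde{V}^{(2)}_n, \tilde{\Xv}^{(2)}_n)$, each of which achieves ${\rm V}_\la(K)$.

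Now I apply Claim \ref{cl:twolet-S} once more, this time to the pair $(\tilde{V}^{(1)}_n, \tilde{\Xv}^{(1)}_n)$ and $(\tilde{V}^{(2)}_n, \tilde{\Xv}^{(2)}_n)$ in place of $(V_1,\Xv_1)$ and $(V_2,\Xv_2)$. Inspecting the proof of Claim \ref{cl:twolet-S} shows that the cardinality bound $|V|\leq \tfrac{t(t+1)}{2}+1$ is never actually used in the chain of (in)equalities; all that is needed is that each factor achieves ${\rm V}_\la(K)$, that the two factors are independent, and that the induced covariance still satisfies $\E(\tilde{\Xv}\tilde{\Xv}^T)\nsd K$. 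The last point is immediate since averaging preserves the covariance bound inductively. Thus the two-letter argument applies verbatim, and the pair
$$\Vt := (\tilde{V}^{(1)}_n, \tilde{V}^{(2)}_n), \qquad \tilde{\Xv} := \tfrac{1}{\sqrt{2}}\bigl(\tilde{\Xv}^{(1)}_n + \tilde{\Xv}^{(2)}_n\bigr)$$
achieves ${\rm V}_\la(K)$.

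It remains to note the algebraic identity
$$\tfrac{1}{\sqrt{2}}\bigl(\tilde{\Xv}^{(1)}_n + \tilde{\Xv}^{(2)}_n\bigr) = \tfrac{1}{\sqrt{2}}\Bigl(\tfrac{1}{\sqrt{n}}\sum_{i=1}^{n}\Xv_{v_i} + \tfrac{1}{\sqrt{n}}\sum_{i=n+1}^{2n}\Xv_{v_i}\Bigr) = \tfrac{1}{\sqrt{2n}}\sum_{i=1}^{2n}\Xv_{v_i} = \tilde{\Xv}_{2n},$$
and $\Vt = (V_1,\ldots,V_{2n})$, so the constructed pair is exactly $(\Vt,\tilde{\Xv}_{2n})$ for $2n=2^{l+1}$. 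This closes the induction. The only non-routine point is the observation that Claim \ref{cl:twolet-S} continues to hold for any maximizing distribution (not only the specific bounded-alphabet one guaranteed by Claim \ref{cl:max-S}); this is the main obstacle conceptually, but it is handled by a direct inspection of the proof.
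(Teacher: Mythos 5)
Your proof is correct and takes exactly the route the paper does---its entire proof of Corollary~\ref{co:ind} is the single sentence ``The proof follows from induction using Claim~\ref{cl:twolet-S}.'' Your explicit observation that the cardinality bound $|V|\leq \tfrac{t(t+1)}{2}+1$ is never used in the chain of (in)equalities of Claim~\ref{cl:twolet-S} (only that each factor achieves ${\rm V}_\la(K)$, that the factors are independent, and that the covariance constraint is preserved under the averaging) is precisely the point that makes the re-application to the enlarged auxiliary alphabet legitimate, and is a detail the paper leaves implicit.
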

\begin{proof}
The proof follows from induction using Claim \ref{cl:twolet-S}.
\end{proof}

Consider $(V^n,\Xv^n) \sim \prod_{i=1}^n p_*(V_i, \Xv_i),$ where $p_*(v,\xv)$ achieves $\textrm{V}_\la(K)$. Let $\Vc=\{1,..,m\}$ where $m \leq \frac{t(t+1)}{2} + 1$. Now consider $(V^n, \tilde{\Xv}_n)$ where $\tilde{\Xv}_n|\big(V^n = (v_1, v_2,..,v_n)\big)  \sim  \frac{1}{\sqrt{n}} \left( \Xv_{v_1} + \Xv_{v_2} + \cdots + \Xv_{v_n} \right).$ Again we take $\Xv_{v_1}, \Xv_{v_2}, \ldots, \Xv_{v_n}$ to be independent random variables.

 As is common in information theoretic arguments, we are going to consider typical sequences and atypical sequences. Let us define typical sequences in the following fashion:
$$ \caep(V) := \{v^n: \big| |\{i:v_i = v\}| - np_*(v) \big| \leq n \omega_n p_*(v), ~ \forall v \in [1:m]. \} $$
where $\omega_n$ is any sequence such that $\omega_n \to 0$ as $n \to \infty$ and $\omega_n \sqrt{n} \to \infty$ as $n \to \infty$. For instance $\omega_n = \frac{\log n}{\sqrt n}$.

Note that (using Chebychev\rq{}s inequality)
 $$\P( \big| |\{i:v_i = v\}| - np_*(v) \big| > n \omega_n p_*(v)) \leq \frac{1-p_*(v)}{p_*(v)\omega_n^2 n}.$$
Hence $\P(v^n \notin \caep(V)) \to 0$ as $n \to \infty$.

Consider any sequence of typical sequences  $v^n \in \caep(V)$. Consider a sequence of induced distributions $\hat{\Xv}_n \sim \tilde{\Xv}_n|v^n$.
\begin{claim}
\label{cl:clcon}
$ \hat{\Xv}_{n} \Rightarrow \Nc(0, \sum_{v=1}^m p_*(v) K_v )$

\end{claim}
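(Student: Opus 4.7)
The plan is to prove this via the multivariate Lindeberg--Feller central limit theorem, or equivalently by direct pointwise convergence of characteristic functions combined with L\'evy's continuity theorem. Recall $\hat{\Xv}_n = \frac{1}{\sqrt{n}}\sum_{i=1}^n \Xv_{v_i}$ where the $\Xv_{v_i}$ are independent, each centered with covariance $K_{v_i}$ and drawn from one of the $m$ fixed laws $\{p_*(\Xv|V=v)\}_{v=1}^m$. The typicality condition controls the empirical frequencies: if $n_v := |\{i:v_i = v\}|$, then $n_v/n = p_*(v)(1 + O(\omega_n)) \to p_*(v)$ as $n \to \infty$.

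First I would compute the characteristic function. By independence of the summands,
\[
\Phi_{\hat{\Xv}_n}(\tv) \;=\; \prod_{i=1}^n \Phi_{\Xv_{v_i}}\!\bigl(\tv/\sqrt{n}\bigr) \;=\; \prod_{v=1}^m \Phi_{\Xv_v}\!\bigl(\tv/\sqrt{n}\bigr)^{n_v}.
\]
Since each $\Xv_v$ is centered with finite second moment $K_v$, Taylor expansion of the characteristic function around $0$ gives
\[
\Phi_{\Xv_v}(\tv/\sqrt{n}) \;=\; 1 - \tfrac{1}{2n}\,\tv^T K_v \tv + o(1/n),
\]
uniformly in $v \in \{1,\dots,m\}$ since $m$ is finite. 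Taking logarithms and using $n_v/n \to p_*(v)$ from typicality,
\[
\log \Phi_{\hat{\Xv}_n}(\tv) \;=\; \sum_{v=1}^m n_v \log\!\bigl(1 - \tfrac{1}{2n}\tv^T K_v \tv + o(1/n)\bigr) \;\longrightarrow\; -\tfrac{1}{2}\,\tv^T \!\Bigl(\sum_{v=1}^m p_*(v) K_v\Bigr)\tv,
\]
which is precisely the log-characteristic function of $\Nc\bigl(0, \sum_v p_*(v)K_v\bigr)$. L\'evy's continuity theorem then yields the desired weak convergence $\hat{\Xv}_n \Rightarrow \Nc(0, \sum_v p_*(v) K_v)$.

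The potential technical point is justifying the Taylor remainder uniformly; but since there are only finitely many distinct laws $\{p_*(\Xv|V=v)\}$ and each has a finite covariance $K_v$, a single bound on the second-order remainder in the characteristic function works across all $v$. Alternatively, one can verify Lindeberg's condition directly: for any $\eps > 0$,
\[
\frac{1}{n}\sum_{i=1}^n \E\bigl(\|\Xv_{v_i}\|^2 \mathbf{1}_{\{\|\Xv_{v_i}\| > \eps\sqrt{n}\}}\bigr) \;=\; \sum_{v=1}^m \frac{n_v}{n}\, \E\bigl(\|\Xv_v\|^2 \mathbf{1}_{\{\|\Xv_v\| > \eps\sqrt{n}\}}\bigr) \;\longrightarrow\; 0,
\]
by dominated convergence applied to each of the finitely many terms, so the Lindeberg--Feller CLT applies to yield the same conclusion. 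This is entirely standard once the finite-cardinality structure of $V$ and the typicality control on $n_v/n$ are in place, so I expect no serious obstacle beyond these routine steps.
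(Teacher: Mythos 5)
Your proposal is correct, and in fact contains two valid routes. Your second route---verifying the Lindeberg condition for the triangular array $\{\Xv_{v_i}/\sqrt{n}\}$ using the typicality control $n_v/n \to p_*(v)$ and then invoking the Lindeberg--Feller CLT---is essentially the paper's own proof; the only cosmetic difference is that the paper first projects onto an arbitrary unit direction $\cv$, checks the one-dimensional Lindeberg condition for $\cv^T\hat{\Xv}_n$, and then lifts to the vector statement via the Cram\'er--Wold device, whereas you cite a multivariate form of the theorem directly (with the norm-based Lindeberg condition $\frac{1}{n}\sum_i \E(\|\Xv_{v_i}\|^2\ind_{\{\|\Xv_{v_i}\|>\eps\sqrt{n}\}})\to 0$, which is a standard sufficient condition). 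Your primary route via characteristic functions is genuinely different and arguably more self-contained: grouping the product $\prod_{v}\Phi_{\Xv_v}(\tv/\sqrt n)^{n_v}$, Taylor-expanding each of the finitely many characteristic functions to second order, and passing to the limit with L\'evy's continuity theorem avoids citing the Lindeberg--Feller theorem altogether; the uniformity issue you flag is indeed harmless because there are only $m$ distinct laws, each centered with finite covariance $K_v$, so a single $o(1/n)$ bound on the remainders suffices, and $n_v\cdot o(1/n)\to 0$ since $n_v\le n$. What the paper's (and your second) approach buys is that the Lindeberg computation reuses exactly the quantities already in play ($A_n(v)$, the bounded $K_v$'s); what your characteristic-function approach buys is transparency about where the limiting covariance $\sum_v p_*(v)K_v$ comes from. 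Either argument is acceptable here.
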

\begin{proof}
For  given $v^n$, let $A_n(v) =  |\{i:v_i = v\}|$. We know that $A_n(v) \in np_*(v) (1 \pm w_n), \forall v$.  Consider a $\cv \in \mathbb{R}^{t}$ with $\|\cv\| = 1$. Let
$ \hat{\Xv}_{n,i}^\cv \sim \frac{1}{\sqrt{n}}\cv^T \cdot \Xv_{v_i}$ and $\hat{\Xv}_{n,i}^\cv$ be independent random variables. Note that $\sum_{i=1}^n  \hat{\Xv}_{n,i}^\cv  \sim \cv^T  \hat{\Xv}_{n}$.
\smallskip

Note that 
\begin{align*}
  \sum_{i=1}^n E( (\hat{\Xv}_{n,i}^\cv)^2) &= \frac{1}{n} \sum_{v} A_n(v) \cv^T K_v \cv \to \cv^T \left(\sum_v p_*(v) K_v\right) \cv.  \\
 \sum_{i=1}^n E( (\hat{\Xv}_{n,i}^\cv)^2; |\hat{\Xv}_{n,i}^c| > \eps_1) &= \frac{1}{n} \sum_{v} A_n(v)\E(\cv^T \Xv_v \Xv_v^T \cv ;  \cv^T \Xv_v \Xv_v^T \cv  \geq n \eps_1^2) \\
& \leq \sum_v p_*(v)(1 + \omega_n) \E(\cv^T X_v X_v^T \cv ;  \cv^T \Xv_v \Xv_v^T \cv  \geq n \eps_1^2) \to 0.
\end{align*}
In the last convergence we use that $K_v$\rq{}s are bounded, and hence $\cv^T \Xv_v$ has a bounded seconded moment.
Hence from Lindeberg-Feller CLT\footnote{We adopt the notation in Theorem (4.5), Chapter 2 in \cite{dur96}.} we have $\sum_{i=1}^n \hat{\Xv}_{n,i}^\cv \Rightarrow \Nc(0,\cv^T \sum_v p_*(v) K_v \cv)$. Hence $\hat{\Xv}_n \Rightarrow \Nc(0, \sum_v p_*(v) K_v )$ (Cramer-Wold device).
\end{proof}

The next claim shows a  uniform convergence of the conditional laws to the Gaussian.

\begin{claim}
\label{cl:uc}
Given any $\delta > 0$, there exists $N_0$ such that $\forall n > N_0$ we have for all $v^n \in \caep(V)$
$$ \mathsf{s}_\la(\tilde{\Xv}_n|v^n) - \mathsf{s}_\la(\Xv^*) \leq \delta, $$
where $\Xv^* \sim \Nc(0, \sum_v p_*(v) K_v).$
\end{claim}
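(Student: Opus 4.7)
The plan is to establish the uniform convergence by a compactness/contradiction argument, piggybacking on the pointwise weak convergence already proved in Claim \ref{cl:clcon} and on the continuity of output differential entropy under weak convergence with bounded second moments (Lemma \ref{le:gau}).

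First I would record the following uniformly-valid second-moment bound: for any typical $v^n\in\caep(V)$, writing $\hat{\Xv}_n\sim\tilde{\Xv}_n\mid v^n$, we have
$$ \E(\hat{\Xv}_n \hat{\Xv}_n^T) \;=\; \frac{1}{n}\sum_v A_n(v)\,K_v \;\preceq\; (1+\omega_n)\sum_v p_*(v)K_v, $$
which is bounded independently of the particular typical sequence chosen. Combined with the decomposition
$$ \mathsf{s}_\la(\tilde{\Xv}_n\mid v^n) \;=\; h(G_1\hat{\Xv}_n + \Zv_1) - \la\,h(G_2\hat{\Xv}_n + \Zv_2) - h(\Zv_1) + \la\,h(\Zv_2), $$
the whole question reduces to uniform convergence of the two output differential entropies to their Gaussian limits.

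Next I would argue by contradiction. Suppose the claim fails; then there exist $\delta_0 > 0$, a subsequence $n_k\to\infty$, and typical sequences $v^{n_k}\in\caep(V)$ (at block length $n_k$) with
$\mathsf{s}_\la(\tilde{\Xv}_{n_k}\mid v^{n_k}) - \mathsf{s}_\la(\Xv^*) > \delta_0$
for every $k$. Since $\{v^{n_k}\}$ is still a sequence of typical sequences, Claim \ref{cl:clcon} yields $\hat{\Xv}_{n_k}\wc \Xv^*$. The continuous mapping theorem gives $G_i\hat{\Xv}_{n_k}\wc G_i\Xv^*$ for $i=1,2$, and the uniform second-moment bound above satisfies the hypotheses of Lemma \ref{le:gau}, so $h(G_i\hat{\Xv}_{n_k}+\Zv_i)\to h(G_i\Xv^*+\Zv_i)$ for $i=1,2$. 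Substituting into the decomposition above gives $\mathsf{s}_\la(\tilde{\Xv}_{n_k}\mid v^{n_k})\to \mathsf{s}_\la(\Xv^*)$, contradicting the existence of the gap $\delta_0$.

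The hard part is making sure that Claim \ref{cl:clcon} really applies to an adversarially chosen subsequence of typical sequences and not just to one sequence fixed in advance. Inspecting its proof, however, the Lindeberg--Feller calculation uses only the frequency bounds $A_n(v)\in np_*(v)(1\pm\omega_n)$ afforded by typicality and the boundedness of the matrices $K_v$; no further structure of the sequence is used, so the weak-convergence conclusion transfers unchanged to any typical subsequence. A minor sanity check is that Lemma \ref{le:gau} is stated for $\Yv=\Xv+\Zv$ with $\Zv\sim\Nc(0,I)$, whereas we need it for $G_i\hat{\Xv}_{n_k}+\Zv_i$; but weak convergence and the second-moment bound are both preserved under the linear map $G_i$ (giving covariance $\preceq G_i K G_i^T$), so its hypotheses still hold.
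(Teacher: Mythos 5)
Your proposal is correct and follows essentially the same route as the paper: a proof by contradiction that extracts a subsequence of typical sequences violating the bound, invokes Claim \ref{cl:clcon} for weak convergence of the conditional laws to $\Xv^*$, and then Lemma \ref{le:gau} for convergence of $\mathsf{s}_\la$. The extra diligence you supply --- the uniform second-moment bound, the reduction to output differential entropies, and the checks that Claim \ref{cl:clcon} tolerates an arbitrary sequence of typical sequences and that Lemma \ref{le:gau} survives the linear maps $G_i$ --- fills in details the paper leaves implicit but does not change the argument.
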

\begin{proof}
Assume not. Then we have a subsequence $v^{n_k}\in \mathcal{T}^{(n_k)}(V)$ and distributions $\tilde{\Xv}_{n_k}|v^{n_k}$ such that
$$ \mathsf{s}_\la(\tilde{\Xv}_{n_k}|v^{n_k}) > \mathsf{s}_\la(X^*) + \delta, \forall k. $$
However from Claim \ref{cl:clcon} we know that $\tilde{\Xv}_{n_k}|v^{n_k} \stackrel{w}{\Rightarrow} X^*$ and from Lemma \ref{le:gau} we have  $\mathsf{s}_\la(\tilde{\Xv}_{n_k}|v^{n_k}) \to \mathsf{s}_\la(X^*)$, a contradiction.
\end{proof}

\medskip

\begin{theorem}
\label{th:main}
There is a single Gaussian distribution (i.e. no mixture is required) that achieves ${\rm V}_\la(K)$.
\end{theorem}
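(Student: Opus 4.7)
The plan is to combine Corollary \ref{co:ind} with the uniform convergence in Claim \ref{cl:uc} to conclude that a single Gaussian achieves ${\rm V}_\la(K)$, avoiding any appeal to the functional-equation characterization of Gaussians (Corollary \ref{co:tworv}). Let $\Xv^* \sim \Nc(0, K')$ with $K' = \sum_{v=1}^m p_*(v) K_v$. Since $K' \nsd K$, $\Xv^*$ is admissible, so trivially $\mathsf{s}_\la(\Xv^*) \leq {\rm V}_\la(K)$. It remains to prove the reverse inequality.

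First I would apply Corollary \ref{co:ind}: for every $n = 2^l$, the pair $(V^n, \tilde{\Xv}_n)$ achieves ${\rm V}_\la(K)$, so
\[
  {\rm V}_\la(K) \;=\; \mathsf{s}_\la(\tilde{\Xv}_n \mid V^n) \;=\; \sum_{v^n} \P(V^n = v^n)\,\mathsf{s}_\la(\tilde{\Xv}_n \mid v^n).
\]
I would then split this sum according to whether $v^n \in \caep(V)$. For the typical part, Claim \ref{cl:uc} yields, for any $\delta > 0$ and all sufficiently large $n$, the uniform bound $\mathsf{s}_\la(\tilde{\Xv}_n \mid v^n) \leq \mathsf{s}_\la(\Xv^*) + \delta$. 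For the atypical part, Claim \ref{cl:ubs} gives the global upper bound $\mathsf{s}_\la(\tilde{\Xv}_n \mid v^n) \leq C_\la$, and by construction $\P(V^n \notin \caep(V)) \to 0$ as $n \to \infty$.

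Putting the two estimates together, for $n$ large enough,
\[
  {\rm V}_\la(K) \;\leq\; \bigl(\mathsf{s}_\la(\Xv^*) + \delta\bigr) + C_\la\,\P\!\bigl(V^n \notin \caep(V)\bigr).
\]
Letting $n \to \infty$ makes the second term vanish, and then letting $\delta \to 0$ gives ${\rm V}_\la(K) \leq \mathsf{s}_\la(\Xv^*)$. Combined with the reverse inequality noted above we conclude ${\rm V}_\la(K) = \mathsf{s}_\la(\Xv^*)$, which is attained by the single Gaussian $\Xv^* \sim \Nc(0, K')$.

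The only nontrivial obstacle has already been absorbed into the earlier claims: Claim \ref{cl:clcon} (Lindeberg--Feller CLT applied conditionally on a typical $v^n$) and Claim \ref{cl:uc} (upgrading pointwise convergence in distribution to uniform convergence of the functional $\mathsf{s}_\la$, using continuity of differential entropy under additive Gaussian noise from Lemma \ref{le:gau}). Granted these, the remaining argument is just the typical/atypical decomposition above, together with the boundedness of $\mathsf{s}_\la$ from Claim \ref{cl:ubs} to control the atypical contribution.
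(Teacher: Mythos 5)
Your argument is, step for step, the paper's own proof: invoke Corollary \ref{co:ind} to write ${\rm V}_\la(K)$ as the average of $\mathsf{s}_\la(\tilde{\Xv}_n\mid v^n)$, split into typical and atypical $v^n$, control the typical part by the uniform bound of Claim \ref{cl:uc}, kill the atypical part by a uniform upper bound on $\mathsf{s}_\la$ times a vanishing probability, and then let $n\to\infty$ and $\delta\to 0$. The one place you deviate is also the one place where your proposal has a real problem: to bound the atypical contribution you appeal to Claim \ref{cl:ubs}, but the paper's proof of Claim \ref{cl:ubs} begins ``We know from Theorem \ref{th:lvalt} that \dots'' --- i.e.\ it is derived \emph{from} the Gaussian optimality result that Theorem \ref{th:main} (the elementary, CLT-based route of Appendix \ref{se:altclt}) is explicitly meant to re-prove without invoking Corollary \ref{co:tworv}. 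Using Claim \ref{cl:ubs} here therefore makes the alternate path circular. The paper avoids this by a more elementary bound on the atypical terms: conditionally on any $v^n$ one has $\E(\hat{\Xv}\hat{\Xv}^T) \nsd \sum_{v=1}^m K_v$, hence $\mathsf{s}_\la(\hat{\Xv}) \leq I(\hat{\Xv};\Yv_1) \leq C$ with $C$ the Gaussian maximum-entropy bound for that covariance, which uses nothing downstream of Theorem \ref{th:lvalt}. Substituting that bound for your appeal to Claim \ref{cl:ubs} repairs the argument; everything else in your proposal matches the paper. (A minor cosmetic point: keep the factor $\P(V^n\in\caep(V))$ multiplying $\mathsf{s}_\la(\Xv^*)+\delta$ until you pass to the limit, since $\mathsf{s}_\la(\Xv^*)+\delta$ need not be nonnegative a priori; the conclusion is unaffected.)
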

\begin{proof}
We know from Corollary \ref{co:ind} that For every $l \in \mathbb{N}, n=2^{l},$ the pair $V^n, \tilde{\Xv}_n$ achieves $\textrm{V}_\la(K)$.
Hence 
$$ {\rm V}_\la(K) = \sum_{v^n} p_*(v^n) \mathsf{s}_\la(\tilde{\Xv}_n|v^n) = \sum_{v^n \in \caep(V)} p_*(v^n)\mathsf{s}_\la(\tilde{\Xv}_n|v^n)  + \sum_{v^n \notin \caep(V)} p_*(v^n)\mathsf{s}_\la(\tilde{\Xv}_n|v^n).$$
For a given $v^n$, let $\hat{\Xv} \sim \Xv_n|v^n$. Then note that $\E(\hat{\Xv}\hat{\Xv}^T) \nsd \sum_{v=1}^m K_v.$ Thus $\mathsf{s}_\la(\hat{\Xv}) \leq I(\hat{\Xv};\Yv_1) \leq C$ for some fixed constant $C$ that is independent of $v^n$. Thus using Claim \ref{cl:uc} we can upper bound ${\rm V}_\la(K)$ for large $n$ by
\begin{align*}
 {\rm V}_\la(K) & =  \sum_{v^n \in \caep(V)} p_*(v^n)\mathsf{s}_\la(\tilde{\Xv}_n|v^n)  + \sum_{v^n \notin \caep(V)} p_*(v^n)\mathsf{s}_\la(\tilde{\Xv}_n|v^n)\\
 & \leq \sum_{v^n \in \caep(V)} p_*(v^n)( \mathsf{s}_\la(\Xv^*) + \delta) + C  \sum_{v^n \notin \caep(V)} p_*(v^n) \\
 & = \P(v^n \in \caep) ( \mathsf{s}_\la(\Xv^*) + \delta) + C \P(v^n \notin \caep).
\end{align*}
Here $\Xv^* \sim \Nc(0, \sum_v p_*(v) K_v).$ Since $ \P(v^n \in \caep) \to 1$ as $ n \to \infty$ we get ${\rm V}_\la(K) \leq  \mathsf{s}_\la(\Xv^*) + \delta$; but $\delta > 0$ is arbitrary, hence
${\rm V}_\la(K) \leq \mathsf{s}_\la(\Xv^*)$. The other direction ${\rm V}_\la(K) \leq \mathsf{s}_\la(\Xv^*)$ is trivial from the definition of ${\rm V}_\la(K)$ and the fact that $\sum_{v} p_*(v) K_v \nsd K$.
\end{proof}

\end{document}